\title[Set-theoretic YB, reflection equations $\&$ quantum groups]{ Set-theoretic Yang-Baxter 
$\&$  reflection equations and quantum group symmetries}
\author[Anastasia Doikou and Agata Smoktunowicz]{Anastasia Doikou and Agata Smoktunowicz}
\address[A. Doikou] {Department of Mathematics, Heriot-Watt University,
Edinburgh EH14 4AS, and The Maxwell Institute for Mathematical Sciences, Edinburgh}
\email{A.doikou@hw.ac.uk}
\address[A. Smoktunowicz] {
School of Mathematics, The University of Edinburgh, The Kings Buildings, Mayfield Road, Edinburgh EH9 3JZ, 
and The Maxwell Institute for Mathematical Sciences, Edinburgh}
\email{A.Smoktunowicz@ed.ac.uk}
\newcommand{\hiddenpower}[2] { \ifnum \numexpr#2=1 #1 \else #1^#2 \fi }
\numberwithin{equation}{section}
\newcommand{\cal}{\mathcal}
\newcounter{diff_order}
\newcounter{diff_power}
\newcommand{\rawdiff}[3]
{
	\setcounter{diff_order}{0}
	\clist_map_inline:nn{#3}{\stepcounter{diff_order}}
	
	\frac{\hiddenpower{#1}{\thediff_order} #2}
	{
		\def\old_var{DefaultValue}
		\setcounter{diff_power}{0}
		
		\clist_map_inline:nn{#3}
		{
			\def\new_var{##1}
			\ifnum \thediff_power=0
				\stepcounter{diff_power}
			\else
				\tl_if_eq:NNTF \new_var \old_var
				{\stepcounter{diff_power}}
				{
					#1 \hiddenpower{\old_var}{\thediff_power}
					\setcounter{diff_power}{1}
				}
			\fi

			\def\old_var{##1}
		}
		
		#1 \hiddenpower{\old_var}{\thediff_power}
	}
}
\newlength{\bibitemsep}\setlength{\bibitemsep}{.2\baselineskip plus .05\baselineskip minus .05\baselineskip}
\newlength{\bibparskip}\setlength{\bibparskip}{0pt}
\let\oldthebibliography\thebibliography
\renewcommand\thebibliography[1]{%
  \oldthebibliography{#1}%
  \setlength{\parskip}{\bibitemsep}%
  \setlength{\itemsep}{\bibparskip}%
}
\newtheorem{thm}{Theorem}[section]
\newtheorem{lemma}[thm]{Lemma}
\newtheorem{cor}[thm]{Corollary}
\newtheorem{pro}[thm]{Proposition}
\newtheorem{defn}[thm]{Definition}
\newtheorem{rem}[thm]{Remark}
\begin{document}




%



\begin{abstract}
\noindent  Connections between set-theoretic Yang-Baxter and reflection equations and quantum integrable systems
are investigated. We show that set-theoretic $R$-matrices are expressed as  twists of known solutions.
We then focus on reflection and twisted algebras and we derive the associated defining algebra relations for $R$-matrices being  
Baxterized solutions of the $A$-type Hecke algebra ${\cal H}_N(q=1)$.
We show in the case of the reflection algebra that there exists a ``boundary'' finite sub-algebra for some special choice of  
``boundary'' elements of the $B$-type Hecke algebra ${\cal B}_N(q=1, Q)$. We also show the key proposition that the associated 
double row transfer matrix is  essentially 
 expressed  in terms of the elements of the $B$-type Hecke algebra. This  is one of the  fundamental results of this investigation
together with the proof of the duality between the boundary finite subalgebra  and the $B$-type Hecke algebra.
These are  universal statements that  largely generalize  previous relevant findings, and also allow 
the investigation of the symmetries of the double row transfer matrix.\\
\\
{\it 2010 MCS}:  16T20, 16T25, 17B37, 20G42, 82B23.
\end{abstract}

\maketitle

\date{}
\vskip 0.4in



\section{Introduction}

\noindent The Yang-Baxter equation and the $R$-matrix are central objects in the framework of quantum integrable systems.
 The Yang-Baxter equation  was  first introduced by Yang in \cite{Yang}
when investigating  many particle systems with  $\delta$-type  interactions
and later in the celebrated work of Baxter,  who solved the anisotropic Heisenberg magnet (XYZ model) \cite{Baxter}.
The solution of the model by Baxter was achieved by implementing the so-called Q-operator method, a sophisticated
approach leading to sets of functional relations known as T-Q relations, that provide information on the spectrum of the model. 
A different approach on the resolution of the spectrum of 1D statistical models is the Quantum
Inverse Scattering (QISM) method, an elegant algebraic technique \cite{Korepin}, that
led directly to the invention of quasitriangular Hopf algebras known as quantum groups, which then formally developed  
by Jimbo and Drinfeld independently \cite{Drinfeld, Jimbo}. 

Drinfield  \cite{Drin} also suggested  the idea of  set-theoretic solutions to the Yang-Baxter equation, and since then 
a lot of research activity has been devoted to this issue (see for instance \cite{Hienta},  \cite{ESS}).  
Set-theoretical solutions and Yang-Baxter maps have been investigated in the context of classical discrete integrable systems
related also to the notion of  Darboux-B{\"a}cklund transformation \cite{ABS, Veselov, Papag}. 
Links between the set-theoretical Yang-Baxter equation and geometric crystals \cite{Eti, BerKaz}, 
or soliton cellular automatons \cite{TakSat, HatKunTak}  have been also revealed.
Set-theoretical solutions of the Yang-Baxter equations have been investigated by employing
the theory of  braces and skew-braces. The theory of braces  was established by  
W. Rump who developed a structure called a brace  to describe 
all finite involutive  set-theoretic solutions of  the Yang-Baxter equation \cite{[25], [26]}.   
He showed that every brace provides a solution to the Yang-Baxter equation, and every 
non-degenerate, involutive set-theoretic  solution of the Yang-Baxter equation can be obtained from a brace, 
a structure that generalizes nilpotent rings. Skew-braces were then developed in \cite{GV}  
to describe non-involutive solutions. Key links between  set-theoretical solutions and quantum integrable systems and the associated 
quantum algebras were uncovered in \cite{DoiSmo}.

Following the works of Cherednik \cite{Cherednik} and Sklyanin \cite{Sklyanin}, 
who introduced and studied the reflection equation,  
much attention has been focused on the issue of incorporating boundary conditions to integrable
models. The boundary effects, controlled by the refection equation, shed new light on the bulk theories themselves, and also paved the way
to new mathematical concepts and physical applications. The set-theoretical reflection equation together with the first examples of solutions first 
appeared in \cite{CauZha}, while a more systematic study and a classification inspired by maps appearing in integrable discrete systems presented 
in \cite{CauCra}. Other solutions were also considered and used 
within the context of cellular automata \cite{KuOkYa}. In \cite{SmoVenWes, Katsa} methods coming from the theory of braces were used  
to produce families of new solutions to the reflection equation, and in \cite{DeCommer} skew braces were used to
produce reflections.

\vskip 0.15cm

\noindent {\bf The outline of the paper.} In this study  we consider set-theoretic solutions of
the Yang-Baxter and reflection equations coming from braces and we construct quantum
spin chains with open boundary conditions through Sklyanin's double row transfer matrix \cite{Sklyanin}. 
We should mention that typical well studied solutions of the Yang-Baxter equation are the Yangians, expressed as 
$R(\lambda)={\cal P}+\lambda I$, where ${\cal P}$ is the  flip map: $u \otimes v \mapsto  v\otimes u$. 
Here we consider more general classes of solutions of the Yang-Baxter equation that are expressed as $R(\lambda)={\cal P}+\lambda {\cal P}{\check r}$,
where ${\check r}$ is a map that can be obtained for instance from a brace. Such solutions are of particular interest, given that in general they have no semi-classical analogue and as such they are distinctly different from the known quantum group solutions. Let us  describe below in more detail what is achieved in each section:
\begin{itemize}
\item In section $2$ we present some basic background information.  
More precisely, in subsection 2.1 we review  some background on $R$-matrices associated to non-degenerate, involutive, 
set-theoretic solutions of the Yang-Baxter equation as well as set-theoretic solutions of the reflection equation and some information on braces.
Then in subsection 2.2  we provide a review on recent results on the connections of brace solutions of the Yang-Baxter equation and 
the corresponding quantum algebras and integrable quantum spin chains \cite{DoiSmo}. 
\item
In section 3 examples of set-theoretic $R$-matrices expressed as simple twists of known solutions 
via isomorhisms  within the finite set $\{1, \ldots, {\cal N}\}$ are presented.
Based on these solutions we construct explicitly the  
associated ``twisted'' co-products by employing the finite set isomorphisms. 
We then move on to show 
that the generic brace solution of the Yang-Baxter equation can be obtained
from the permutation operator via suitable Drinfeld twists \cite{Drinfeldtw}. Note that the properties of the brace structures are instrumental 
in deriving the form of the twist.
Certain generalizations regarding the $q$-deformed case are also discussed.

\item In section $4$ we focus on quadratic algebras, i.e. the reflection and twisted algebras \cite{Sklyanin, Olsha}.  

\begin{enumerate}
\item In subsections 4.1 and 4.2 we review some background information on reflection algebras and $B$-type Hecke algebras. 
More precisely, in subsection 4.1. 
we recall the links  between the refection algebras and $B$-type Hecke algebras and the Baxterization process, 
whereas in subsection 4.2 we discuss set-theoretic representations of the $B$-type Hecke algebra by essentially reviewing 
some recent results on solutions of the set-theoretic reflection equation \cite{SmoVenWes}. 

\item In subsection 4.3 we derive the associated defining algebra relations for Baxterized solutions of the $A$-type Hecke algebra ${\cal H}_{N}(q=1)$,
and we show in the case of the 
reflection algebra that there exist a finite sub-algebra for some special choice of  ``boundary'' elements of the $B$-type Hecke algebra,
which also turns out to be a symmetry of the double row transfer matrix for special boundary conditions as will be shown in subsection 5.2.
\end{enumerate}

\item In section $5$ we introduce open spin chains like systems and we focus on the study of the
associated quantum group symmetries.  We first review the construction of open quantum spin chains via the use of 
tensorial representations of the reflection algebras and  the derivation of the double row transfer matrix. 
The findings of each subsection are described below.

\begin{enumerate}

\item In subsection 5.1.we study the symmetries of the double row transfer matrix constructed from Baxterized solutions of the $B$-type Hecke algebra 
${\cal B}_{N}(q=1, Q)$ . We first  prove the key proposition of this study, i.e. we show 
that almost all the factors, but one, of the $\lambda$-series expansion  of the open transfer matrix can
be expressed in terms of the elements of the $B$-type Hecke algebra. Interestingly, when choosing special boundary conditions, 
the full open transfer matrix can be exclusively expressed in terms of elements of the $A$-type Hecke algebra.
Another fundamental result is that that all elements of the of the $B$ type Hecke algebra ${\cal B}_N(1=1, Q)$ commute  
with a finite sub algebra of the reflection algebra. 
This then leads to another important proposition regarding the symmetry of the associate double row transfer matrix.
These are  universal results that largely extends earlier partial findings  (see e.g. \cite{PasqSal, DoikouMurphy}),  
and are of particulate physical and mathematical significance.

\item In subsection 5.2 more symmetries of open transfer matrices associated to certain classes of set-theoretic
solutions of the Yang-Baxter equation  coming from braces are also discussed. The derivation of these symmetries is primarily based
on the properties of the brace structures.  
Some of these symmetries  generalize recent findings on periodic  transfer matrices \cite{DoiSmo}, while others are new.

\item In subsection 5.3 symmetries of the double row transfer matrix constructed from the special class of 
Lyubashenko's solutions are identified confirming  also some of the findings of section 3.
\end{enumerate}

\end{itemize}


%
%
%

\section{Preliminaries}

\noindent We present in this section some basic background information regarding set-theoretic solutions of the 
Yang-Baxter and reflection equations and braces as well as a brief review on the recent findings of \cite{DoiSmo}
on the links between set-theoretic solutions of the Yang-Baxter equation from braces and quantum algebras.

\subsection{The set-theoretic Yang-Baxter equation}
\noindent Let $X=\{{\mathrm x}_{1}, \ldots, {\mathrm x}_{{\cal N}}\}$ be a set and ${\check r}:X\times X\rightarrow X\times X$.
 Denote \[{\check r}(x,y)= \big (\sigma _{x}(y), \tau _{y}(x)\big ).\] 
We say that $r$ is non-degenerate if $\sigma _{x}$ 
and $\tau _{y}$ are bijective functions. Also, the solutions $(X, \check r)$  is involutive: $\check r ( \sigma _{x}(y), \tau _{y}(x)) = 
(x, y)$, ($\check r \check r (x, y) = (x, y)$). We focus on non-degenerate, involutive solutions of the set-theoretic braid equation:
\[({\check r}\times id_{X})(id_{X}\times {\check r})({\check r}\times id_{X})=(id_{X}\times {\check r})({\check r}
\times id_{X})(id_{X}\times {\check r}).\]


Let $V$ be the space of dimension equal to the cardinality of $X$, and with a slight abuse of notation, 
let  $\check r$ also  denote the $R$-matrix associated to the linearisation of ${\check r}$ on 
$V={\mathbb C }X$ (see \cite{LAA} for more details), i.e.
$\check r$  is the ${\cal N}^2 \times{\cal N}^2$ matrix: 
\begin{equation}
\check r = \sum_{x,y,z,w \in X} \check r(x,z|y,w) e_{x,z} \otimes e_{y, w}, 
\end{equation}
where
$e_{x, y}$ is the ${\cal N} \times {\cal N}$  matrix: $(e_{x,y})_{z,w} =\delta_{x,z}\delta_{y,w} $.
Then for the $\check r$-matrix related  to $(X,{\check r})$:  ${\check r}(x,z|y,w)=\delta_{z, \sigma_x(y)} \delta_{w, \tau_y(x)}$.
 Notice that the matrix $\check r:V\otimes V\rightarrow V\otimes V$ satisfies the (constant) Braid equation:
\[({\check r}\otimes I_{V})(I_{V}\otimes {\check r})({\check r}\otimes I_{V})=(I_{V}\otimes {\check r})({\check r}\otimes 
I_{V})(I_{V}\otimes {\check r}).\]
Notice also that ${\check r}^{2}=I_{V \otimes V}$ the identity matrix, because $\check r$ is involutive.

For set-theoretical solutions it is thus convenient to use the matrix notation:
\begin{equation}
{\check r}=\sum_{x, y\in X} e_{x, \sigma_x(y)}\otimes e_{y, \tau_y(x)}. \label{brace1}
\end{equation}
Define also, $r={\cal P}{\check r}$, where ${\cal P} = \sum_{x, y \in X} e_{x,y} \otimes e_{y,x}$ 
is the permutation operator,  consequently
${ r}=\sum_{x, y\in X} e_{y,\sigma_x(y)}\otimes e_{x, \tau_y(x)}.$
The  Yangian is a special case: $\check r(x,z|y,w)= \delta_{z,y}\delta_{w,x} $.

$ $

Let $ (X, \check r)$ be a non-degenerate set-theoretic solution to the Yang-Baxter equation. 
A map $k : X\to X$ is a reflection of $(X, \check r)$ if it satisfies 
\begin{equation*}
		\check r(k\times id_X) \check r (k\times id_X)=(k\times id_X) \check r(k\times id_X) \check r.
	\end{equation*}
We say that $k$ is a set-theoretic solution to the reflection equation. We also say that  $k$ is involutive if $k(k(x))  = x$.

Using the matrix notation introduced above then the reflection matrix $K$ is  an ${\cal N} \times {\cal N}$ matrix
represented as:
\begin{equation}
{\mathrm k} = \sum_{x\in X} e_{x, k(x)}
\end{equation}
and satisfies the constant reflection equation:
\begin{equation}
\check r(  {\mathrm k}  \otimes I_V)  \check r ( {\mathrm k}\otimes  I_V )=({\mathrm k}\otimes  I_V ) 
\check r({\mathrm k}\otimes  I_V ) \check r.
\end{equation}

$ $
Let us now recall the role of braces in the derivation of set-theoretic solutions of the Yang-Baxter equation.
In \cite{[25], [26]} Rump showed that every solution $(X, \check r)$ can be in a good way embedded in a brace.
\begin{defn}[Proposition $4$, \cite{[26]}]
A {\em left brace } is an abelian group $(A; +)$ together with a 
multiplication $\cdot $ such that the circle operation $a \circ  b =
 a\cdot b+a+b$ makes $A$ into a group, and $a\cdot (b+c)=a\cdot b+a\cdot c$.
\end{defn} 
 In many papers, an equivalent definition is used \cite{[6]} .
 The additive identity of a brace $A$ will be denoted by $0$ and the multiplicative identity by $1$. In every brace $0=1$. 
The same notation will be used for skew braces (in every skew brace $0=1$).
 


Throughout this paper we will use the following result,  which  is  implicit in \cite{[25], [26]} 
and explicit in Theorem 4.4 of \cite{[6]}.

\begin{thm}\label{Rump}(Rump's theorem, \cite{[25], [26], [6]}).  It is known that for an involutive, non degenerate solution  
of the braid equation there is always an underlying brace $(B, \circ , +)$, 
such that the maps $\sigma _{x}$ and $\tau _{y}$ come from this brace, and $X$ is a subset in this brace such that ${\check r}(X,X)
\subseteq (X,X)$ and ${\check r}(x,y)=(\sigma _{x}(y), \tau _{y}(x))$, where $\sigma _{x}(y)=x\circ y-x$, $\tau _{y}(x)=t\circ x-t$, 
where $t $ is the inverse of $\sigma _{x}(y)$ in the circle group $(B, \circ )$.  Moreover, we can assume that every element from $B$ 
belongs to the additive group $(X, +)$  generated by elements of $X$. In addition every solution of this type is a non-degenerate, involutive 
set-theoretic solution of the braid equation.
\end{thm}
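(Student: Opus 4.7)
The plan is to follow Rump's original strategy and build an explicit brace from the \emph{structure group} of $(X,\check r)$. Given a non-degenerate involutive solution, I would first form the structure group $G = G(X,\check r)$ generated by $X$ subject to the relations $x \circ y = \sigma_x(y) \circ \tau_y(x)$; the braid equation guarantees consistency of this presentation, and non-degeneracy ensures each generator is invertible. This supplies the circle group $(B,\circ)$ of the sought brace, with $X$ sitting inside as a generating subset.

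Next I would construct a bijective $1$-cocycle $\pi : G \to \mathbb Z^{(X)}$ to the free abelian group on $X$, determined by $\pi(x) = x$ for $x \in X$ and the cocycle rule $\pi(a \circ b) = \pi(a) + a \ast \pi(b)$, where $\ast$ denotes the $G$-action on $\mathbb Z^{(X)}$ obtained by linearly extending $x \ast y := \sigma_x(y)$. The technical core is checking that $\pi$ is a set-theoretic bijection; this is exactly where involutivity is decisive, as it forces the action of $G$ on $\mathbb Z^{(X)}$ to be well defined and pins down a unique pre-image for each element of the free abelian group. Once $\pi$ is in hand, I transport the additive structure from $\mathbb Z^{(X)}$ back to $G$, giving $B$ a second operation $+$ that makes $(B,+)$ abelian with $X$ as an additive generating set, so that every element of $B$ lies in the additive subgroup generated by $X$, as required. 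The brace axiom $a \circ (b + c) + a = a \circ b + a \circ c$ then falls directly out of the cocycle equation, and rewriting it yields $\sigma_a(b) = a \circ b - a$.

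With the brace structure established, I would recover $\tau$ from $\sigma$ by combining the structure-group relation $\tau_y(x) \circ \sigma_x(y) = x \circ y$ with the involutivity identity $\sigma_{\sigma_x(y)}(\tau_y(x)) = x$; setting $t := \sigma_x(y)^{-1}$ (inverse in $(B,\circ)$) and feeding this into the just-derived formula $\sigma_t(z) = t \circ z - t$ with $z = x$ gives $\tau_y(x) = t \circ x - t$ after a short manipulation inside $(B,+)$. For the converse assertion, starting from any brace $(B,\circ,+)$ I would define $\sigma_x(y) := x \circ y - x$ and $\tau_y(x) := t \circ x - t$ with $t = \sigma_x(y)^{-1}$, and then verify non-degeneracy, involutivity, and the braid equation by direct manipulation inside the brace, using only left-distributivity of $\circ$ over $+$ and invertibility in the circle group. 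The subset $X \subseteq B$ can be taken to be any $\check r$-invariant subset that additively generates $B$.

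The main obstacle is unquestionably the bijectivity of the $1$-cocycle $\pi$, i.e.\ the assertion that the structure group of an involutive non-degenerate solution carries a compatible free abelian group structure with $X$ as basis. Everything downstream --- the brace axioms, the formula for $\sigma_x$, the extraction of $\tau_y$ via the circle inverse, and the converse direction --- reduces to formal computation inside the brace once this cornerstone is in place.
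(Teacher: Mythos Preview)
The paper does not supply its own proof of this theorem: it is stated as a known result attributed to Rump and to Ced\'o--Jespers--Okni\'nski, with references \cite{[25], [26], [6]}, and the text immediately moves on without any argument. So there is no ``paper's own proof'' to compare against.

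That said, your outline is essentially the standard argument from the cited literature: form the structure group $G(X,\check r)$, build the bijective $1$-cocycle $\pi:G\to\mathbb Z^{(X)}$ via the linearly extended $\sigma$-action, transport the abelian structure back to obtain the brace, and read off the formulas $\sigma_x(y)=x\circ y-x$ and $\tau_y(x)=t\circ x-t$. You have correctly located the only genuinely hard step (bijectivity of $\pi$, equivalently that $G$ is a group of $I$-type), and the rest is indeed formal once that is established. One small wrinkle: your derivation of the $\tau$-formula relies on the identity $\sigma_{\sigma_x(y)}(\tau_y(x))=x$, which is a consequence of involutivity, but you should also check that applying $\sigma_t$ with $t=\sigma_x(y)^{-1}$ actually yields $\tau_y(x)$ rather than its image under some further map; the cleanest route is to use the structure-group relation $x\circ y=\sigma_x(y)\circ\tau_y(x)$ directly, which gives $\tau_y(x)=\sigma_x(y)^{-1}\circ x\circ y$ and then rewrite this in brace terms.
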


 We will call the brace $B$  an underlying brace of the solution $(X,{\check r})$, or a brace associated to the  
solution $(X,{\check r})$. We will also say that the solution $(X, \check r )$ is associated
 to brace $B$. Notice that this is also related  to the formula of
 set-theoretic solutions associated to the braided group (see \cite{ESS} and \cite{gateva}). 

The following remark was also discovered by Rump.
 
\begin{rem} \label{nilpotent}
 Let $(N, +, \cdot)$ be an associative ring which is a nilpotent ring. For $a,b\in N$ define 
 $\ a\circ b=a\cdot b+a+b$, then $\ (N, +, \circ )$ is a brace.
 \end{rem}


\subsection{Yang Baxter equation $\&$ quantum groups}

\noindent In this subsection we briefly review the main results reported in \cite{DoiSmo} on the various links between braces, 
representations of the $A$-type Hecke algebras, 
and quantum algebras.

Recall first the Yang-Baxter equation in the braid form ($\delta = \lambda_1 - \lambda_2$):
\begin{equation}
\check R_{12}(\delta)\ \check R_{23}(\lambda_1)\ \check R_{12}(\lambda_2) = \check R_{23}(\lambda_2)\
 \check R_{12}(\lambda_1)\ \check R_{23}(\delta) . \label{YBE1}
\end{equation}
We focus here on brace  solutions\footnote{All, finite, non-degenerate, involutive, 
set-theoretic solutions of the YBE (\ref{brace1}) are coming from braces (Theorem \ref{Rump}),
therefore we will call such solutions {\it  brace solutions}.}  of the YBE, given by (\ref{brace1}) and the Baxterized solutions: 
\begin{equation}
\check R(\lambda) = \lambda \check r + {\mathbb I}, \label{braid1}
\end{equation}
where ${\mathbb I}= I_X\otimes I_X$  and $I_{X}$ is the identity matrix of dimension equal to the cardinality of the set $X$.
Let also $R = {\cal P} \check R$, (recall the permutation operator ${\cal P} = \sum_{x, y\in X} e_{x,y}\otimes e_{y,x}$),  
then the following basic properties for $R$ matrices coming from braces were shown in \cite{DoiSmo}:\\

\noindent {\bf Basic Properties.} {\em  
The brace $R$-matrix satisfies the following fundamental properties:}
\begin{eqnarray}
&&  R_{12}(\lambda)\  R_{21}(-\lambda) = (-\lambda^2 +1) {\mathbb I}, 
~~~~~~~~~~~~~
\mbox{{\it Unitarity}} \label{u1}\\
&&  R_{12}^{t_1}(\lambda)\ R_{12}^{t_2}(-\lambda -{\cal N}) = 
\lambda(-\lambda -{\cal N}){\mathbb I}, 
~~~~~\mbox{{\it Crossing-unitarity}} \label{u2}\\
&& R_{12}^{t_1 t_2}(\lambda) = R_{21}(\lambda), \label{tt}\nonumber
\end{eqnarray}
{\it where $^{t_{1,2}}$ denotes transposition on the fist, second space respectively, 
and recall ${\cal N}$ is the same as the cardinality of the set  $X$.}

Let us also recall the connection of the brace representation with the $A$-type Hecke algebra.

\begin{defn} The $A$-type Hecke algebra ${\cal H}_N(q)$ is defined by 
the generators $g_l$, $l \in \{1,\ 2, \ldots, N-1 \}$ and the exchange relations:
\begin{eqnarray}
&& g_l\ g_{l+1} \ g_l = g_{l+1}\ g_l\ g_{l+1}, \label{h1} \\
&& \Big [ g_l,\ g_m \Big ]= 0, ~~~|l-m|>1 \label{h2} \\
&& \big ( g_l-q \big )\big (g_l+q^{-1} \big) =0. \label{h3} 
\end{eqnarray}
\end{defn}

\begin{rem} The brace solution $\check r$ (\ref{brace1})
is a representation of the $A$-type Hecke algebra for $q=1$.
Indeed, $\check r$ satisfies the braid relation and $\check  r^2 =1,$ 
which is shown by using the involution property. Also, the braid relation is  
satisfied by means of the brace properties (see also Theorem \ref{Rump} and \cite{DoiSmo}).
\end{rem}

\subsection*{The Quantum Algebra associated to braces}
\noindent 
Given a solution of the Yang-Baxter equation, the quantum algebra is defined via the fundamental relation \cite{FadTakRes}
(we have multiplied the familiar RTT relation with the permutation operator):
\begin{equation}
\check R_{12}(\lambda_1 -\lambda_2)\ L_1(\lambda_1)\ L_2(\lambda_2) = L_1(\lambda_2)\ L_2(\lambda_1)\ 
\check R_{12}(\lambda_1 -\lambda_2). \label{RTT}
\end{equation}
$\check R(\lambda) \in \mbox{End}({\mathbb C}^{{\cal N}} \otimes {\mathbb C}^{{\cal N}})$, $\ L(\lambda) \in 
\mbox{End}({\mathbb C}^{{\cal N}}) \otimes {\mathfrak A}$, where ${\mathfrak A}$\footnote{Notice that in $L$ 
in addition to the indices 1 and 2 in (\ref{RTT}) there is also an implicit ``quantum index`$n$ associated to ${\mathfrak A},$ 
which for now is omitted, i.e. one writes $L_{1n},\ L_{2n}$.} is the quantum algebra defined by (\ref{RTT}). 
We shall focus henceforth on solutions associated to braces only given by (\ref{braid1}), (\ref{brace1}). 
The defining relations of the coresponding 
quantum algebra were derived in \cite{DoiSmo}:\\

\noindent {\it The quantum algebra associated to the brace $R$ matrix  (\ref{braid1}), (\ref{brace1}) 
is defined by generators $L^{(m)}_{z,w},\ z, w \in X$, and defining relations }
\begin{eqnarray}
L_{z,w}^{(n)} L_{\hat z, \hat w}^{(m)} - L_{z,w}^{(m)} L_{\hat z, \hat w}^{(n)} &=& 
L^{(m)}_{z, \sigma_w(\hat w)} L^{(n+1)}_{\hat z,\tau_{\hat w}( w)}- L^{(m+1)}_{z, \sigma_w(\hat w)} 
L^{(n)}_{\hat z, \tau_{\hat w}( w)}\nonumber\\ &-& L^{(n+1)}_{ \sigma_z(\hat z),w} 
L^{(m)}_{\tau_{\hat z}( z), \hat w }+ L^{(n)}_{ \sigma_z(\hat z, )w}  L^{(m+1)}_{\tau_{\hat z}( z), \hat w}. \label{fund2}
\end{eqnarray}

The proof is based on the fundamental relation (\ref{RTT}) and the form of the brace $R$- matrix (for the detailed proof see \cite{DoiSmo}). Recall also that
in the index notation we define $\check R_{12} = \check R \otimes \mbox{id}_{\mathfrak A}$:
\begin{eqnarray}
&&  L_1(\lambda) = \sum_{z, w \in X} e_{z,w} \otimes I \otimes L_{z,w}(\lambda),\ \quad  L_2(\lambda)= \sum_{z, w \in X}
I  \otimes  e_{z,w}  \otimes L_{z,w}(\lambda).  \label{def}
\end{eqnarray} 
The exchange relations among the various generators of the affine algebra 
are derived below via (\ref{RTT}). Let us express $L$ as a formal power series expansion 
$L(\lambda) = \sum_{n=0}^{\infty} {L^{(n)} \over \lambda^n}$.
Substituting  expressions (\ref{braid1}), and the $\lambda^{-1}$ expansion in (\ref{RTT}) we obtain
the defining relations of the quantum algebra associated 
to a brace $R$-matrix (we focus on terms $\lambda_1^{-n} \lambda_2^{-m}$):
\begin{eqnarray}
&&  \check r_{12} L_{1}^{(n+1)} L_2^{(m)} -\check  r_{12} L_1^{(n)} L_2^{(m+1)} +  L_1^{(n)} L_2^{(m)} \nonumber\\
&&  = L_1^{(m)} L_{2}^{(n+1)} \check r_{12} -  L_1^{(m+1)} L_2^{(n)}\check r_{12} +  L_1^{(m)} L_2^{(n)}. \label{fund}
\end{eqnarray}
The latter relations immediately lead to the quantum algebra relations (\ref{fund2}), after recalling:
$
L_{1}^{(k)}=\sum_{x,y\in X}e_{x,y}\otimes I \otimes  L_{x,y}^{(k)},$ $ L_{2}^{(k)}=\sum_{x,y\in X} I \otimes 
e_{x,y}\otimes  L^{(k)}_{x,y}, \nonumber
$
and $\check r_{12} = \check r \otimes  \mbox{id}_{\mathfrak A },$ 
$L^{(k)}_{x,y} $ are the generators of the associated quantum algebra.
The quantum algebra is also equipped with a co-product $\Delta: {\mathfrak A} \to {\mathfrak A} \otimes {\mathfrak A}$ \cite{FadTakRes, Drinfeld}. Indeed, we define 
\begin{equation}
{\mathrm T}_{1;23}(\lambda)= L_{13}(\lambda) L_{12}(\lambda),\  
\end{equation}
which satisfies (\ref{RTT}) and is expressed as ${\mathrm T}_{1;23}(\lambda) = \sum_{x,y \in X} e_{x,y} \otimes \Delta(L_{x,y}(\lambda)).$

\begin{rem} In the special case $\check r ={\cal P}$ the ${\cal Y}(\mathfrak {gl}_{\cal N})$ algebra is recovered:
\begin{equation}
\Big [ L_{i,j}^{(n+1)},\ L_{k,l}^{(m)}\Big ] -\Big [ L_{i,j}^{(n)},\ L_{k,l}^{(m+1)}\Big ] = L_{k,j}^{(m)}L_{i,l}^{(n)}- L_{k,j}^{(n)}L_{i,l}^{(m)}. \label{fund2b}
\end{equation}
\end{rem}

The next natural step is  the classification of solutions of the fundamental relation (\ref{RTT}), 
for the brace quantum algebra.  A first step towards this goal  will be to examine  the fundamental object $L(\lambda)= L_0 + {1\over\lambda} L_1$, 
and search for finite and infinite representations of the respective elements. The fusion procedure \cite{Fusion} can be also exploited to yield 
higher dimensional representations of the associated quantum algebra.
The classification of $L$-operators  will allow the identification of  new classes of quantum  integrable systems, such as analogues of Toda chains or deformed boson models.
A first obvious example to consider is associated to Lyubashenko's solutions, which are discussed in what follows.
 This is a significant direction to pursue and will be systematically addressed elsewhere.

\section{Set-theoretic solutions as Drinfeld twists}
\noindent  In this section we first introduce some special cases of solutions of the braid equation that are immediately 
obtained from fundamental known solutions. We  show in particular that a special class of solutions known as 
Lyubashenko's solutions \cite{Drin} can 
be expressed as simple twists.
Although the construction is simple it has significant implications on the associated symmetries of the braid solutions. 
We then move on to show 
that the generic brace solution of the Yang-Baxter equation (\ref{brace1}) can be obtained
from the permutation operator via a suitable Drinfeld twist \cite{Drinfeldtw}, and we identify the specific form of the twist.
Moreover, inspired by the isotropic case we provide a  similar construction for the $q$-deformed analogue of Lyubashenko's solution.

Before we derive the Lyubashenko solution  as a suitable twist we first introduce a useful Lemma.

\begin{lemma} {\label{extra1}} Let $\check r': V \otimes V \to V \otimes V$ ($V$ is a finite dimensional space) 
satisfy the braid relation and $(\check r')^2 = I^{\otimes 2}$. Let also ${\mathbb  V}: V \to V$ be an invertible map,
such that $({\mathbb V} \otimes {\mathbb V})  \check r' = \check r'  ({\mathbb V} \otimes {\mathbb V})$. 
We define $\check r = ({\mathbb V} \otimes I)\check  r' ({\mathbb V}^{-1} \otimes I) =
(I\otimes {\mathbb V}^{-1})\check r' (I \otimes{\mathbb V}),$ then: 
\begin{enumerate}
\item ${\check r}^2 = I^{\otimes 2}$
\item $\check r$ satisfies the braid relation.
\end{enumerate}
\end{lemma}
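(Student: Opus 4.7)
The plan is to handle the two assertions separately, using only the intertwining hypothesis $({\mathbb V}\otimes{\mathbb V})\check r' = \check r'({\mathbb V}\otimes{\mathbb V})$ (which of course also yields the same identity with ${\mathbb V}^{-1}$ in place of ${\mathbb V}$). The first observation, used silently throughout, is that this intertwining relation is precisely what makes the two given expressions for $\check r$ coincide.

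For part (1), I would substitute the first form into the left factor of $\check r\cdot\check r$ and the second form into the right factor, which produces
\[\check r^{\,2} = ({\mathbb V}\otimes I)\,\check r'\,({\mathbb V}^{-1}\otimes{\mathbb V}^{-1})\,\check r'\,(I\otimes{\mathbb V}).\]
The middle factor ${\mathbb V}^{-1}\otimes{\mathbb V}^{-1}$ commutes with $\check r'$, so it slides past, the two copies of $\check r'$ then collapse via $(\check r')^{2}=I^{\otimes 2}$, and the remaining three ${\mathbb V}$-factors simplify to $I^{\otimes 2}$.

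For part (2), the key trick is to exploit the two available forms of $\check r$ in an asymmetric way on the triple tensor product. Writing ${\mathbb V}_i$ for the operator that places ${\mathbb V}$ in the $i$-th slot of $V^{\otimes 3}$, I would use
\[\check r_{12}={\mathbb V}_1\,\check r'_{12}\,{\mathbb V}_1^{-1},\qquad \check r_{23}={\mathbb V}_3^{-1}\,\check r'_{23}\,{\mathbb V}_3.\]
The virtue of this choice is that the twist factors ${\mathbb V}_1^{\pm 1}$ and ${\mathbb V}_3^{\pm 1}$ live in disjoint outer slots, so they commute with each other and each commutes with the one $\check r'$ that does not involve its slot. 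Consequently, after substitution into either side of the braid equation, all four twist factors can be slid to the extreme ends by a sequence of elementary commutations, leaving a bare product of three $\check r'$'s sandwiched between ${\mathbb V}_1{\mathbb V}_3^{-1}$ and ${\mathbb V}_3{\mathbb V}_1^{-1}$.

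The two sides of the purported braid relation then differ only in the ordering of the bare $\check r'$-product, and the braid relation for $\check r'$ itself finishes the argument. The only real care required is the bookkeeping: one must never attempt to slide ${\mathbb V}_i$ past $\check r'_{ij}$, which is not permitted by the hypotheses (only the diagonal ${\mathbb V}\otimes{\mathbb V}$ commutes with $\check r'$, not the one-sided ${\mathbb V}_i$). The asymmetric choice of forms above is tailored precisely to avoid this, so once the choice is made the computation is routine.
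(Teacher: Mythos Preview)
Your proposal is correct and follows essentially the same approach as the paper: for part (2) both arguments exploit the asymmetric choice of forms (the slot-$1$ twist for $\check r_{12}$ and the slot-$3$ twist for $\check r_{23}$) so that all ${\mathbb V}$-factors can be slid to the outside, reducing to the braid relation for $\check r'$; the paper merely runs the substitution in the opposite direction, expressing $\check r'$ in terms of $\check r$ and plugging into the known braid relation. For part (1) the paper is slightly more direct than your version: using the \emph{same} form of $\check r$ twice gives $\check r^2=({\mathbb V}\otimes I)(\check r')^2({\mathbb V}^{-1}\otimes I)=I^{\otimes 2}$ immediately, without invoking the commutation of ${\mathbb V}^{-1}\otimes{\mathbb V}^{-1}$ with $\check r'$.
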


\begin{proof}

\begin{enumerate}

\item 
$\check r^2 = ({\mathbb V} \otimes I) (\check  r')^2 ({\mathbb V}^{-1} \otimes I) = ({\mathbb V} \otimes I) I^{\otimes 2} ({\mathbb V}^{-1} \otimes I) = I^{\otimes 2}.$

\item It is given that $\check r'$ satisfies the braid relation:
\begin{equation}
(\check r' \otimes I) (I \otimes \check  r')( \check r' \otimes I) = (I \otimes \check  r')( \check r' \otimes I)(I \otimes \check  r'). \label{br22}
\end{equation}
We express: $ \check r' \otimes I =  ({\mathbb V}^{-1} \otimes I \otimes I) ( \check r \otimes I  ) ({\mathbb V} \otimes I \otimes I) $ and 
$I \otimes \check r' = (I \otimes I \otimes {\mathbb V}) (I \otimes \check r) (I \otimes I \otimes {\mathbb V}^{-1}).$
We then conclude for the left hand side of (\ref{br22}):
\begin{eqnarray}
LHS:\   ({\mathbb V}^{-1} \otimes I \otimes {\mathbb V}) (\check r\otimes I) (I \otimes \check  r)( \check r \otimes I)  ({\mathbb V} \otimes I \otimes {\mathbb V}^{-1}).\label{A}
\end{eqnarray}
Similarly, for the RHS of (\ref{br22}) :
\begin{equation}
  RHS:\ ({\mathbb V}^{-1} \otimes I \otimes {\mathbb V}) (I \otimes \check  r)( \check r \otimes I)(I \otimes \check  r) ({\mathbb V} \otimes I \otimes {\mathbb V}^{-1}). \label{B}
\end{equation}
From (\ref{A}), (\ref{B}) we conclude that $\check r$ also satisfies the braid relation.
\end{enumerate}
\end{proof}

\begin{pro}\label{prop1} Let $\tau,\ \sigma: X \to X$, $X= \{1,\ldots, {\cal N} \}$  be isomorphisms, 
such that $\sigma(\tau(x)) = \tau(\sigma(x)) = x$ and let
${\mathbb V}=\sum_{x \in X} e_{x, \tau(x)}$ and ${\mathbb V}^{-1} = \sum_{x \in X}  e_{ \tau(x),x}$.
Then any solution of the type (Lyubashenko's solution)
\begin{equation}
\check r=\sum_{x, y \in X} e_{x, \sigma(y)} \otimes e_{y, \tau(x)}, \label{special1}
\end{equation}
can be obtained from the permutation operator ${\cal P}= \sum_{x, y \in X} e_{x,y} \otimes e_{y,x}$ as
\begin{equation}
\check r = ({\mathbb V}\otimes I ){\cal P} ({\mathbb V}^{-1} \otimes I)= ( I \otimes {\mathbb V}^{-1} ) {\cal P} (I \otimes {\mathbb V} ) \label{special1b}
\end{equation}
\end{pro}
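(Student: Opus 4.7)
The plan is to establish the equality by a direct matrix computation, supplemented by one structural observation that cuts the work roughly in half.

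First, I would note that the permutation operator ${\cal P}$ commutes with every tensor-product operator of the form $A\otimes A$, because ${\cal P}(A\otimes B){\cal P}=B\otimes A$ for all $A,B$. In particular ${\cal P}({\mathbb V}\otimes {\mathbb V})=({\mathbb V}\otimes {\mathbb V}){\cal P}$, so Lemma \ref{extra1} applied to $\check r'={\cal P}$ immediately yields the coincidence of the two expressions $({\mathbb V}\otimes I){\cal P}({\mathbb V}^{-1}\otimes I)$ and $(I\otimes {\mathbb V}^{-1}){\cal P}(I\otimes {\mathbb V})$, and it also confirms, as a free bonus, that the resulting operator satisfies the braid relation and squares to the identity (consistency checks for the claim that the output is a genuine involutive braid solution). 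This reduces the proposition to verifying a single identity, namely that the first of these two expressions equals the $\check r$ in (\ref{special1}).

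Second, I would carry out that verification by expanding the product using bilinearity together with the elementary rule $e_{a,b}e_{c,d}=\delta_{b,c}\,e_{a,d}$. Multiplying out ${\mathbb V}\otimes I$ against ${\cal P}$ first, the Kronecker delta from $e_{a,\tau(a)}e_{x,y}$ collapses the sum to
\begin{equation*}
({\mathbb V}\otimes I){\cal P}=\sum_{a,y\in X} e_{a,y}\otimes e_{y,\tau(a)}.
\end{equation*}
Multiplying this on the right by ${\mathbb V}^{-1}\otimes I$ and using $e_{a,y}e_{\tau(b),b}=\delta_{y,\tau(b)}\,e_{a,b}$ collapses a second delta and yields
\begin{equation*}
({\mathbb V}\otimes I){\cal P}({\mathbb V}^{-1}\otimes I)=\sum_{a,b\in X} e_{a,b}\otimes e_{\tau(b),\tau(a)}.
\end{equation*}

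Finally, the key reindexing step: since $\sigma$ is a bijection I write $b=\sigma(y)$ and invoke the hypothesis $\tau\circ\sigma=\mathrm{id}_X$ to replace $\tau(b)$ by $y$; renaming $a=x$ transforms the sum into $\sum_{x,y\in X} e_{x,\sigma(y)}\otimes e_{y,\tau(x)}$, which is precisely $\check r$ as given by (\ref{special1}). An analogous computation works for $(I\otimes {\mathbb V}^{-1}){\cal P}(I\otimes {\mathbb V})$, although by the first paragraph it is logically unnecessary. There is no serious obstacle here: the proof is pure index bookkeeping, and the only substantive ingredient beyond the multiplication rule for matrix units is the mutual inverse property $\sigma\circ\tau=\tau\circ\sigma=\mathrm{id}_X$, which is exactly what is needed to convert the raw output $e_{\tau(b),\tau(a)}$ of the matrix product into the target labels $e_{y,\tau(x)}$ appearing in Lyubashenko's solution.
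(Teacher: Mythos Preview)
Your proof is correct and follows essentially the same approach as the paper: a direct matrix-unit computation using $e_{a,b}e_{c,d}=\delta_{b,c}e_{a,d}$, followed by a reindexing via $\sigma\circ\tau=\tau\circ\sigma=\mathrm{id}_X$, together with an appeal to Lemma~\ref{extra1} for the equality of the two twisted forms and the braid/involutivity properties. The only cosmetic difference is that the paper first rewrites ${\mathbb V}=\sum_z e_{\sigma(z),z}$ and ${\mathbb V}^{-1}=\sum_w e_{w,\sigma(w)}$ before multiplying, arriving at the intermediate form $\sum_{x,y} e_{\sigma(x),\sigma(y)}\otimes e_{y,x}$ rather than your $\sum_{a,b} e_{a,b}\otimes e_{\tau(b),\tau(a)}$, but these are the same sum after the obvious relabeling.
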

\begin{proof}
The proof relies on the definitions 
of ${\cal P},\ {\mathbb V},\ {\mathbb V}^{-1}$ and the fundamental property $e_{x,y} e_{z,w} = \delta_{y,z} e_{x,w}$:
\begin{eqnarray}
&& ({\mathbb V}\otimes I ){\cal P} ({\mathbb V}^{-1} \otimes I) = \nonumber \\
&& (\sum_{z\in X} e_{\sigma(z), z} \otimes I) (\sum_{x,y \in X} e_{x,y} \otimes 
e_{y,x}) (\sum_{w\in X} e_{w, \sigma(w)} \otimes I)= \nonumber\\
&&  \sum_{x,y \in X} e_{\sigma(x), \sigma(y)} \otimes e_{y,x}.\nonumber
\end{eqnarray}
The latter is indeed equal to (\ref{special1}) given that $\sigma(\tau(x)) = \tau (\sigma(x)) = x$, and due to Lemma \ref{extra1}, 
$\check r$ (\ref{special1}) satisfies the braid relation and $\check r^2 =I^{\otimes 2}$.
\end{proof}

Note that $r = {\cal P} \check r$,  and consequently $R = {\cal P} \check R$
 take a simple form for this class of solutions:
\begin{equation}
r= {\mathbb V}^{-1} \otimes {\mathbb V}\ \Rightarrow\  R(\lambda) = \lambda {\mathbb V}^{-1} \otimes {\mathbb V} + {\cal P}. \label{special2}
\end{equation}

\noindent {\bf Examples}:

1. $\sigma(y) =y+1,\ \tau(x) =x-1$,  (see also \cite{LAA}).

2. $\sigma(y) ={\cal N} +1-y,\ \tau(x) ={\cal N}+1-x$.

Note that in both examples above $x, y \in \{ 1, \ldots, {\cal N}\}$ and $\sigma, \tau$ in example 1 are defined $\mbox{mod}{\cal N}$.

Before we present our findings on the symmetry of Lyubashenko's $\check r$-matrix we first introduce a useful Lemma.

\begin{lemma}{\label{extra2}} Let ${\mathfrak l}_{x,y}$ be the generators of the $\mathfrak{gl}_{\cal N}$ algebra satisfying:
\begin{equation}
\Big [{\mathfrak l}_{x,y},  {\mathfrak l}_{z,w}\Big ] = \delta_{y,z}{\mathfrak l}_{x,w} - \delta_{x,w}{\mathfrak l}_{z,y}. \label{gl2}
\end{equation}
The  $\mathfrak{gl}_{\cal N}$ algebra is equipped with a coproduct $\Delta: \mathfrak{gl}_{\cal N} \to  \mathfrak{gl}_{\cal N}  \otimes  \mathfrak{gl}_{\cal N}$ such that
\begin{equation}
\Delta({\mathfrak l}_{x,y}) = {\mathfrak l}_{x,y} \otimes \mbox{id} +  \mbox{id}  \otimes {\mathfrak l}_{x,y}.
\end{equation}
The $N$-coproduct is obtained by iteration $ \Delta^{(N)} =  ( \Delta^{(N-1)}  \otimes \mbox{id})  \Delta= (\mbox{id} \otimes \Delta^{(N-1)}) \Delta $ and is given as $\Delta^{(N)}({\mathfrak l_{x,y}})=  \sum_{n=1}^N \mbox{id} \otimes \ldots \otimes\underbrace{{\mathfrak l}_{x,y}}_{n^{th}\  \mbox{position}} \otimes \ldots \otimes \mbox{id}$.

Let also ${\mathfrak F}^{(N)}: \mathfrak{gl}_{\cal N}^{\otimes N} \to \mathfrak{gl}_{\cal N}^{\otimes N}$ 
be an invertible element such that  ${\mathfrak F}^{(N)}\Delta^{(N)}({\mathfrak l}_{x,y}) = \Delta_T^{(N)}({\mathfrak l}_{x,y}){\mathfrak F}^{(N)},$ then $\Delta_T^{(N)}({\mathrm l}_{x,y})$ also satisfy the $\mathfrak{gl}_{\cal N}$ algebraic relations.
\end{lemma}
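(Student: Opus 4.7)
The plan is to exploit the fact that conjugation by an invertible element preserves every algebraic relation, so it suffices to observe that $\Delta^{(N)}$ already satisfies the $\mathfrak{gl}_{\cal N}$ defining relations and that $\Delta_T^{(N)}$ is nothing but a conjugate of $\Delta^{(N)}$ by ${\mathfrak F}^{(N)}$.

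First, I would recast the intertwining hypothesis as a similarity transformation: since ${\mathfrak F}^{(N)}$ is invertible, the relation ${\mathfrak F}^{(N)}\Delta^{(N)}({\mathfrak l}_{x,y}) = \Delta_T^{(N)}({\mathfrak l}_{x,y}){\mathfrak F}^{(N)}$ is equivalent to
\[
\Delta_T^{(N)}({\mathfrak l}_{x,y}) = {\mathfrak F}^{(N)}\, \Delta^{(N)}({\mathfrak l}_{x,y})\, ({\mathfrak F}^{(N)})^{-1}.
\]

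Second, I would verify that the untwisted iterated coproduct satisfies (\ref{gl2}). Since $\Delta$ is the standard primitive coproduct on the universal enveloping algebra of $\mathfrak{gl}_{\cal N}$, it is an algebra homomorphism, and an easy induction on $N$ using the recursive definition $\Delta^{(N)} = (\Delta^{(N-1)}\otimes \mbox{id})\Delta$ shows that $\Delta^{(N)}$ is also a homomorphism. Consequently $\Delta^{(N)}({\mathfrak l}_{x,y})$ obey exactly the $\mathfrak{gl}_{\cal N}$ relations, namely
\[
\Big[\Delta^{(N)}({\mathfrak l}_{x,y}),\ \Delta^{(N)}({\mathfrak l}_{z,w})\Big] = \delta_{y,z}\,\Delta^{(N)}({\mathfrak l}_{x,w}) - \delta_{x,w}\,\Delta^{(N)}({\mathfrak l}_{z,y}).
\]

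Third, I would conjugate the identity above on both sides by ${\mathfrak F}^{(N)}$. Because similarity preserves brackets, in the sense that $[{\mathfrak F}^{(N)} A ({\mathfrak F}^{(N)})^{-1},\, {\mathfrak F}^{(N)} B ({\mathfrak F}^{(N)})^{-1}] = {\mathfrak F}^{(N)}[A,B]({\mathfrak F}^{(N)})^{-1}$, the displayed commutation relation transfers verbatim to $\Delta_T^{(N)}$, yielding the required identity
\[
\Big[\Delta_T^{(N)}({\mathfrak l}_{x,y}),\ \Delta_T^{(N)}({\mathfrak l}_{z,w})\Big] = \delta_{y,z}\,\Delta_T^{(N)}({\mathfrak l}_{x,w}) - \delta_{x,w}\,\Delta_T^{(N)}({\mathfrak l}_{z,y}).
\]
There is no genuine obstacle in this argument; the only point to be slightly careful about is to confirm that $\Delta^{(N)}$ is a bona fide algebra homomorphism on the enveloping algebra (so that the relations really do descend to its image), after which the twisting step is a one-line computation.
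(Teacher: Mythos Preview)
Your proof is correct and follows essentially the same approach as the paper's: both observe that $\Delta^{(N)}$ satisfies the $\mathfrak{gl}_{\cal N}$ relations and then conjugate by ${\mathfrak F}^{(N)}$ (equivalently, multiply from the left by ${\mathfrak F}^{(N)}$ and from the right by $({\mathfrak F}^{(N)})^{-1}$) to transfer them to $\Delta_T^{(N)}$. Your version is slightly more explicit about why $\Delta^{(N)}$ is an algebra homomorphism, but the underlying argument is identical.
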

\begin{proof} The $N$-coproducts satisfy the $\mathfrak{gl}_{\cal N}$ relations (\ref{gl2}), i.e. 
$\Big [\Delta^{(N)}({\mathfrak l}_{x,y}),  \Delta^{(N)}({\mathfrak l}_{z,w})\Big ] = \delta_{y,z}\Delta^{(N)}({\mathfrak l}_{x,w}) - \delta_{x,w}\Delta^{(N)}({\mathfrak l}_{z,y}).$ By acting from the left with ${\mathfrak F}^{(N)}$ and with $({\mathfrak F}^{N})^{-1}$ from the right in the latter commutator we immediately obtain $\Big [\Delta_T^{(N)}({\mathfrak l}_{x,y}),  \Delta_T^{(N)}({\mathfrak l}_{z,w})\Big ] = \delta_{y,z}\Delta_T^{(N)}({\mathfrak l}_{x,w}) - \delta_{x,w}\Delta_T^{(N)}({\mathfrak l}_{z,y})$.
\end{proof}

\begin{cor}\label{prop2} {\it Let  $\rho: \mathfrak{gl}_{\cal N} \to \mbox{End}({\mathbb C}^{\cal N})$ be the fundamental representation of $\mathfrak{gl}_{\cal N},$  such that ${\mathfrak l}_{x,y} \mapsto e_{x,y},$ where recall $e_{x, y}$ are ${\cal N} \times {\cal N}$ matrices with elements 
$(e_{x,y})_{z,w}=\delta_{x,z} \delta_{y,w}.$ The special solution $\check r$ (\ref{special1}) is $\mathfrak{gl}_{\cal N}$ symmetric, i.e.}
\begin{equation}
\Big [ \check r,\ \Delta_i(e_{x,y}) \Big ] =0, ~~~~x,\ y \in X, \label{symm1}
\end{equation}
{\it where we define the ``twisted'' co-products ($i= 1, 2$): }
\begin{eqnarray}
&&\Delta_1(e_{x,y}) = e_{\sigma(x), \sigma(y)} \otimes I +I  \otimes e_{x,y}, \nonumber\\
&& \Delta_2(e_{x,y}) =  e_{x,y} \otimes  I + I \otimes   e_{\tau(x) ,\tau(y)} ,  \label{symm2}
\end{eqnarray}
($\Delta_1(e_{\tau(x), \tau(y)}) = \Delta_2(e_{x, y})$).
\end{cor}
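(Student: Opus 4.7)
The plan is to derive both symmetries by twisting the well-known $\mathfrak{gl}_{\cal N}$-invariance of the permutation operator ${\cal P}$ through the invertible map ${\mathbb V}$ supplied by Proposition \ref{prop1}. Recall that in the fundamental representation the standard coproduct is trivially a symmetry of ${\cal P}$:
\begin{equation*}
\bigl[{\cal P},\ e_{x,y}\otimes I + I\otimes e_{x,y}\bigr] = 0, \qquad x,y\in X,
\end{equation*}
because for any $\mathfrak{gl}_{\cal N}$-intertwiner, the flip map commutes with $\Delta^{(2)}(e_{x,y})$. Combining this with the two factorizations $\check r = ({\mathbb V}\otimes I){\cal P}({\mathbb V}^{-1}\otimes I) = (I\otimes {\mathbb V}^{-1}){\cal P}(I\otimes {\mathbb V})$ of Proposition \ref{prop1} should produce two different twisted symmetries, which will be identified with $\Delta_1$ and $\Delta_2$ respectively.

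The key computational step is the conjugation of the matrix units by ${\mathbb V}$. Using ${\mathbb V}=\sum_z e_{z,\tau(z)}$, ${\mathbb V}^{-1}=\sum_z e_{\tau(z),z}$, together with $\sigma\circ\tau = \tau\circ\sigma = \mathrm{id}$ and the basic rule $e_{a,b}e_{c,d}=\delta_{b,c}e_{a,d}$, I would first verify
\begin{equation*}
{\mathbb V}\, e_{x,y}\, {\mathbb V}^{-1} = e_{\sigma(x),\sigma(y)}, \qquad {\mathbb V}^{-1}\, e_{x,y}\, {\mathbb V} = e_{\tau(x),\tau(y)}.
\end{equation*}
This is a two-line index-chase and is the only genuinely concrete calculation in the proof. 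From these identities I immediately obtain
\begin{equation*}
({\mathbb V}\otimes I)\bigl(e_{x,y}\otimes I + I\otimes e_{x,y}\bigr)({\mathbb V}^{-1}\otimes I) = \Delta_1(e_{x,y}),
\end{equation*}
and analogously
\begin{equation*}
(I\otimes {\mathbb V}^{-1})\bigl(e_{x,y}\otimes I + I\otimes e_{x,y}\bigr)(I\otimes {\mathbb V}) = \Delta_2(e_{x,y}).
\end{equation*}

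With these two identifications in hand, the commutation $[\check r,\Delta_i(e_{x,y})]=0$ drops out by conjugating the known commutator $[{\cal P}, e_{x,y}\otimes I+I\otimes e_{x,y}]=0$ by ${\mathbb V}\otimes I$ in the first case and by $I\otimes {\mathbb V}^{-1}$ in the second. A short remark at the end can invoke Lemma \ref{extra2} with twist ${\mathfrak F}^{(2)}={\mathbb V}\otimes I$ (or $I\otimes {\mathbb V}^{-1}$) to confirm that $\Delta_1,\Delta_2$ are legitimate $\mathfrak{gl}_{\cal N}$-coproducts, and a one-line check using $\sigma\tau=\mathrm{id}$ verifies the displayed consistency relation $\Delta_1(e_{\tau(x),\tau(y)})=\Delta_2(e_{x,y})$. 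I do not expect any real obstacle here; the content of the corollary is essentially bookkeeping on top of Proposition \ref{prop1}, and the only thing to be careful about is keeping the direction of the twist ($\sigma$ versus $\tau$) consistent on the left- and right-hand tensor factors.
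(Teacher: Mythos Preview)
Your proposal is correct and follows essentially the same route as the paper: start from $[{\cal P},\Delta(e_{x,y})]=0$, conjugate by ${\mathbb V}\otimes I$ (respectively $I\otimes{\mathbb V}^{-1}$) using the two factorizations of $\check r$ from Proposition~\ref{prop1}, compute ${\mathbb V}e_{x,y}{\mathbb V}^{-1}=e_{\sigma(x),\sigma(y)}$ and ${\mathbb V}^{-1}e_{x,y}{\mathbb V}=e_{\tau(x),\tau(y)}$, and then invoke Lemma~\ref{extra2} with ${\cal F}^{(2)}_1={\mathbb V}\otimes I$, ${\cal F}^{(2)}_2=I\otimes{\mathbb V}^{-1}$ to conclude that the twisted coproducts still satisfy the $\mathfrak{gl}_{\cal N}$ relations. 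Your explicit check of $\Delta_1(e_{\tau(x),\tau(y)})=\Delta_2(e_{x,y})$ is a small addition beyond what the paper spells out, but otherwise the arguments coincide.
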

\begin{proof}
This can be shown using the form of the special class of solutions (\ref{special1}). 
The permutation operator is $\mathfrak{gl}_{\cal N}$ symmetric, i.e.
\begin{equation}
\Big [ {\cal P},\ \Delta(e_{x,y})\Big ] =0, \label{comm1}
\end{equation}
where the co-products $\Delta(e_{x,y})$  are defined in Lemma (\ref{extra2}) (${\mathfrak l}_{x,y} \mapsto e_{x,y}$).

Let ${\mathbb V}= \sum_{x \in X} e_{x, \tau(x)}$, then (\ref{symm1}) immediately follows from (\ref{comm1}) and (\ref{special1b}) after 
multiplying (\ref{comm1}) from the left and right with ${\mathbb V}\otimes I,\  {\mathbb V}^{-1}  \otimes I$  or  
$ I \otimes {\mathbb V}^{-1},\   I \otimes {\mathbb V}$ respectively.  $\Delta_{i}(e_{x,y})$ are then defined as
\begin{eqnarray} 
&& \Delta_1(e_{x,y}) = {\mathbb V} e_{x,y} {\mathbb V}^{-1} \otimes I + I \otimes e_{x,y}, \nonumber\\ 
&& \Delta_2(e_{x,y}) = e_{x,y} \otimes I +I \otimes {\mathbb V}^{-1}   e_{x,y}{\mathbb V} \label{sim1}
\end{eqnarray}
and explicitly given by (\ref{symm2}). Indeed, ${\mathbb V} e_{x,y} {\mathbb V}^{-1} = e_{\sigma(x), \sigma(y)}$ and  ${\mathbb V}^{-1} e_{x,y} {\mathbb V} = e_{\tau(x), \tau(y)}.$

According to Lemma \ref{extra2} $\Delta_i(e_{x, y})$ also satisfy the $\mathfrak{gl}_{\cal N}$ algebra relations, thus $\check r$ (\ref{special1}) is $\mathfrak{gl}_{\cal N}$ symmetric. In this particular case, as is clear from the computation above, two invertible linear maps are involved, ${\cal F}^{(2)}_i: \mbox{End}({\mathbb C}^{\cal N} \otimes {\mathbb C}^{\cal N}) \to  \mbox{End}({\mathbb C}^{\cal N} \otimes {\mathbb C}^{\cal N}), i \in \{ 1, 2\}$ such that ${\cal F}^{(2)}_1 := {\mathbb V} \otimes I$ and ${\cal F}^{(2)}_2 := I \otimes {\mathbb V}^{-1}$ and ${\cal F}^{(2)}_i\Delta(e_{x,y}) = \Delta_i(e_{x,y}) {\cal F}^{(2)}_i.$
\end{proof}

 By iteration one derives the $N$ co-products:
$ \Delta_1^{(N)} =  ( \Delta_1^{(N-1)}  \otimes \mbox{id})  \Delta_1$ and  $\Delta_2^{(N)} = (\mbox{id} \otimes \Delta_2^{(N-1)}) \Delta_2$,
which explicitly read as
\begin{eqnarray}
&& \Delta_1^{(N)}(e_{x,y}) =\sum_{n=1}^N I \otimes \ldots \otimes e_{\sigma^{N-n}(x),\sigma^{N-n}(y)}
\otimes \ldots \otimes I , \label{delta1}\\
&& \Delta_2^{(N)}(e_{x,y}) =\sum_{n=1}^N I \otimes \ldots \otimes e_{\tau^{n-1}(x),\tau^{n-1}(y)}
\otimes \ldots \otimes I, \label{delta2}
\end{eqnarray}
The above expressions can be written in a compact form as:\\ 
$\Delta^{(N)}_i(e_{x,y}) = {\cal F}_i^{(N)}\Delta^{(N)}(e_{x,y})({\cal F}_i^{(N)})^{-1},$ where\\  $\Delta^{(N)}(e_{x,y})=  \sum_{n=1}^N \mbox{id} \otimes \ldots \otimes\underbrace{e_{x,y}}_{n^{th}\  \mbox{position}} \otimes \ldots \otimes \mbox{id},$ and we define 
${\cal F}_{1}^{(N)}:= {\mathbb V}^{N-1} \otimes {\mathbb V}^{N-2} \otimes \ldots \otimes {\mathbb V} \otimes I$ and 
${\cal F}_2^{(N)} := I \otimes {\mathbb V}^{-1} \otimes {\mathbb V}^{-2}\otimes \ldots \otimes {\mathbb V}^{-(N-1)}$ (see also relevant findings in \cite{DoikouDr}).

It was shown in \cite{DoiSmo} that the periodic Hamiltonian for systems built with $R$-matrices associated 
to the Hecke algebra ${\cal H}_N(q=1)$ 
is expressed exclusively in terms of the $A$-type Hecke algebra elements. In the special case 
where $\check  r = {\cal P}$, i.e. the Yangian  the periodic transfer matrix is $\mathfrak{gl}_{\cal N}$  
symmetric.  However, if we focus on the more general class of Lyubashenko's solutions of  
Proposition \ref{prop1} and Corollary \ref{prop2} 
we conclude that because of the  existence of the term $\check r_{N1}$ (due to periodicity) \cite{DoiSmo}, and also due to the form of 
the modified co-products (\ref{delta1}), (\ref{delta2}), the periodic Hamiltonian and in general the periodic transfer matrix is not 
$\mathfrak{gl}_{\cal N}$ symmetric anymore. 
However, we shall be able to show  in section 5 that for a special choice of boundary 
conditions not only the corresponding Hamiltonian is $\mathfrak{gl}_{\cal N}$ symmetric, but also the double row transfer matrix.
This means that the open spin chain enjoys more symmetry compared to the periodic one similarly to the $q$-deformed case 
\cite{PasqSal, Kulishsym, DoikouNepomechie, Doikou2, CraDoi}.
It is therefore clear that from this point of view open spin chains are rather more natural objects to consider compared to the periodic ones.
In \cite{DoiSmo} a systematic investigation of symmetries of the periodic transfer matrix for generic representations of the $A$-type
Hecke algebra ${\cal H}_{N}(q=1)$ as well as for certain solutions of the Yang-Baxter equation coming from braces is presented.

With the following proposition we generalize the results on Lyubashenko's solutions. Specifically, we express the generic 
 brace $\check r$-matrix (\ref{brace1}) as a twist of the permutation operator. Drinfeld introduced \cite{Drinfeldtw} the``twisting'' (or deformation) 
of a (quasi) triangular Hopf algebra that produces yet another (quasi) triangular (quasi) Hopf algebra (see also relevant
\cite{Kulishtw,Maillet}). Let us briefly recall the notion of a twist. Let $\check R$ be the quantum group 
invariant matrix i.e. it {\it commutes} with
the the respective quantum algebra \cite{Jimbo, ChaPre}.  We are  focusing on the finite  algebra ${\mathfrak g},$  
specifically we are considering here the algebras $\mathfrak{gl}_{\cal N}$ or ${\mathfrak U}_q(\mathfrak{gl}_{\cal N})$, 
although via the evaluation homomorphism  one obtains the corresponding affine algebras, i.e. the Yangian ${\cal Y}(\mathfrak{gl}_{\cal N})$  or the affine ${\mathfrak U}_q(\widehat {\mathfrak{gl}_{\cal N}})$ respectively \cite{Jimbo, ChaPre}.
Consider the fundamental representation $\pi: {\mathfrak g } \to \mbox{End}({\mathbb C}^{{\cal N}})$, the  co-poducts
$\Delta: {\mathfrak g} \to {\mathfrak g} \otimes {\mathfrak g}$ and the $\check R$-matrix satisfy linear intertwining relations:
$(\pi \otimes \pi)\Delta(X)\ \check R = \check R\ (\pi \otimes \pi)\Delta(X)$ for $X \in {\mathfrak g}$.
Let also ${\cal F} \in \mbox{End}({\mathbb C}^{{\cal N}} \otimes {\mathbb C}^{\cal N})$, then the $\check R$ matrix can be 
``twisted'' as ${\cal F} \check R {\cal F}^{-1}$, where ${\cal F}$ also
satisfies a set of constraints dictated by the YBE. Given the linear intertwining relations and the twisted $\check R$-matrix, 
one derives the twisted co-products of the finite algebra as
 ${\cal F}\ (\pi \otimes \pi)\Delta(X)\  {\cal F}^{-1}$ (for a more detailed exposition on the notions of quasi-trinagular 
Hopf algebras and Drinfeld twists the interested reader is referred for instance to  \cite{ChaPre}).

\begin{pro} \label{Drin} Let $\check r = \sum_{x,y \in X} e_{x, \sigma_x(y)} \otimes e_{y, \tau_y(x)}$
 be the brace solution of the Yang-Baxter equation (see also (\ref{brace1}) and footnote 1 in page 6).  
Let also $V_{k}$, $k \in \{1, \ldots, {\cal N}^2 \}$ be the eigenvectors 
of the permutation operator ${\cal P} = \sum_{x,y\in X} e_{x,y} \otimes e_{y,x}$, and  $\hat V_{k}$, 
$k \in \{1, \ldots, {\cal N}^2 \}$ 
be the eigenvectors of the brace 
$\check r$ matrix.
Then the $\check r$ matrix can be expressed as  a Drinfeld twist, such that 
$\check r={\cal F} {\cal P} {\cal F}^{-1}$, where the twist ${\cal F}$ is explicitly expressed as 
${\cal F} = \sum_{k=1}^{{\cal N}^2} \hat V_k\  V_k^T$.
\end{pro}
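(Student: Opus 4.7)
The plan is a change-of-basis argument driven by the spectral data of the two involutions. First, I would observe that both $\check r$ and ${\cal P}$ are involutions on the ${\cal N}^2$-dimensional space $V\otimes V$: ${\cal P}^2 = I^{\otimes 2}$ holds by construction, while $\check r^2 = I^{\otimes 2}$ is the involutivity of the underlying set-theoretic solution, already recorded after (\ref{brace1}). Hence both matrices are diagonalisable and their spectra are contained in $\{+1,-1\}$, so the families $\{V_k\}$ and $\{\hat V_k\}$ indeed furnish bases of $V\otimes V$.

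The key step is to match the $\pm 1$ multiplicities, which by involutivity reduces to equality of traces. A direct computation gives $\mathrm{tr}({\cal P}) = {\cal N}$, whereas
\begin{equation*}
\mathrm{tr}(\check r) \;=\; \sum_{x,y\in X}\delta_{x,\sigma_x(y)}\,\delta_{y,\tau_y(x)}
\end{equation*}
counts the fixed points of $\check r$ on $X\times X$. Using Rump's theorem and non-degeneracy of $\sigma_x$, for each $x\in X$ there is a unique $y\in X$ with $\sigma_x(y)=x$. If $\sigma_x(y)=x$, then $\check r(x,y)=(x,\tau_y(x))$, and applying involutivity once more yields $\sigma_x(\tau_y(x))=x$; injectivity of $\sigma_x$ forces $\tau_y(x)=y$, so every such $y$ automatically produces a fixed point. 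Thus the number of fixed points is exactly ${\cal N}$, whence $\mathrm{tr}(\check r)=\mathrm{tr}({\cal P})$ and the $(+1)$- and $(-1)$-eigenspaces of $\check r$ and ${\cal P}$ have equal dimensions $\tfrac12{\cal N}({\cal N}+1)$ and $\tfrac12{\cal N}({\cal N}-1)$, respectively.

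With multiplicities matched, I would relabel the eigenvectors so that $V_k$ and $\hat V_k$ share the same eigenvalue $\lambda_k\in\{\pm 1\}$; since ${\cal P}$ is real symmetric, the $V_k$ can moreover be chosen orthonormal, so that $V_k^T V_j = \delta_{kj}$. The twist ${\cal F}=\sum_k \hat V_k V_k^T$ then acts as ${\cal F} V_j=\hat V_j$, is invertible (its image contains the basis $\{\hat V_k\}$), and its inverse sends $\hat V_j\mapsto V_j$. Verifying the conjugation on the basis is immediate:
\begin{equation*}
{\cal F}\,{\cal P}\,{\cal F}^{-1}\,\hat V_j \;=\; {\cal F}\,{\cal P}\,V_j \;=\; \lambda_j\,{\cal F}\,V_j \;=\; \lambda_j\,\hat V_j \;=\; \check r\,\hat V_j,
\end{equation*}
so $\check r={\cal F}\,{\cal P}\,{\cal F}^{-1}$.

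The main obstacle is the trace identity $\mathrm{tr}(\check r)={\cal N}$: although the end statement is a routine diagonalisation, it hinges on knowing that the involutive bijection $\check r$ has exactly $|X|$ diagonal fixed points on $X\times X$, which is where the brace hypothesis (through Rump's theorem) is genuinely used. Once this is in hand, the rest of the argument is standard linear algebra — choosing the eigenvector pairing consistent with the common spectrum, and exploiting orthonormality of the $V_k$ to rewrite the change-of-basis matrix in the compact form $\sum_k \hat V_k V_k^T$.
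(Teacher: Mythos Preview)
Your proof is correct and follows essentially the same change-of-basis strategy as the paper: both argue that $\check r$ and ${\cal P}$ are involutions with identical $\pm1$ multiplicities, hence similar, and then read off ${\cal F}=\sum_k \hat V_k V_k^T$ from ${\cal F}V_k=\hat V_k$ together with orthonormality of the $V_k$. The only notable difference is how the multiplicities are matched: the paper writes down explicit eigenvectors $\hat V_k=\tfrac{1}{\sqrt2}(\hat e_x\otimes\hat e_y\pm\hat e_{\sigma_x(y)}\otimes\hat e_{\tau_y(x)})$ and asserts the count, whereas your trace computation $\mathrm{tr}(\check r)={\cal N}$ (via the fixed-point argument using only non-degeneracy and involutivity --- Rump's theorem is not actually needed here) supplies precisely the detail that makes that count rigorous.
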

\begin{proof} We divide our proof in three parts:
\begin{enumerate}
\item First we diagonalize the permutation operator. Let $\hat e_j$ be the ${\cal N}$ dimensional column 
vectors with one at the $j^{th}$ position and zero elsewhere, then  the (normalized) 
eigenvectors of the permutation operator are ($x,\ y \in X$):
\begin{eqnarray}
&& V_{k} = {1\over \sqrt{2}} \big ( \hat e_x \otimes \hat e_y + \hat e_y \otimes \hat e_x \big),   
~~~~~~k \in \big \{1, \ldots, {{\cal N}^2 + {\cal N} \over 2}\big \}, \nonumber\\
&&  V_{k} = {1\over \sqrt{2}} \big ( \hat e_x \otimes \hat e_y - \hat e_y \otimes \hat e_x \big),   
~~~~~~k \in \big \{ {{\cal N}^2 + {\cal N} \over 2}+1, \ldots, {\cal N }^2 \big \}, ~~~~x\neq y. \nonumber
\end{eqnarray}
The first ${{\cal N}^2 + {\cal N} \over 2}$ eigenvectors  have the same eigenvalue $1$, while the
 rest ${{\cal N}^2 - {\cal N} \over 2}$ eigenvectors  have eigenvalue $-1$. Also it is easy to check that $V_k$ 
form an ortho-normal basis for the ${\cal N}^2$ dimensional space. Indeed, 
$V^T_k V_l = \delta_{kl}$ and 
$\sum_{k=1}^{{\cal N}^2} V_{k}V_{k}^T = I_{{\cal N}^2}$ ($^T$ denotes usual transposition).

\item Second we diagonalize the brace $\check r$-matrix. First we observe that 
\begin{eqnarray}
\check r\ e_{x} \otimes e_{y} = e_{\sigma_x(y)} \otimes e_{\tau_y(x)}, ~~~~\check r\ e_{\sigma_x(y)} 
\otimes e_{\tau_y(x)}=  e_{x} \otimes e_{y}. \nonumber
\end{eqnarray}
Then we find that the eigenvectors of the $\check r$ matrix are
\begin{eqnarray}
&& \hat V_{k} = {1\over \sqrt{2}} \big ( \hat e_x \otimes \hat e_y + \hat e_{\sigma_x(y)} \otimes \hat e_{\tau_y(x)} \big),   
~~~~~~k \in \big \{1, \ldots, {{\cal N}^2 + {\cal N} \over 2}\big \}, \nonumber\\
&&  \hat V_{k} = {1\over \sqrt{2}} \big ( \hat e_x \otimes \hat e_y - \hat e_{\sigma_x(y)} \otimes \hat e_{\tau_y(x)}\big),   
~~~~(x,y )\neq (\sigma_x(y), \tau_y(x)), \nonumber \\
&& k \in \big \{ {{\cal N}^2 + {\cal N} \over 2}+1, \ldots, {\cal N }^2 \big \}. \nonumber
\end{eqnarray}
As in the case of the permutation operator the $\check r$ matrix has the same eigenvalues
$1$ and $-1$ and the same multiplicities, 
${{\cal N}^2 + {\cal N} \over 2}$ and  ${{\cal N}^2 - {\cal N} \over 2}$ respectively. 
Hence, the two matrices are similar, i.e. there exists some ${\cal F} \in \mbox{End}({\mathbb C}^{{\cal N}} \otimes 
{\mathbb C}^{\cal N})$ 
(not uniquely defined) such that $\check r = {\cal F} {\cal P} {\cal F}^{-1}$.

\item Our task now is to  derive the explicit form of ${\cal F}$. 
This is quite straightforward, indeed the eigenvalue problem for ${\cal P}$ (and $\check r$) reads as 
\begin{eqnarray}
{\cal P} V_k = \lambda_k V_k\ \Rightarrow \  {\check r} \hat V_k = \lambda_k \hat V_k \nonumber
\end{eqnarray}
where, via  $\check r = {\cal F} {\cal P} {\cal F}^{-1}$, we identify ${\cal F} V_{k } = \hat V_k$,
 which by using $\sum_{k=1}^{{\cal N}^2} V_k V_k^T =I$, leads to the  explicit expression  
${\cal F} = \sum_{k=1}^{{\cal N}^2}\hat V_k\  V_k^T$. 
\end{enumerate}
\end{proof}

Note that  if $\check r = {\cal F} {\cal P} {\cal F}^{-1}$ (${\cal P}$ the permutation operator) then 
$r = {\cal P} \check r = {\cal F}^{(op)} {\cal F}^{-1},$ where ${\cal F}^{(op)} = {\cal P} {\cal F} {\cal P},$ and consequently the Baxterized solution (\ref{braid1}) is given as $R(\lambda) = {\cal P} \check R (\lambda) = \lambda {\cal F}^{(op)} {\cal F}^{-1}+ {\cal P}.$

\begin{cor} The brace solution $\check r$ (\ref{brace1})  is $\mathfrak{gl_{\cal N}}$ symmetric, i.e. 
$\big [\check r,\ \Delta_T(e_{x,y}) \big ] =0$, where the twisted co-products are given as 
$\Delta_T(e_{x ,y}) = {\cal F} \Delta(e_{x, y}){\cal F}^{-1}$.
\end{cor}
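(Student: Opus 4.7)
The plan is to combine Proposition \ref{Drin} with the standard $\mathfrak{gl}_{\mathcal N}$ symmetry of the permutation operator used in Corollary \ref{prop2}, and then invoke Lemma \ref{extra2} to upgrade ``commutation with $\check r$'' into ``symmetry under a genuine $\mathfrak{gl}_{\mathcal N}$ action.''

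First I would record the base symmetry: in the fundamental representation the coproduct is $\Delta(e_{x,y}) = e_{x,y} \otimes I + I \otimes e_{x,y}$, and a direct computation (or equation (\ref{comm1}) above) gives $[\mathcal P,\ \Delta(e_{x,y})] = 0$ for all $x, y \in X$. This is the only ``structural'' input about the permutation we need.

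Next I would use Proposition \ref{Drin}, which provides an invertible $\mathcal F \in \mathrm{End}(\mathbb C^{\mathcal N} \otimes \mathbb C^{\mathcal N})$ with $\check r = \mathcal F \mathcal P \mathcal F^{-1}$. Conjugating the identity $\mathcal P\, \Delta(e_{x,y}) = \Delta(e_{x,y})\, \mathcal P$ by $\mathcal F$ yields
\begin{equation*}
\check r\, \bigl(\mathcal F\, \Delta(e_{x,y})\, \mathcal F^{-1}\bigr) = \bigl(\mathcal F\, \Delta(e_{x,y})\, \mathcal F^{-1}\bigr)\, \check r,
\end{equation*}
which, setting $\Delta_T(e_{x,y}) := \mathcal F\, \Delta(e_{x,y})\, \mathcal F^{-1}$, is precisely $[\check r,\ \Delta_T(e_{x,y})] = 0$.

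Finally I would verify that $\Delta_T$ is a legitimate $\mathfrak{gl}_{\mathcal N}$ action, rather than merely a set of operators that happen to commute with $\check r$. This is where Lemma \ref{extra2} applies directly with $N=2$: since $\Delta(e_{x,y})$ satisfies the $\mathfrak{gl}_{\mathcal N}$ relations (\ref{gl2}) and $\mathcal F$ is invertible, conjugation by $\mathcal F$ preserves commutators, so $\Delta_T(e_{x,y})$ satisfies the same relations. This confirms the symmetry statement in the genuine algebraic sense.

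The main obstacle is conceptual rather than computational: everything reduces to the identity $\check r = \mathcal F \mathcal P \mathcal F^{-1}$ of Proposition \ref{Drin}, so the nontrivial content was already absorbed there. The only care needed is to ensure that $\Delta_T$ is interpreted as an $\mathfrak{gl}_{\mathcal N}$ representation obtained by Drinfeld twisting (which Lemma \ref{extra2} guarantees) rather than as an ad hoc operator, so that the word ``symmetry'' in the statement is justified.
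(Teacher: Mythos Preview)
Your proof is correct and follows essentially the same approach as the paper's: the paper simply states that the proof is ``straightforward as in Corollary \ref{prop2} using the fact that the permutation operator is $\mathfrak{gl}_{\cal N}$ symmetric,'' which is exactly the conjugation argument you spell out, together with the appeal to Lemma \ref{extra2}.
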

\begin{proof}
The proof is straightforward as in Corollary \ref{prop2} 
using the fact that the permutation operator is $\mathfrak{gl_{\cal N}}$ symmetric. 
\end{proof}

Notice that here we identified the Drinfeld twist as a similarity transformation between 
the permutation operator and the brace solution. The twisted $n$-co-product as well as the 
$n$ form of ${\cal F}$ should be identified and the admissibility of the twist should be also examined.  Also, issues on the co-associativity of the 
co-product need to be addressed.  We already observe in the simple case of Lyubashenko’s 
solutions that the co-associativity of the twisted co-products is not guaranteed. 
These are significant issues that are addressed in \cite{DoikouDr}.

\subsection{Parenthesis: the $q$-deformed case}

\noindent We slightly deflect in this subsection from our main issue, 
which is the set-theoretic solutions of the Yang-Baxter equation, 
and briefly discuss the $q$-deformed case. Inspired by the special class of Lyubashenko's solutions, 
we generalize 
in what follows Proposition  \ref{prop1} and  Corollary \ref{prop2} in the case of the 
${\mathfrak U}_q(\mathfrak{gl}_{\cal N})$ invariant representation of the $A$-type Hecke algebra \cite{Jimbo}:
\begin{equation}
{\mathrm g} =\sum_{x \neq y=1}^{\cal  N} \Big ( e_{x, y} \otimes e_{y,x}  - q^{-sgn(x-y)} e_{x,x} \otimes e_{y,y} \Big ) +q. \label{braidq}
\end{equation}
Note that strictly speaking this solution is not a set-theoretic solution of the braid equation. Nevertheless, 
isomorphisms within the set of integers $\{1, \ldots, {\mathcal N}\}$ can be still exploited to yield generalized solutions
 based on (\ref{braidq}).

\begin{pro} \label{Prop3} Let $\sigma,\ \tau: X \to X$ be isomorprhisms ($X = \{1, \ldots, {\cal N} \}$) 
such that $\sigma(\tau(x)) = \tau(\sigma(x)) = x$. The quantity 
\begin{eqnarray}
{\mathrm G} &=& \sum_{x\neq y=1}^{\cal N} \Big (e_{x, y} \otimes e_{\tau(y), \tau(x)}  - 
q^{-sgn(x-y)} e_{x,x} \otimes e_{\tau(y),\tau(y)} \Big )+ q\nonumber\\ 
&=&  \sum_{x\neq y=1}^{\cal N}\Big ( e_{\sigma(x), \sigma(y)} \otimes e_{y, x}  - q^{-sgn(x-y)} e_{\sigma(x),\sigma(x)} 
\otimes e_{y,y} \Big )+ q\label{specialq}
\end{eqnarray}
can be obtained from  the ${\mathfrak U}_q(\mathfrak{gl}_{\cal N})$ 
invariant braid solution (\ref{braidq}), provided that $sgn(x-y) = sgn(\tau(x)-\tau(y)) = sgn(\sigma(x)- \sigma(y))$, 
and is also a representation of the $A$-type Hecke algebra.
\end{pro}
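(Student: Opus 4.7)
The plan is to prove Proposition \ref{Prop3} in direct analogy with Proposition \ref{prop1} and Lemma \ref{extra1}, treating $G$ as a Drinfeld-type twist of ${\mathrm g}$ by ${\mathbb V} = \sum_{x \in X} e_{x, \tau(x)}$ (with inverse ${\mathbb V}^{-1} = \sum_{x \in X} e_{\tau(x), x}$). First I would record the two elementary identities ${\mathbb V}\, e_{x,y}\, {\mathbb V}^{-1} = e_{\sigma(x), \sigma(y)}$ and ${\mathbb V}^{-1}\, e_{x,y}\, {\mathbb V} = e_{\tau(x), \tau(y)}$, which follow from $e_{z,w} e_{a,b} = \delta_{w,a} e_{z,b}$ together with $\sigma\circ\tau = \tau\circ\sigma = \mathrm{id}_X$, exactly as in the proof of Proposition \ref{prop1}.

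Next I would establish the claim that
\begin{equation*}
G = ({\mathbb V} \otimes I)\, {\mathrm g}\, ({\mathbb V}^{-1} \otimes I) = (I \otimes {\mathbb V}^{-1})\, {\mathrm g}\, (I \otimes {\mathbb V})
\end{equation*}
by plugging the expression (\ref{braidq}) for ${\mathrm g}$ into each side and applying the two identities above. The left conjugation sends $e_{x,y}\otimes e_{y,x} \mapsto e_{\sigma(x),\sigma(y)}\otimes e_{y,x}$ and $e_{x,x}\otimes e_{y,y} \mapsto e_{\sigma(x),\sigma(x)}\otimes e_{y,y}$, reproducing the second expression for $G$; the right conjugation sends $e_{x,y}\otimes e_{y,x} \mapsto e_{x,y}\otimes e_{\tau(y),\tau(x)}$ and $e_{x,x}\otimes e_{y,y} \mapsto e_{x,x}\otimes e_{\tau(y),\tau(y)}$, reproducing the first expression. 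The coefficients $q^{-\mathrm{sgn}(x-y)}$ are unchanged by either conjugation because the sign is attached to the summation indices; the sign-preservation hypothesis $\mathrm{sgn}(x-y) = \mathrm{sgn}(\tau(x)-\tau(y)) = \mathrm{sgn}(\sigma(x)-\sigma(y))$ is precisely what makes the two displayed formulas for $G$ coincide after relabeling the summation by $\sigma$ or $\tau$.

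For the Hecke algebra statement, I would argue along the lines of Lemma \ref{extra1}, but in the $q$-deformed setting where involutivity is replaced by the quadratic relation (\ref{h3}). The key observation is that ${\mathbb V}\otimes{\mathbb V}$ commutes with ${\mathrm g}$: the off-diagonal block $\sum_{x\neq y} e_{x,y}\otimes e_{y,x}$ is a restriction of the permutation, which is invariant under $({\mathbb V}\otimes{\mathbb V})\cdot({\mathbb V}\otimes{\mathbb V})^{-1}$ conjugation by the relabeling $(x,y)\mapsto(\sigma(x),\sigma(y))$, and the diagonal part is invariant because the signs are preserved. This commutation simultaneously guarantees that the two twists coincide and, by an argument identical to that of Lemma \ref{extra1} with ${\mathbb V}$ in place of ${\mathbb V}$, that $G$ inherits the braid relation from ${\mathrm g}$. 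Since $G = ({\mathbb V}\otimes I)\,{\mathrm g}\,({\mathbb V}^{-1}\otimes I)$ is a similarity transformation, it automatically satisfies the same quadratic relation as ${\mathrm g}$, so $(G-q)(G+q^{-1})=0$, completing the Hecke algebra verification.

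The main obstacle, and the only place where care is needed, is checking that the $q$-dependent coefficients behave properly under the conjugations. The diagonal terms carry the factors $q^{-\mathrm{sgn}(x-y)}$, and any relabeling induced by $\sigma$ or $\tau$ will shift the indices inside the sign; without the hypothesis $\mathrm{sgn}(x-y) = \mathrm{sgn}(\tau(x)-\tau(y)) = \mathrm{sgn}(\sigma(x)-\sigma(y))$, the two forms of $G$ would fail to agree and $({\mathbb V}\otimes{\mathbb V})$ would fail to commute with ${\mathrm g}$, so the braid relation would generally not survive the twist. Everything else reduces to the two bookkeeping identities for ${\mathbb V} e_{x,y} {\mathbb V}^{-1}$ and the abstract Hecke argument already contained in Lemma \ref{extra1}.
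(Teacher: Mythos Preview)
Your proposal is correct and follows essentially the same route as the paper: define ${\mathbb V}=\sum_x e_{x,\tau(x)}$, verify $({\mathbb V}\otimes{\mathbb V})\,{\mathrm g}={\mathrm g}\,({\mathbb V}\otimes{\mathbb V})$ using the sign-preservation hypothesis, set ${\mathrm G}=({\mathbb V}\otimes I)\,{\mathrm g}\,({\mathbb V}^{-1}\otimes I)=(I\otimes{\mathbb V}^{-1})\,{\mathrm g}\,(I\otimes{\mathbb V})$, and then transfer the braid and quadratic Hecke relations from ${\mathrm g}$ to ${\mathrm G}$ exactly as in Lemma~\ref{extra1}. Your explanation of why the sign condition is needed (to make the two twisted expressions coincide and to ensure the commutation of ${\mathbb V}\otimes{\mathbb V}$ with ${\mathrm g}$) is in fact more explicit than the paper's.
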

{\begin{proof}
Let ${\mathbb V}= \sum_{w}e_{w,\tau(w)}$, and $~{\mathbb V}^{-1} = \sum_{z}  e_{\tau(z), z}$. 
We show by explicit computation that,
\begin{equation}
({\mathbb V}\otimes {\mathbb V})\ {\mathrm g} = {\mathrm g}\ ({\mathbb V}\otimes {\mathbb V}) \label{symq}
\end{equation}
provided that $sgn(\tau(x)-y) = sgn(\sigma(x)- y)$. We then define, bearing in mind (\ref{symq}):
\begin{equation}
{\mathrm G} = ({\mathbb V} \otimes I)\ {\mathrm g}\  ( {\mathbb V}^{-1}\otimes I)= (I
\otimes {\mathbb V}^{-1})\ {\mathrm g}\  (I \otimes {\mathbb V}), \label{defG}
\end{equation}
which leads to (\ref{specialq}). 

Also, ${\mathrm g}$ is a given representation of the $A$-type Hecke algebra, i.e.
\begin{eqnarray}
&& ( {\mathrm g}\otimes I)\ (I \otimes{\mathrm g})\  ({\mathrm g}\otimes I) =
(I \otimes {\mathrm g})\  ({\mathrm g}\otimes I)\  (I \otimes {\mathrm g}) , \label{h1q}\\
&& \big ( {\mathrm g} -q \big )\big ({\mathrm g}+q^{-1} \big) =0. \label{h3q}
\end{eqnarray}
By multiplying (\ref{h1q}) with ${\mathbb V} \otimes I \otimes {\mathbb V}^{-1}$ from the left and 
${\mathbb V}^{-1} \otimes I \otimes {\mathbb V}$ 
from the right, and also multiplying (\ref{h3q}) 
with  ${\mathbb V} \otimes I$ from the left and  ${\mathbb V}^{-1} \otimes I$ from the right, 
and using the definition 
(\ref{defG}) we immediately conclude that ${\mathrm G}$ is also a representation 
of the $A$-type Hecke algebra (see also Lemma \ref{extra1}). 
\end{proof}

It will be useful for what follows  to recall the basic definitions regarding the 
${\mathfrak U}_q(\mathfrak{gl}_{\cal N})$ algebra \cite{Jimbo}. Let
\begin{equation}
a_{ij} = 2 \delta_{ij} - \delta_{i, j+1}-\delta_{i+1, j}, 
~~~i,\ j\in \{1, \ldots , {\cal N}-1 \} 
\end{equation}
be the Cartan matrix of the associated Lie algebra.

\begin{defn} \label{def3} The quantum 
algebra ${\mathfrak U}_q(\mathfrak{sl}_{\cal N})$ has the Chevalley-Serre generators $e_{i}$, $f_{i}$,
$q^{\pm {h_{i}\over 2}}$, $i\in \{1, \ldots, {\cal N}-1\}$ obeying the defining relations:
\begin{eqnarray}
&& \Big [q^{\pm {h_{i}\over 2}},\ q^{\pm {h_{j}\over 2}} \Big]=0\,  ~~~~
q^{{h_{i}\over 2}}\ e_{j}=q^{{1\over 2}a_{ij}}e_{j}\ q^{{h_{i}\over 2}}\, ~~~~
q^{{h_{i}\over 2}}\ f_{j}= q^{-{1\over 2}a_{ij}}f_{j}\ q^{{h_{i}\over 2}}, \nonumber\\
&& \Big [e_{i},\ f_{j}\Big ] = \delta_{ij}{q^{h_{i}}-q^{-h_{i}} \over q-q^{-1}},
~~~~i,j \in \{ 1, \ldots, , {\cal N}-1 \} \label{1} 
\end{eqnarray}
and the $q$ deformed Serre relations 
\begin{eqnarray} 
&& \sum_{n=0}^{1-a_{ij}} (-1)^{n}
\left [ \begin{array}{c}
  1-a_{ij} \\
   n \\ \end{array} \right  ]_{q} 
\chi_{i}^{1-a_{ij}-n}\ \chi_{j}\ \chi_{i}^{n} =0, ~~~\chi_{i} \in \{e_{i},\ f_{i} \}, ~~~ i \neq j. \label{chev} 
\end{eqnarray}
\end{defn}

\begin{rem} \label{rem3}  $q^{\pm h_{i}}=q^{\pm (\epsilon_{i} -\epsilon_{i+1})}$, where the elements
$q^{\pm \epsilon_{i}}$ belong to ${\mathfrak U}_q(\mathfrak{gl}_{\cal N})$. Recall that ${\mathfrak U}_q(\mathfrak{gl}_{\cal N})$
 is derived by adding to ${\mathfrak U}_q(\mathfrak{sl}_{\cal N})$ 
the elements $q^{\pm \epsilon_{i}}$ $i\in \{1, \ldots, {\cal N}\}$  so that $q^{\sum_{i=1}^{{\cal N}}\epsilon_{i}}$ 
belongs to the center \cite{Jimbo}, and $\big [q^{\epsilon_i},\ q^{\epsilon_j} \big] =0$,  $~q^{\epsilon_{i}}\ e_{j} = q^{\delta_{i,j}- \delta_{i,j+1}} e_j\  q^{\epsilon_i}$, $~q^{\epsilon_{i}}\ f_{j} = q^{-(\delta_{i,j}- \delta_{i,j+1})} f_j\ q^{\epsilon_i}$.
\end{rem}

\noindent ${\mathfrak U}_q(\mathfrak{gl}_{\cal N})$ is equipped with a co-product 
$\Delta:{\mathfrak U}_q(\mathfrak{gl}_{\cal N})
\to {\mathfrak U}_q(\mathfrak{gl}_{\cal N}) \otimes {\mathfrak U}_q(\mathfrak{gl}_{\cal N})$ such that
\begin{equation}
 \Delta(\xi) = q^{- {h_{i} \over 2}} \otimes \xi + \xi \otimes q^{{h_{i} \over 2}}, ~~\xi \in \Big \{e_{i},\ f_{i}\Big  \},
\qquad \Delta(q^{\pm{\epsilon_{i} \over 2}}) = q^{\pm{\epsilon_{i} \over 2}} \otimes
q^{\pm{\epsilon_{i} \over 2}} .\label{cop}
\end{equation} 
The $N$-fold co-product may be derived by using the recursion relations
\begin{equation}
 \Delta^{(N)} = (\mbox{id} \otimes \Delta^{(N-1)}) \Delta =  ( \Delta^{(N-1)}  \otimes \mbox{id})  \Delta, \label{cop2} 
\end{equation}
and as is customary, $\Delta^{(2)} = \Delta$ and  $\Delta^{(1)} = \mbox{id}$.

 Let us now consider the fundamental representation of ${\mathfrak U}_q(\mathfrak{gl}_{\cal N})$ \cite{Jimbo},
$\pi: {\mathfrak U}_q(\mathfrak{gl}_{\cal N})\to \mbox{End}({\mathbb C}^{\cal N})$:
\begin{eqnarray} 
\pi(e_{i})= e_{i, i+1}, ~~~~\pi(f_{i})=e_{i+1, i}, ~~~~\pi (q^{{\epsilon_{i} \over 2}}) = q^{{ e_{i,i} \over 2}}, \label{eval} 
\end{eqnarray}
and let us also introduce some useful notation:
\begin{equation}
(\pi \otimes \pi)\Delta(e_j) = \Delta(e_{j, j+1}), ~~~(\pi \otimes \pi)\Delta(f_j) = \Delta(e_{ j+1, j}), ~~~
(\pi \otimes \pi)\Delta(q^{\epsilon_j}) = \Delta(q^{e_{j, j}}). \label{notation}
\end{equation}

\begin{cor} \label{Prop4} The element ${\mathrm G}$ defined in (\ref{specialq}) is ${\mathfrak U}_q(\mathfrak{gl}_{\cal N})$ symmetric, i.e.
\begin{equation}
\Big [ {\mathrm G},\  \Delta_i(Y) \Big ] =0,  ~~~~Y \in\Big  \{ e_{j, j+1},\ e_{j+1,j},\ q^{e_{j,j}} \Big \} \label{symm1q}
\end{equation}
where we define the modified co-products ($i= 1, 2$): 
\begin{eqnarray}
&&\Delta_1(q^{e_{i,i}}) = q^{e_{\sigma(i),\sigma(i)}} \otimes q^{e_{i,i}}, ~~~\Delta_2(q^{e_{i,i}}) = q^{e_{i,i}} \otimes 
q^{e_{\tau(i), \tau(i)}}\nonumber\\ 
&&\Delta_1(\xi) =  \xi_{\sigma} \otimes q^{{H_j\over 2}}  + q^{-{H_{\sigma(j)}\over 2}}  \otimes \xi, \nonumber\\
&&\Delta_2(\xi ) = \xi \otimes   q^{{H_{\tau(j)}\over 2}} + q^{-{H_j\over 2}} \otimes  \xi_{\tau} . \label{symm2q}
\end{eqnarray}
 $H_j =\big (e_{j,j} -e_{j+1, j+1} \big )$, $H_{F(j)} = \big (e_{F(j), F(j)} -e_{F(j+1), F(j+1)}\big )$, for 
$\xi \in \Big  \{ e_{j, j+1},\ e_{j+1, j}\Big \}$,  
we define respectively: $\xi_{F} \in \Big \{ e_{F(j), F(j+1)},\  e_{F(j+1), F(j)}\Big \}$.
\end{cor}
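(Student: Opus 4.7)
The plan is to mirror the argument of Corollary \ref{prop2} in the $q$-deformed setting, exploiting the fact that $\mathrm{G}$ is, by construction (\ref{defG}), a similarity transform of the ${\mathfrak U}_q(\mathfrak{gl}_{\cal N})$-invariant element $\mathrm{g}$. First I would invoke the well-known invariance of $\mathrm{g}$ in the fundamental representation, $[\mathrm{g}, \Delta(Y)] = 0$ for every $Y \in \{e_{j, j+1}, e_{j+1, j}, q^{e_{j,j}}\}$, where $\Delta(Y)$ denotes the image under $\pi \otimes \pi$ of the standard coproduct (\ref{cop}), written in the compact notation (\ref{notation}).

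Next, conjugating the commutator $[\mathrm{g}, \Delta(Y)] = 0$ from the left by $\mathbb{V} \otimes I$ and from the right by $\mathbb{V}^{-1} \otimes I$ (and independently by $I \otimes \mathbb{V}^{-1}$ and $I \otimes \mathbb{V}$) produces the two desired symmetry statements $[\mathrm{G}, \Delta_i(Y)] = 0$ with
\[
\Delta_1(Y) := (\mathbb{V} \otimes I)\Delta(Y)(\mathbb{V}^{-1} \otimes I), \qquad \Delta_2(Y) := (I \otimes \mathbb{V}^{-1})\Delta(Y)(I \otimes \mathbb{V}).
\]
The remaining task is to evaluate these conjugations in closed form and match them against (\ref{symm2q}). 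The key computations are $\mathbb{V} e_{i, j}\mathbb{V}^{-1} = e_{\sigma(i), \sigma(j)}$ and $\mathbb{V}^{-1} e_{i, j}\mathbb{V} = e_{\tau(i), \tau(j)}$, which follow directly from $\mathbb{V} = \sum_w e_{w, \tau(w)}$, $\mathbb{V}^{-1} = \sum_z e_{\tau(z), z}$ and the bijections $\sigma(\tau(x)) = \tau(\sigma(x)) = x$. These in turn yield $\mathbb{V} q^{e_{i,i}}\mathbb{V}^{-1} = q^{e_{\sigma(i), \sigma(i)}}$, $\mathbb{V} q^{H_j/2}\mathbb{V}^{-1} = q^{H_{\sigma(j)}/2}$, and the analogous identities with $\tau$. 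Plugging these into (\ref{cop}) and collecting terms reproduces verbatim the formulas (\ref{symm2q}), with $\xi_\sigma$ and $\xi_\tau$ as defined in the statement.

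Finally, by a $q$-deformed analogue of Lemma \ref{extra2}, the twisted objects $\Delta_i(Y)$ still satisfy the Chevalley--Serre relations (\ref{1})--(\ref{chev}), since conjugation by a fixed invertible operator on a single tensor factor is an algebra automorphism on that factor. The main point that requires attention, rather than genuine difficulty, is bookkeeping of the Cartan weights: one must verify that $\mathbb{V} q^{e_{j,j} - e_{j+1, j+1}}\mathbb{V}^{-1}$ is precisely $q^{H_{\sigma(j)}}$ with $H_{\sigma(j)} = e_{\sigma(j), \sigma(j)} - e_{\sigma(j+1), \sigma(j+1)}$ as defined, and similarly for $\tau$. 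This is automatic from $\sigma$ being a bijection on $X$; no extra hypothesis beyond those already used in Proposition \ref{Prop3} is needed for the symmetry itself.
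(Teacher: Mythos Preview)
Your proposal is correct and follows essentially the same approach as the paper: both start from the known ${\mathfrak U}_q(\mathfrak{gl}_{\cal N})$ invariance $[\mathrm{g},\Delta(Y)]=0$, conjugate by ${\cal F}_1^{(2)}={\mathbb V}\otimes I$ and ${\cal F}_2^{(2)}=I\otimes{\mathbb V}^{-1}$ to obtain $[\mathrm{G},\Delta_i(Y)]=0$, compute the explicit form of $\Delta_i(Y)$ via ${\mathbb V}e_{i,j}{\mathbb V}^{-1}=e_{\sigma(i),\sigma(j)}$, and then observe that the conjugated coproducts inherit the algebra relations.
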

\begin{proof}
This can be shown in a straightforward manner from the properties of (\ref{specialq}). 
Indeed, ${\mathrm g}$ (\ref{braidq})  is $\mathfrak{U}_q(\mathfrak{gl}_{\cal N})$ invariant \cite{Jimbo, Jimbo2}
(recall the fundamental representation (\ref{eval}))
\begin{equation}
\Big [ {\mathrm g},\  \Delta(Y) \Big ] =0,   \label{comm1q}
\end{equation}
wher  $Y \in\Big  \{ e_{j, j+1},\ e_{j+1,j},\ q^{e_{j,j}} \Big \}$ and the co-products of the algebra elements are given in (\ref{cop}) (see also (\ref{eval}), (\ref{notation})). We consider two invertible linear maps: ${\cal F}^{(2)}_i: \mbox{End}({\mathbb C}^{\cal N} \otimes {\mathbb C}^{\cal N}) \to  \mbox{End}({\mathbb C}^{\cal N} \otimes {\mathbb C}^{\cal N}), i \in \{ 1, 2\}$ such that ${\cal F}^{(2)}_1 := {\mathbb V} \otimes I$ and ${\cal F}^{(2)}_2 := I \otimes {\mathbb V}^{-1},$ where ${\mathbb V}$ is defined in Proposition \ref{Prop3}, then from (\ref{comm1q})
\begin{eqnarray}
{\cal F}^{(2)}_i \Big [ {\mathrm g},\  \Delta(Y) \Big ] ({\cal F}^{(2)}_i)^{-1}=0\  \Rightarrow\    \Big [ {\mathrm G},\  \Delta_i(Y) \Big ]  =0,  \label{comm2q}
\end{eqnarray}
where the modified co-produts are  defined as 
$\Delta_i(Y) ={\cal F}^{(2)}_i\Delta(Y) ({\cal F}^{(2)}_i)^{-1},$  and more specifically:
\begin{eqnarray}
&& \Delta_1(q^{e_{i,i}}) =  {\mathbb V}q^{e_{i,i}} {\mathbb V}^{-1} \otimes q^{e_{i,i}}, ~~~~\Delta_2(q^{e_{i,i}}) = q^{e_{i,i}} \otimes
  {\mathbb V}^{-1} q^{e_{i,i}} {\mathbb V}, \nonumber\\
&&  \Delta_1(\xi) =  {\mathbb V} \xi  {\mathbb V}^{-1} \otimes q^{{H_j\over 2}}+   {\mathbb V}q^{-{H_j\over 2}}  {\mathbb V}^{-1}
\otimes \xi,  \nonumber\\ 
&&  \Delta_2(\xi) = \xi \otimes  {\mathbb V}^{-1}q^{{H_j\over 2}} {\mathbb V}+  q^{-{H_j\over 2}} \otimes  {\mathbb V}^{-1}   \xi 
 {\mathbb V},~~~~~\xi \in \Big \{e_{j,j+1},\ e_{j+1,j}\Big \}
\end{eqnarray}
and explicitly given by (\ref{symm2q}).

The coproducts $\Delta(Y)$  satisfy the ${\mathfrak U}_q(\mathfrak{gl}_{\cal N})$ relations, then via\\ $\Delta_i(Y) ={\cal F}^{(2)}_i\Delta(Y) ({\cal F}^{(2)}_i)^{-1},$  the modified coproducts $\Delta_i(Y)$ also satisfy the ${\mathfrak U}_q(\mathfrak{gl}_{\cal N})$ exchange  relations, thus 
${\mathrm G}$ is ${\mathfrak U}_q(\mathfrak{gl}_{\cal N})$ symmetric.
\end{proof}

 Explicit expressions for the modified $N$ co-products are then given as:
\begin{eqnarray}
&& \Delta^{(N)}_1(q^{e_{j,j}})=\bigotimes_{n=1}^N q^{e_{\sigma^{N-n}(j), \sigma^{N-n}(j)}}, ~~~ 
\Delta^{(N)}_2(q^{e_{j,j}})=\bigotimes_{n=1}^N q^{e_{\tau^{n-1}(j),\tau^{n-1}(j)}}\nonumber\\
&& \Delta_1^{(N)}(\xi) =\sum_{n=1}^N q^{-{H_{\sigma^{N-1}(j)}\over 2}}  \otimes \ldots  \otimes 
q^{-{H_{\sigma^{N-n+1}(j)}\over 2}}  \otimes \xi_{\sigma^{N-n}}
\otimes  q^{{H_{\sigma^{N-n-1}(j)}\over 2}}  \ldots \otimes  q^{{H_j\over 2}}, \nonumber\\
&& \Delta_2^{(N)}(\xi) =\sum_{n=1}^N  q^{-{H_j\over 2}}  \otimes \ldots  \otimes q^{-{H_{\tau^{n-2}(j)}\over 2}}  \otimes  \xi_{\tau^{n-1}}
\otimes  q^{{H_{\tau^{n}(j)}\over 2}}  \ldots \otimes  q^{{H_{\tau^{N-1}(j)}\over 2}}, \nonumber\\ 
&& \label{delta2q}
\end{eqnarray}
where  $\xi_{F^n} \in \Big \{ e_{F^n(j), F^n(j+1)},\  
 e_{F^n(j+1), F^n(j)}\Big \}$. The above expressions can be written in a compact form as:
$\Delta^{(N)}_i(Y) = {\cal F}_i^{(N)}\Delta^{(N)}(Y)({\cal F}_i^{(N)})^{-1},$ where  recall $Y \in\Big  \{ e_{j, j+1},\ e_{j+1,j},\ q^{e_{j,j}} \Big \},$ $\Delta^{(N)}(Y)$ are the ${\mathfrak U}_q(\mathfrak{gl}_{\cal N})$  $N$-coproducts,  and we define 
${\cal F}_{1}^{(N)}:= {\mathbb V}^{N-1} \otimes {\mathbb V}^{N-2} \otimes \ldots \otimes {\mathbb V} \otimes I$ and 
${\cal F}_2^{(N)} := I \otimes {\mathbb V}^{-1} \otimes {\mathbb V}^{-2}\otimes \ldots \otimes {\mathbb V}^{-(N-1)}$ (see also  \cite{DoikouDr}).


$ $

Some general comments are in order here. We should note that set-theoretic solutions from 
braces have no semi-classical analogue \cite{DoiSmo}, thus they are fundamentally 
different from the known Yangian solutions or the $q$-deformed solutions of the YBE
associated to $\mathfrak {gl}_{\cal N}$ or $\mathfrak {U}_q(\mathfrak {gl}_{\cal N})$ \cite{Jimbo, ChaPre, Molev}.  
 This is evident even in the simple case of Lyubashenko's solution (please see  Proposition \ref{prop1} and simple examples 1 and 2 in page 9), recall 
$r = {\mathbb V}^{-1} \otimes {\mathbb V} \Rightarrow R(\lambda) =  \lambda {\mathbb V}^{-1} \otimes {\mathbb V}+{\cal P},$ where 
${\mathbb V} = \sum_{x,\in X} e_{x, \tau(x)}$ (more generally due to Proposition \ref{Drin}, $R(\lambda) = \lambda {\cal F}^{(op)} {\cal F}^{-1}+ {\cal P}$ and ${\cal F}^{(op)} = {\cal P} {\cal F} {\cal P}$). Such $R$-matrices can not be expressed as 
$1 + \hbar r^{(1)} + ...$ (up to an overall multiplicative function $f(\lambda)$),  given the form of ${\mathbb V}$  (or ${\cal F}$ explicitly given in \cite{DoikouDr}),
a fact that makes our construction 
distinct compared to the known examples of quantum algebras (quasi triangular Hopf algebras) 
as described for instance by Drinfeld in \cite{Drinfeld} (a detailed analysis on these issues is presented in [22]).
In this spirit it would be also very interesting to consider general twists, in analogy to Proposition \ref{Drin}, 
for the $q$-deformed case as well as the 
corresponding quantum groups and make possible connections with the theory of braces.

\section{Co-ideals: reflection $\&$ twisted algebras}

\noindent We introduce two, in principle distinct, quadratic algebras associated to the classification
of boundary conditions in  quantum integrable models. To define these quadratic algebras
in addition to the $R$-matrix we also need to introduce the $K$-matrix, 
which physically describes the interaction of particle-like excitations displayed by 
the quantum integrable  system, with the boundary of the system. 
The $K$-matrix satisfies \cite{Cherednik, Sklyanin, Olsha}:    
\begin{equation}
R_{12}(\lambda_1 - \lambda_2) {\mathbb K}_1(\lambda_1) \hat R_{12}(\lambda_1+\lambda_2) {\mathbb K}_{2}(\lambda_2) =  
{\mathbb K}_{2}(\lambda_2)  \hat R_{21}(\lambda_1+\lambda_2) {\mathbb K}_1(\lambda_1 )R_{21}(\lambda_1 - \lambda_2), \label{RE2}
\end{equation}
where we define in general $A_{21} = {\cal P}_{12} A_{12} {\cal P}_{12}$.
We make two distinct choices for $\hat R$, which lead to the two district quadratic algebras:
\begin{eqnarray}
&&\hat R_{12}(\lambda) = R_{12}^{-1}(-\lambda) ~~~~~~~~~~~~\mbox{Reflection algebra}\label{quadratic1} \\
&&\hat R_{12}(\lambda) = R^{t_1}_{12}(-\lambda -{{\cal N} \over 2}) ~~~~~~\mbox{Twisted algebra}, \label{quadratic2}
\end{eqnarray}
notice ${{\cal N}\over 2}$ is the Coxeter number for $\mathfrak{gl}_{\cal N}$.

In the self-conjugate cases e.g. in the case of e.g.
$\mathfrak {sl}_2,\ \mathfrak {U}_q(\mathfrak {sl}_2)$ or $\mathfrak {so}_n,\  \mathfrak {sp}_n $ $R$-matrices  
$R(\lambda) \sim {\cal C}_1 R_{12}^{t_1}(-\lambda -c){\cal C}_1$, for some matrix  ${\cal C}: {\cal C}^2 =I$ ,
i.e. the $R$-matrix is crossing symmetric,  and the two algebras, twisted and refection, coincide.
The constant  $c$ is associated to the Coxeter number of the corresponding algebra.
It is worth noting that these algebras are linked to two distinct types of integrable
boundary conditions, extensively studied in the context of  $A^{(1)}_{{\cal N} -1}$ affine Toda field theories \cite{Corrigan, DeMa, DoikouATFT}, 
and quantum spin chains \cite{Sklyanin} associated to 
$\mathfrak{gl}_{\cal N}$,  ${\mathfrak U}_q(\mathfrak{gl}_{\cal N})$,  and $\mathfrak{gl}({\cal N}|{\cal M})$ algebras \cite{deVega}, \cite{MeNeFu},
 \cite{DoikouNepomechie}-\cite{ DoikouMurphy}.

\subsection{Boundary Yang-Baxter equation $\&$ $B$-type Hecke algebra}

\noindent Let us first focus in the case where $\hat R_{12} (\lambda)= R_{12}^{-1}(-\lambda)\propto R_{21}(\lambda)$,  i.e. 
we consider the boundary Yang-Baxter or reflection equation \cite{Cherednik, Sklyanin}, expressed in the braid form
\begin{equation}
\check R_{12}(\lambda_1-\lambda_2){\mathbb K}_1(\lambda_1) \check R_{12}(\lambda_1 +\lambda_2) {\mathbb K}_1(\lambda_2)= 
{\mathbb K}_1 (\lambda_2) \check R_{12}(\lambda_1 +\lambda_2) {\mathbb K}_1(\lambda_1) \check R_{12}(\lambda_1-\lambda_2). \label{RE}
\end{equation}
As in the case of the Yang-Baxter equation, where representations of the $A$-type Hecke algebra are associated to solutions of the
Yang-Baxter equation \cite{Jimbo}, via the Baxterization process,  representations of the $B$-type Hecke algebra provide solutions 
of the reflection equation \cite{LevyMartin, DoikouMartin}.

\begin{defn} \label{def2}{\it The $B$-type Hecke algebra ${\cal B}_N(q, Q)$ is defined by the generators $g_l$, $l \in \{1,\ 2, \ldots, N-1  \}$ 
and $G_0$ and the exchange relations
(\ref{h1})-(\ref{h3}) and}
\begin{eqnarray}
&& G_0\ g_1\ G_0\ g_1 = g_1\ G_0\ g_1\ G_0 \label{b1}\\
&& \Big [ G_0,\ g_l\Big ] =0, ~~~l>1 \label{b2} \\
&&  \big ( G_0 - Q\big ) \big (G_0-Q^{-1} \big )= 0.\label{b3}
\end{eqnarray}
\end{defn}
We focus here on the case where $q=1$ and $Q$ arbitrary, and consider the brace solutions (\ref{brace1}) as representation of the Hecke elements $g_l$. 
We can solve the quadratic relation (\ref{b1}) together with (\ref{b3}) to provide representation of the $G_0$ element. Then via Baxterization 
we are able to identify suitable solutions of the reflection equation.
It is obvious that the identity is a solution of the relations (\ref{b1}), (\ref{b3}), and hence of the reflection equation.

\begin{rem} \label{Proposition 5} {\it  Let ${\mathrm b} = \sum_{x,z \in X} {\mathrm b}_{z,w}e_{z,w}$ be a representation of the $G_0$ 
element of the $B$-type Hecke algebra and $\check r$ is the set-theoretic solution given  in (\ref{brace1}).
Representations of $G_0$ can be identified.}

Indeed, let us  solve the quadratic relation (\ref{b1})
\begin{eqnarray}
&& ({\mathrm b} \otimes I)\  \check r\  ({\mathrm b} \otimes I)\ \check r= \check r\ ({\mathrm b} 
\otimes I)\  \check r\  ({\mathrm b} \otimes I).
\end{eqnarray}
The LHS of the latter equation leads to
\begin{equation}
\sum {\mathrm b}_{z,x} {\mathrm b}_{\sigma_x(y),\hat x} e_{z, \sigma_{\hat x}(\hat y)}\otimes e_{y, \tau_{\hat y}(\hat x)}, \label{L2}
\end{equation}
subject to: $\hat y = \tau_{y}(x)$,
whereas the RHS gives:
\begin{equation}
\sum {\mathrm b}_{\sigma_x(y), \hat x} {\mathrm b}_{\sigma_{\hat x} (\hat y), \hat w} e_{x,\hat w} 
\otimes e_{y, \tau_{\hat y}(\hat x)}\label{R2}
\end{equation}
subject to: $\hat y = \tau_{y}(x)$. Comparison of the LHS and RHS provide conditions among ${\mathrm b}_{x,w}$.
Moreover, ${\mathrm b}$ should satisfy condition (\ref{b3}) of the $B$-type Hecke algebra, which leads to
\begin{equation}
\sum_{y} {\mathrm b}_{z,y} {\mathrm b}_{y,w} = (Q -Q^{-1}) {\mathrm b}_{z,w} + \delta_{z,w}. \label{b4}
\end{equation}
Study of the fundamental relations above for any brace solution will lead to admissible representations for $G_0$.
\end{rem}

Note that in the special case that 
${\mathrm b}_{z,w} = \delta_{w, k(z)}$, where $k: X \to X$ satisfies $k(k(x))=x$ ($Q=1$), and some extra conditions 
that are discussed in the subsequent subsection, one recovers set-theoretic 
reflections (see also next subsection and  \cite{SmoVenWes} for a more detailed discussion).
In general, the full classification of  representations of the $B$-type Hecke algebra using the brace $\check r$-matrix (\ref{brace1}) 
is an important problem itself, which however will be left for future investigations.

\begin{rem} \label{rem4} { Let $\check r: V \otimes V \to V \otimes V,\ {\mathrm b}: V \to V$ provide  representations of the $B$-type Hecke algebra,
and assume that there exists some invertible ${\mathbb V}: V \to V$ (see also Lemma \ref{extra1} and Proposition \ref{prop1}):
\begin{equation}
({\mathbb V}\otimes{\mathbb V })\check r = \check r ({\mathbb V} \otimes {\mathbb V}).
\end{equation} 
We also define
\begin{equation}
\check \rho= ({\mathbb V } \otimes I)\ \check r \  ( {\mathbb V }^{-1}\otimes I)= (I \otimes {\mathbb V }^{-1})\ \check r\  (I\otimes {\mathbb V }),~~~~~
\beta = {\mathbb V } {\mathrm b} {\mathbb V }^{-1} \label{defGb}
\end{equation} 
It then follows that  $\check r,\ {\mathrm b}$ as well as $\check \rho,\ \beta$ provide  presentations of the $B$-type Hecke algebra.}
\end{rem}

\begin{rem} \label{rem5} {\it  Let ${\mathrm b}$ be an ${\cal N} \otimes {\cal N}$ matrix and $\check r$ be 
an ${\cal N}^{2} \otimes{\cal N}^2$ matrix.
Let also ${\mathrm b}_1$ (index notation) be a tensor realization of the $G_0$ element of 
the $B$-type Hecke algebra ${\cal B}_N(q=1, Q)$ and $\check r_{ll+1}$
a tensor realization of the element $g_l$ of ${\cal B}_N(q=1, Q)$. Then solutions of the reflection equation 
(\ref{RE}) ($\check R(\lambda) = \lambda \check r + I^{\otimes 2}$) can be expressed as, 
up to an overall function of $\lambda$, (Baxterization):
\begin{equation}
{\mathbb K}(\lambda) =   \lambda \big ( {\mathrm b} - {\kappa \over 2} I\big )+ {\hat c\over 2} I , \label{baxter}
\end{equation}
where $\hat c$ is an arbitrary constant, $\kappa = Q - Q^{-1}$ and $I$ the ${\cal N} \times {\cal N}$ identity matrix.}
\end{rem}

\noindent This has been done in \cite{LevyMartin, DoikouMartin}, but we briefly review the procedure here, in the special case $q=1$.
Indeed, recall $\check R$ is given by (\ref{braid1}) 
and let ${\mathbb K}(\lambda) = \xi(\lambda) I + \zeta(\lambda) {\mathrm b}$ where the functions $\xi(\lambda),\ \zeta(\lambda)$ 
will be identified. We substitute the expressions 
for $\check R$ and  $K(\lambda)$ in the reflection equation (\ref{RE}) and use repeatedly relations (\ref{b1}), (\ref{b2}), then after various 
terms cancellations the reflection equation (\ref{RE}) becomes:
\begin{eqnarray}
2\lambda_1 \xi_1 \zeta_2 -2 \lambda_2 \zeta_1\xi_2 + \kappa (\lambda_1 - \lambda_2) \zeta_1 \zeta_2 =  0 \label{basic2}
\end{eqnarray}
where we define: $\zeta_i = \zeta(\lambda_i),\ \xi_i = \xi(\lambda_i)$ and $\kappa = Q - Q^{-1}$.
We divide (\ref{basic2}) by $\zeta_1\zeta_2$ (provided that this is nonzero) and set $Q_i = {\xi_i \over \zeta_i}$:
\begin{eqnarray}
&& 2\lambda_1 Q_1 -2 \lambda_2 Q_2 + \kappa (\lambda_1 -\lambda_2) = 0\ \Rightarrow \ Q_i = {\hat c\over 2\lambda_i} - {\kappa \over 2},
\end{eqnarray}
and the latter implies: ${\xi(\lambda) \over \zeta(\lambda)} = {\hat c - \lambda \kappa \over 2 \lambda}$ ($\hat c$ is an arbitrary constant).

The remark above \ref{rem5}  is of course valid at the abstract level, 
that is solutions of the spectral dependent braid and reflections equations can
 be expressed in terms of the generators $g_l,\ G_0$ of the $B$-type Hecke algebra 
${\cal B}_N(q=1, Q)$, i.e. $\check R_{ll+1}(\lambda) = \lambda g_l + id$ and  
${\mathbb K}_1(\lambda) =   \lambda \big ( G_0 - {\kappa \over 2} id\big )+ {\hat c\over 2}id$.

\subsection{Set-theoretic representations of $B$-type Hecke algebras}

\noindent In this section we further investigate connections between the $B$-type Hecke algebra and 
the set-theoretic reflection equation, and give some specific  examples of
 representations of Hecke algebras that correspond to set-theoretic reflections.

\begin{lemma}\label{reflection}
 Let $(X, {\check r })$ be an involutive non-degenerate set-theoretic solution of the braid equation where 
${\check r}(x,y)=(\sigma_x(y), \tau_y(x))$. 
 Then $(X, {\check r }')$ is an involutive non-degenerate set-theoretic solution of the braid equation where 
${\check r}'(x,y)=(\tau_x(y), \sigma_y(x))$. 

 Let $k:X\rightarrow X$ be a function. 
Then the following are equivalent:
\begin{enumerate}

 \item  $k:X\rightarrow X$ is a solution to the set-theoretic reflection equation for the solution $(X,  {\check r})$: 
\[{\check r}K_{[1]}{\check r}K_{[1]}=K_{[1]}{\check  r}K_{[1]}{\check r}\] where $K_{[1]}(x,y)=(k(x), y)$.
 \item  $k:X\rightarrow X$ is a solution to the following  version of the reflection equation considered in \cite{SmoVenWes} 
for the solution $(X,  {\check r}')$: 
\[{\check r}'K_{[2]}{\check r}'K_{[2]}=K_{[2]}{\check  r}'K_{[2]}{\check r}'\] where $K_{[2]}(x,y)= (y, k(x))$.
\end{enumerate}
\end{lemma}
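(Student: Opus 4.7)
The plan is to exploit the operator identity $\check r' = P\check r P$ on $V \otimes V$, where $P(x,y)=(y,x)$ is the flip. This identity follows by direct unwinding of the definitions:
\[P\check r P(x,y) = P\check r(y,x) = P(\sigma_y(x),\tau_x(y)) = (\tau_x(y),\sigma_y(x)) = \check r'(x,y).\]
Once this is in hand, both halves of the lemma reduce to algebraic manipulations using $P^2 = I$ together with the involutive, non-degenerate, and braid properties of $\check r$.

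For the first assertion, involutivity of $\check r'$ is immediate from $(\check r')^2 = P\check r P \cdot P\check r P = P\check r^2 P = P^2 = I$. Non-degeneracy is inherited because $\check r'$ uses the same family of maps $\{\sigma_x,\tau_y\}$, which are bijective by hypothesis. For the braid equation, I would conjugate the braid relation $\check r_{12}\check r_{23}\check r_{12} = \check r_{23}\check r_{12}\check r_{23}$ on $V^{\otimes 3}$ by the swap $P_{13}$ of the outer tensor factors, using the easily verified identities $P_{13}\check r_{12}P_{13} = \check r'_{23}$ and $P_{13}\check r_{23}P_{13} = \check r'_{12}$; inserting $P_{13}^2 = I$ between adjacent factors on each side yields the braid relation for $\check r'$.

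For the equivalence of the two reflection conditions, my approach is to conjugate the reflection equation of item (1) throughout by $P$, inserting $P^2 = I$ between adjacent factors, to obtain
\[(P\check r P)(PK_{[1]}P)(P\check r P)(PK_{[1]}P) = (PK_{[1]}P)(P\check r P)(PK_{[1]}P)(P\check r P),\]
which is precisely the reflection equation for the operator $PK_{[1]}P$ with respect to $\check r'$. The remaining step is to match this conjugated operator with the $K_{[2]}$ specified in the statement, using the convention of \cite{SmoVenWes}; once identified, the converse direction follows symmetrically by conjugating back with $P$. The main obstacle I anticipate is precisely this operator identification, since $K_{[2]}(x,y)=(y,k(x))$ has a somewhat unusual form (it is not of the plain tensor type $id_V\otimes k$). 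If a direct operator matching is not immediate, I would fall back on a component-wise comparison of the two reflection equations, using the involutivity identities $\sigma_{\sigma_x(y)}(\tau_y(x)) = x$ and $\tau_{\tau_y(x)}(\sigma_x(y)) = y$ together with the braid identities for $\sigma$ and $\tau$ implicit in Rump's Theorem~\ref{Rump} to reduce one set of component equations to the other.
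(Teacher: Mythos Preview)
Your proposal follows the paper's argument almost verbatim. The paper also establishes $\check r' = P\check r P$ by the same one-line computation, deduces involutivity from $(\check r')^2 = P\check r^2 P = P^2 = id$, obtains the braid relation for $\check r'$ by conjugating the braid relation for $\check r$ with $P_{13}$ via exactly the identities $P_{13}(\check r \times id_X)P_{13} = id_X \times \check r'$ and $P_{13}(id_X \times \check r)P_{13} = \check r' \times id_X$, and proves the equivalence of the two reflection equations by conjugating with $P$ and inserting $P^2 = I$ between consecutive factors. Your caution about the final operator identification is well placed: the paper passes from $(P\check r' P)(PK_{[2]}P)\cdots$ to $\check r K_{[1]}\cdots$ without further comment, so your instinct to double-check that step (or fall back to a component check) is sound, but the overall strategy is the same.
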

\begin{proof} Observe that $\check r$ is non-degenerate, hence maps 
 $\sigma_x, \tau_y$ are bijections. Consequently, $\check r$ is non-degenerate.
 Let $P:X\times X\rightarrow X\times X$ be defined as usually as $P(x,y)=(y, x)$ for $x,y\in X$.
Observe that ${\check r }'=P \check r P$, indeed $P{\check r}P(x,y)=P{\check r}(y,x)=P(\sigma _{y}(x), \tau _{x}(y))={\check r}'(x,y)$. 

Notice that ${\check r}'$ is involutive: ${\check r}'{\check r}'=P\check r PP\check r P=P\check r ^{2}P=P^2= id_{X\times X}$. Observe that 
\[{\check r}'K_{[2]}{\check r}'K_{[2]}=K_{[2]}{\check  r}'K_{[2]}{\check r}'\]  
 is equivalent to
\[(P{\check r}'P)(PK_{[2]}P)(P{\check r}'P)PK_{[2]}P=(PK_{[2]}P)(P{\check  r}'P)(PK_{[2]}P)(P{\check r}'P),\]
which immediately leads to
\[{\check r}K_{[1]}{\check r}K_{[1]}=K_{[1]}{\check  r}K_{[1]}{\check r}.\]

It remains to check that ${\check r}'$ is also a solution to the braid equation. For this purpose let us 
introduce, in the index notation, $P_{13}$: $P_{13}(x,y,z) = (z,y,x)$, it then follows that
$P_{13}(\check  r\times   id_X) P_{13} = id_X \times \check r'$ and $P_{13}( id_X \times  \check r) P_{13} = \check r' \times id_X$.
This is easily shown, indeed $P_{13}(\check  r\times  id_X) P_{13}(x,y,z) =P_{13}(\check  r\times  id_X)(z,y,x)  =
 P_{13} (\sigma_z(y), \tau_{y}(z), x) = (x, \tau_{y}(z) , \sigma_z(y))= ( id_X \times \check r') (x, y,z)$. 
Similarly, we show that
$P_{13}(  id_X\times  \check r) P_{13} = \check r' \times  id_X$.
By acting on the braid equation for $\check r$ with $P_{13}$ from the left and right  it then
immediately follows that $\check r'$ also satisfies the braid relation.
\end{proof}

Examples of functions $k$ satisfying the reflection equation related to  braces can be found in \cite{SmoVenWes, Katsa, DeCommer}.
Recall that this set-theoretical version of the reflection equation  together with the first examples of solutions first appeared in the work 
of Caudrelier and Zhang \cite{CauZha}

Notice that the element ${\mathbb b}$ of the Hecke algebra can be used to construct $c$-number $K$-matrices  satisfying equation (\ref{RE}), 
provided that $Q=1$.
Hence, by Lemma \ref{reflection}, constant  $K$-matrices  can be  obtained from involutive set-theoretic solutions to the reflection equation.
In particular, involutive $\tau $-equivariant functions give $c$-number solutions of the parameter dependent  equation (\ref{RE}), and every linear 
combination over $\mathbb C$ of such $K$-matrices  is also a constant  $K$-matrix, and hence gives a solution to  equation (\ref{RE})
(by Theorem 5.6 \cite{SmoVenWes} applied with interchanging $\sigma $ and $\tau $).

As an application of Lemma \ref{reflection} we obtain: 

\begin{pro}\label{Prop7} Let $(X,\check r)$ be an involutive, non-degenerate solution of the braid equation.  
Let  $\check r=\sum_{x,y\in X}e_{x, \sigma  _{x}(y)}\otimes e_{y, \tau _{x}(y)}$, and let
 $g_{n}= I^{\otimes n-1}\otimes {\check r} \otimes I^{\otimes N-n-1}$.
 Let ${\mathrm b} = \sum_{x\in X} e_{x,k(x)}$ for some function $k: X\rightarrow X$ such that $k(k(x))=x$ for all $x\in X$. 
Then ${\mathrm b}\otimes I$  is a representation  of the $G_0$ 
element of the $B$-type Hecke algebra (together with $\check r$ used for representation of elements $g_{n}$) if and only if 
\[ \tau_{\tau _{y}(x)}(k(\sigma _{x}(y))) =\tau_{\tau _{y}(k(x))}(k(\sigma _{k(x)}(y))).\]
\end{pro}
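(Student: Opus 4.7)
The plan is to verify the three defining relations of the $B$-type Hecke algebra $\mathcal{B}_{N}(q=1,Q)$ for the pair $({\mathrm b}\otimes I^{\otimes N-1}, \check r)$ representing $(G_{0}, g_{n})$. Two of them are essentially free. The quadratic relation (b3) with $Q=1$ (so $\kappa = Q-Q^{-1}=0$) reduces to $({\mathrm b}\otimes I)^{2}=I^{\otimes 2}$, which follows immediately from $k\circ k=\id$ via ${\mathrm b}^{2}=\sum_{x}e_{x,k(k(x))}=I$. The commutation relation (b2) is trivial because for $l\geq 2$ the operator $g_{l}$ acts on sites $l, l+1$ while ${\mathrm b}\otimes I^{\otimes N-1}$ acts only on site $1$, so the supports are disjoint. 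All the content is therefore in the mixed braid relation (b1), which on the first two tensor factors reads $({\mathrm b}\otimes I)\,\check r\,({\mathrm b}\otimes I)\,\check r=\check r\,({\mathrm b}\otimes I)\,\check r\,({\mathrm b}\otimes I)$.

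I will evaluate both sides on a basis vector $v_{z}\otimes v_{w}$, using the involutivity-induced identities $\check r(v_{z}\otimes v_{w})=v_{\sigma_{z}(w)}\otimes v_{\tau_{w}(z)}$ and $({\mathrm b}\otimes I)(v_{z}\otimes v_{w})=v_{k(z)}\otimes v_{w}$. Four alternating applications of these maps yield $v_{k(P)}\otimes v_{R}$ on the left-hand side and $v_{Q'}\otimes v_{S}$ on the right-hand side, where $P=\sigma_{k(\sigma_{z}(w))}(\tau_{w}(z))$, $R=\tau_{\tau_{w}(z)}(k(\sigma_{z}(w)))$, and $Q', S$ are the analogous expressions in which $z$ is replaced by $k(z)$ (the outer $k$ of the LHS appears because the last map applied in that sequence is ${\mathrm b}\otimes I$, whereas on the RHS it is $\check r$). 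Equating the second tensor factors for every $(z,w)\in X\times X$ demands $R(z,w)=S(z,w)$; after renaming $z\to x$, $w\to y$ this is exactly the identity in the statement, establishing its necessity.

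The most delicate step is to show that the first-factor identity $k(P)=Q'$ follows from the second-factor one, yielding sufficiency. My plan is to write $a=k(\sigma_{z}(w))$, $b=\tau_{w}(z)$, so that $(P,R)=\check r(a,b)$ and, by involutivity of $\check r$, $a=\sigma_{P}(R)$, $b=\tau_{R}(P)$, and analogously for the primed quantities. Equivalently, relation (b1) can be recast as the single statement $(\alpha\beta)^{4}=\id$ with $\alpha={\mathrm b}\otimes I$ and $\beta=\check r$, i.e.\ that the permutation $(\alpha\beta)^{2}:(z,w)\mapsto(k(P),R)$ is an involution on $X\times X$. The stated second-coordinate identity together with the non-degeneracy of $\sigma_{(\cdot)}$ and $\tau_{(\cdot)}$ and the involutions $k\circ k=\id$, $\check r^{2}=I$ should then pin down the first coordinate. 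The obstacle I anticipate lies precisely here --- transferring the matching of second coordinates into the matching of first coordinates in a purely abstract way; if a clean abstract argument proves elusive, the fallback is a parallel direct computation of the first tensor factor along the same lines already used for $R=S$.
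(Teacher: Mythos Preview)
Your approach is different from the paper's. The paper does not carry out the computation at all: it invokes the preceding Lemma (which swaps the two forms of the set-theoretic reflection equation via conjugation by the flip) and then cites Theorem~1.8 of \cite{SmoVenWes}, where the equivalence between the reflection equation and the single scalar identity in the statement is already established. In effect the paper outsources precisely the step you flag as delicate.

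Your direct computation is correct as far as it goes: relations (b2) and (b3) are indeed trivial under $k\circ k=\id$, and evaluating (b1) on $v_z\otimes v_w$ you correctly extract the second-factor identity $R(z,w)=S(z,w)$, which is the stated condition and proves necessity. Your reformulation of (b1) as $(\alpha\beta)^4=\id$ with $\alpha={\mathrm b}\otimes I$ and $\beta=\check r$ is also correct, since both are involutions.

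The genuine gap is exactly where you locate it. Writing $\delta=\beta\alpha\beta$, relation (b1) is $\alpha\delta=\delta\alpha$, which on coordinates splits into two conditions: (ii) $B(z,w)=B(k(z),w)$ (your second-factor identity) and (i) $k(A(z,w))=A(k(z),w)$ (the first-factor identity). The proposition asserts (b1)$\iff$(ii), so what must be shown is (ii)$\Rightarrow$(i). Non-degeneracy and involutivity alone do not obviously force this: from (ii) and $\delta^2=\id$ one gets $B(k(A(z,w)),B(z,w))=B(A(k(z),w),B(z,w))$, but concluding $k(A(z,w))=A(k(z),w)$ would require $u\mapsto B(u,v)$ to be injective for fixed $v$, which is not a consequence of the bijectivity of $\sigma_\bullet,\tau_\bullet$ in any evident way. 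Your fallback of ``a parallel direct computation of the first tensor factor'' would produce condition (i) as a second necessary condition, hence prove (b1)$\iff\bigl((i)\wedge(ii)\bigr)$, which is strictly weaker than the stated biconditional unless you also supply (ii)$\Rightarrow$(i). That implication is the actual content of the cited theorem in \cite{SmoVenWes}; if you want a self-contained argument you must reproduce it, and it needs more than the abstract ingredients you list.
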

\begin{proof} This follows immediately from Lemma \ref{reflection} and Theorem 1.8  from \cite{SmoVenWes}, 
when we interchange $\sigma $ with $\tau $. 
\end{proof}
Let $(X,\check r)$ be an involutive, non-degenerate solution of the braid equation where we denote 
${\check r}(x,y)=(\sigma _{x}(y), \tau _{y}(x))$, and let $k:X\rightarrow X$ be a function. 
We say that $k$ is $\tau $-equivariant if for every $x, y\in X$ we have 
\[\tau _{x}(k(y))=k(\tau _{x}(y)).\]

It was shown in  \cite{SmoVenWes} that every function $k:X\rightarrow X$  satisfying $k(\sigma _{x}(y))=\sigma _{x}(k(y))$
 satisfies the set-theoretic reflection equation. By interchanging   $\sigma $ with $\tau $ and applying Lemma \ref{reflection} we get:
\begin{cor} Let $(X,\check r)$ be an involutive, non-degenerate solution of the braid equation.  
Let $\check r=\sum_{x,y\in X}e_{x, \sigma _{x}(y)}\otimes e_{y, \tau _{x}(y)}$, and let $g_{n}= 
I^{\otimes n-1}\otimes {\check r} \otimes I^{\otimes N-n-1}$.
 Let ${\mathrm b} = \sum_{x\in X} e_{x,k(x)}$ for some $\tau $-equivariant  function $k: X\rightarrow X$ such that $k(k(x))=x$ for all $x\in X$. 
Then ${\mathrm b}\otimes I$  is a representation of the  $G_0$ 
element of the $B$-type Hecke algebra (together with $\check r$ used for representation of elements $g_{n}$ in this Hecke algebra). 
\end{cor}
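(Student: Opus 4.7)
My plan is to reduce the claim to Proposition \ref{Prop7} by combining the Smoktunowicz--Vendramin--Weston result cited immediately above the corollary with the $\sigma \leftrightarrow \tau$ interchange encoded by Lemma \ref{reflection}. The key observation is that passing from $\check r$ to its companion $\check r'(x,y) = (\tau_x(y), \sigma_y(x))$ exchanges the roles of $\sigma$ and $\tau$; consequently the hypothesis that $k$ is $\tau$-equivariant with respect to $\check r$ translates into $k$ being $\sigma$-equivariant with respect to $\check r'$.

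With this identification in hand I would apply the cited result of \cite{SmoVenWes} to $\check r'$: a $\sigma$-equivariant function solves the set-theoretic reflection equation in their convention, which gives $\check r' K_{[2]} \check r' K_{[2]} = K_{[2]} \check r' K_{[2]} \check r'$. Lemma \ref{reflection} then transports this equation across the flip $P$ and delivers the equivalent $\check r K_{[1]} \check r K_{[1]} = K_{[1]} \check r K_{[1]} \check r$ for $\check r$ itself. Since $k \circ k = \mathrm{id}_X$ forces ${\mathrm b} = \sum_x e_{x, k(x)}$ to satisfy the remaining Hecke relations (\ref{b2}) and (\ref{b3}) with $Q = 1$ automatically, Proposition \ref{Prop7} then completes the argument: ${\mathrm b} \otimes I$ is a representation of $G_0$.

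As an independent sanity check one may verify the equivalent condition of Proposition \ref{Prop7}
\[
\tau_{\tau_y(x)}\bigl(k(\sigma_x(y))\bigr) = \tau_{\tau_y(k(x))}\bigl(k(\sigma_{k(x)}(y))\bigr)
\]
directly. Applying $\tau$-equivariance pulls $k$ outside on each side, reducing the identity to $k(\tau_{\tau_y(x)}(\sigma_x(y))) = k(\tau_{\tau_y(k(x))}(\sigma_{k(x)}(y)))$, and the involutivity of $\check r$, in the form $\tau_{\tau_v(u)}(\sigma_u(v)) = v$ for every $u, v \in X$, collapses both sides to $k(y)$. This short direct verification offers an alternative to the Lemma \ref{reflection} route and avoids any appeal to \cite{SmoVenWes}.

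The only point that truly demands care is the bookkeeping of the $\sigma$--$\tau$ swap across the conjugation by $P$ that defines $\check r'$: one must check that ``$\sigma$-equivariance for $\check r'$'' really coincides with ``$\tau$-equivariance for $\check r$'' and that the version of the reflection equation produced by the cited theorem is indeed the $K_{[2]}$-variant, so that Lemma \ref{reflection} applies and pulls it back to the $K_{[1]}$-variant needed by Proposition \ref{Prop7}. Beyond this assembly step, everything is already established earlier in the paper.
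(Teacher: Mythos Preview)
Your main argument is correct and matches the paper's proof exactly: invoke the \cite{SmoVenWes} result that $\sigma$-equivariant functions solve the set-theoretic reflection equation, interchange $\sigma$ and $\tau$ via Lemma \ref{reflection}, and conclude. One small wording issue: once you have established the quartic relation $\check r K_{[1]}\check r K_{[1]}=K_{[1]}\check r K_{[1]}\check r$ together with (\ref{b2}) and (\ref{b3}), you are already done; there is no need to invoke Proposition \ref{Prop7} at that point.

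Your second paragraph, the direct verification of the Proposition \ref{Prop7} condition using $\tau$-equivariance and the involutivity identity $\tau_{\tau_v(u)}(\sigma_u(v))=v$, is a genuinely different and more elementary route than the paper takes. The paper relies on the external result from \cite{SmoVenWes} plus the $\sigma\leftrightarrow\tau$ transport of Lemma \ref{reflection}; your direct check is fully self-contained and sidesteps both. What the paper's approach buys is a conceptual link to the existing literature on set-theoretic reflections; what your direct check buys is independence from that literature and a two-line computation.
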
 Examples of $\tau $-equivariant functions can be defined  by fixing $x,y\in X$ and defining for
 $k(r)=\tau _{z}(y)$  for $r=\tau_{z}(x)$ (provided that  $\tau _{v}(x)=x$ implies $\tau_{v}(y)=y$ for every $v\in X$).
In \cite{Katsa} Kyriakos Katsamaktsis used central elements to construct $\mathcal {G}(X,r)$ equivariant functions, 
his ideas also allow to define 
$\tau $-equivariant functions in an analogous way- as $k(x)=\tau_{c}(x)$, where $c$ is central.

\subsection{Reflection $\&$ twisted algebras}

\noindent We shall discuss in more detail now the two distinct algebras associated to the quadratic equation (\ref{RE2}).
A solution of the quadratic equation (\ref{RE2}) is of the form \cite{Sklyanin, Olsha}
\begin{equation}
{\mathbb K}(\lambda |\theta_1)= L(\lambda -\theta_1)\  \big ( K(\lambda) \otimes \mbox{id}\big )\  \hat L(\lambda +\theta_1), \label{rep1}
\end{equation}
where $L(\lambda) \in\mbox{End}({\mathbb C}^{\cal N}) \otimes {\mathfrak A}$ satisfies the RTT  relation (\ref{RTT})
and $K(\lambda) \in \mbox{End}({\mathbb C}^{\cal N}) $ is a $c$-number solution of the quadratic equation (\ref{RE2})  (for some $R(\lambda)\in\mbox{End}({\mathbb C}^{\cal N} \otimes {\mathbb C}^{\cal N}),$ solution of the Yang-Baxter equation).  
We also define (in the index notation (see also Footnote 2, page 7))
\begin{eqnarray}
&& \hat L_{1n}(\lambda) = L_{1n}^{-1}(-\lambda) ~~~~~~~~~~~~\mbox{Reflection algebra}\nonumber\\
&& \hat L_{1n}(\lambda) = L^{t_1}_{1n}(-\lambda -{{\cal N} \over 2}) ~~~~~~\mbox{Twisted algebra}.
\end{eqnarray}
The quadratic algebra  ${\mathfrak B}$ defined by (\ref{RE2}) is a left co-ideal of the quantum algebra ${\mathfrak A}$ for
a given $R$-matrix (see also e.g. \cite{Sklyanin, DeMa, Doikou2}), i.e. the algebra  is endowed with a co-product 
$\Delta: {\mathfrak B} \to {\mathfrak B} \otimes {\mathfrak  A}$ \cite{Sklyanin}. Indeed, we define (in the index notation)
\begin{equation}
{\mathbb T}_{0;12}(\lambda|\theta_1,\theta_2) = L_{02}(\lambda-\theta_2) {\mathbb K}_{01}(\lambda|\theta_1)\hat L_{02}(\lambda+\theta_2), \label{rep3}
\end{equation}
where ${\mathbb K}(\lambda|\theta_1)$ is given in (\ref{rep1}) and in the index notation ${\mathbb K}_{01}(\lambda|\theta_1) = L_{01}(\lambda-\theta_1) K_0(\lambda) \hat L_{01}(\lambda+\theta_1)$.
Let also ${\mathbb K}_{01}(\lambda|\theta_1) = \sum_{a,b =1}^{\cal N}e_{a,b}\otimes {\mathbb K}_{a,b}(\lambda|\theta_1)\otimes\mbox{id} $,  $L_{02} = \sum_{a,b=1}^{\cal N}e_{a,b} \otimes \mbox{id} \otimes L_{a,b}(\lambda)$ and
 ${\mathbb T}_{0;12}(\lambda|\theta_1,\theta_2)= \sum_{a,b=1}^{\cal N}e_{a,b} \otimes \Delta({\mathbb K}_{a, b}(\lambda|\theta_1,\theta_2)),$ then via expression (\ref{rep3}):
\begin{equation}
\Delta({\mathbb K}_{a,b}(\lambda|\theta_1, \theta_2) )= \sum_{k,l} {\mathbb K}_{k,l}(\lambda|\theta_1) \otimes L_{a,k}(\lambda -\theta_2) \hat L_{l,b}(\lambda+\theta_2), \label{rep2}
\end{equation}
where the elements ${\mathbb K}_{k,l}(\lambda|\theta_1)$  can be also re-expressed in terms of the elements of the $c$-number matrix $K$ and $L$  when considering the realization (\ref{rep1}).

In our analysis in the subsequent section, we shall be primarily focusing on tensor representations of ${\mathbb K}$ and on the special case: 
$L(\lambda) \to R(\lambda)$, $\ \hat L(\lambda) \to \hat R(\lambda)$ and for the rest of the present subsection and subsections 5.1-5.3 we shall  be considering  $R(\lambda) = \lambda {\cal P} \check r + {\cal P}$, 
where $\check r$ provides a representation of the $A$-type Hecke algebra ${\cal H}_N(q=1)$ and ${\cal P}$ is the permutation operator.

Before we move on with stating the next Proposition and Corollaries regarding the quadratic algebras defined by (\ref{RE2}) 
we first introduce some useful notation associated to both the reflection and twisted algebras (\ref{RE2}).
 We introduce $\check r^*$ and $\hat{\cal P}$:
\begin{eqnarray}
&& \check r^*_{12} =\check r_{12}, ~~~~~~\hat {\cal P}_{12} =I^{\otimes 2} ~~~~~~~~~\mbox{Reflection algebra} \label{qq1} \\
&&\check r_{12}^* =r_{12}^{t_1}{\cal P}_{12} ,\ ~~~\hat {\cal P}_{12} =\big  ({\cal N \over 2}r_{12}^{t_1} -{\cal P}_{12}^{t_1}\big ){\cal P}_{12}~~~~~~~\mbox{Twisted algebra}. \label{qq2}
\end{eqnarray}

\begin{pro} \label{defin} Let $\check R(\lambda)= \lambda\check r + I^{\otimes 2}$, where $\check r$ 
provides a tensor realization of 
the Hecke algebra ${\cal H}_N(q=1)$, and let ${\mathbb K}(\lambda)$ satisfy the quadratic equation (\ref{RE2}).
Let also ${\mathbb K}(\lambda) = 
\sum_{n=0}^{\infty}{{\mathbb K}^{(n)} \over \lambda^n}$ and $~{\mathbb K}^{(n)} = \sum_{z, w\in X} e_{z,w}\otimes {\mathbb K}_{z,w}^{(n)},$ where ${\mathbb K}_{z,w}^{(n)}$ are the generators of the quadratic algebra defined by (\ref{RE2}). 
The exchange relations among the
quadratic algebra generators are encoded in:
\begin{eqnarray}
& & \check r_{12}{\mathbb K}_1^{(n+2)} \check r^*_{12} {\mathbb K}_1^{(m)} - \check r_{12} {\mathbb K}_1^{(n)} 
\check r^*_{12} {\mathbb K}_1^{(m+2)} + 
\check r_{12} {\mathbb K}_1^{(n+1)} \hat {\cal P}_{12} {\mathbb K}_1^{(m)}\nonumber\\ && -\ \check r_{12} 
{\mathbb K}_1^{(n)} \hat {\cal P}_{12} {\mathbb K}_1^{(m+1)}
+  {\mathbb K}_1^{(n+1) }\check r^*_{12} {\mathbb K}_1^{(m)} +  {\mathbb K}_1^{(n) }\check r^{*}_{12} 
{\mathbb K}_1^{(m+1)}  +
{\mathbb K}_1^{(n)} \hat {\mathcal P}_{12} {\mathbb K}_1^{(m)}
\nonumber\\
&= & {\mathbb K}_1^{(m)} \check r^*_{12} {\mathbb K}_1^{(n+2)}\check r_{12} - {\mathbb K}_1^{(m+2)} 
\check r^*_{12} {\mathbb K}_1^{(n)}\check r_{12} + 
{\mathbb K}_1^{(m)} \hat {\cal P}_{12} {\mathbb K}_1^{(n+1)}\check r_{12} \nonumber\\ 
&& -\  {\mathbb K}_1^{(m+1)}\hat {\cal P}_{12}  {\mathbb K}_1^{(n)}\check r_{12}+  {\mathbb K}_1^{(m+1)} 
\check r^*_{12} {\mathbb K}_1^{(n)}+  
{\mathbb K}_1^{(m)} \check r^*_{12} {\mathbb K}_1^{(n+1)} + {\mathbb K}_1^{(m)}\hat 
{\mathcal P}_{12} {\mathbb K}_1^{(n)}, \nonumber \\
&& \label{RABasic}
\end{eqnarray}
where $\check r^*$ and $\hat {\cal P}$ are defined in (\ref{qq1}), (\ref{qq2}).
\end{pro}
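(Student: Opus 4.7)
The plan is to convert the quadratic equation (\ref{RE2}) into a single uniform braid-type form valid for both the reflection and twisted cases, and then extract the mode exchange relations by expanding in the spectral parameter.

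First I would left- and right-multiply (\ref{RE2}) by ${\cal P}_{12}$. Using the standard identities $R_{12}={\cal P}_{12}\check R_{12}$, $R_{21}=\check R_{12}{\cal P}_{12}$, ${\cal P}_{12}{\mathbb K}_2={\mathbb K}_1 {\cal P}_{12}$, and ${\cal P}_{12}\hat R_{21}=\hat R_{12}{\cal P}_{12}$ (all immediate from $A_{21}={\cal P}_{12}A_{12}{\cal P}_{12}$ together with ${\cal P}_{12}^2=I^{\otimes 2}$), the equation reduces after a short rearrangement to
\begin{equation*}
\check R_{12}(\lambda_1-\lambda_2)\,{\mathbb K}_1(\lambda_1)\,\check{\hat R}(\lambda_1+\lambda_2)\,{\mathbb K}_1(\lambda_2)={\mathbb K}_1(\lambda_2)\,\check{\hat R}(\lambda_1+\lambda_2)\,{\mathbb K}_1(\lambda_1)\,\check R_{12}(\lambda_1-\lambda_2),
\end{equation*}
where $\check{\hat R}(\mu):=\hat R_{12}(\mu)\,{\cal P}_{12}$.

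Next I would identify $\check{\hat R}$ case by case as an affine--linear polynomial in $\mu$, matching the data (\ref{qq1}), (\ref{qq2}). For the reflection algebra, unitarity (\ref{u1}) gives $\hat R_{12}(\mu)=R_{12}^{-1}(-\mu)\propto R_{21}(\mu)$; the resulting scalar factor $1-\mu^2$ is symmetric on the two sides of (\ref{RE2}) and cancels, leaving $\check{\hat R}(\mu)=R_{21}(\mu){\cal P}_{12}=\check R_{12}(\mu)=\mu\check r + I^{\otimes 2}$, which matches $\mu\check r^{*}+\hat{\cal P}$ with $\check r^{*}=\check r$ and $\hat{\cal P}=I^{\otimes 2}$. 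For the twisted algebra, writing $R_{12}(\lambda)=\lambda r + {\cal P}$ and taking the partial transpose yields $\hat R_{12}(\mu)=(-\mu-{\cal N}/2)\,r^{t_1}+{\cal P}^{t_1}$, whence
\begin{equation*}
\check{\hat R}(\mu)=-\mu\,(r^{t_1}{\cal P})-\bigl(({\cal N}/2)\,r^{t_1}-{\cal P}^{t_1}\bigr){\cal P}=-\bigl(\mu\check r^{*}+\hat{\cal P}\bigr),
\end{equation*}
with $\check r^{*}, \hat{\cal P}$ exactly as in (\ref{qq2}); the overall sign multiplies both sides of the braid equation and drops out. Thus in both cases $\check{\hat R}(\mu)=\mu\check r^{*}+\hat{\cal P}$ up to an irrelevant sign.

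The remaining step is a routine expansion. Substituting $\check R(\lambda)=\lambda\check r+I^{\otimes 2}$ and $\check{\hat R}(\mu)=\mu\check r^{*}+\hat{\cal P}$ into the braid equation and distributing, each side becomes a sum of four terms with scalar prefactors $(\lambda_1-\lambda_2)(\lambda_1+\lambda_2)=\lambda_1^{2}-\lambda_2^{2}$, $(\lambda_1-\lambda_2)$, $(\lambda_1+\lambda_2)$, and $1$, multiplying respectively the $\check r/\check r^{*}$, $\check r/\hat{\cal P}$, $I^{\otimes 2}/\check r^{*}$ and $I^{\otimes 2}/\hat{\cal P}$ blocks sandwiching ${\mathbb K}_1(\lambda_1)$ and ${\mathbb K}_1(\lambda_2)$. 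Inserting the formal expansion ${\mathbb K}_1(\lambda)=\sum_{k\geq 0}\lambda^{-k}{\mathbb K}_1^{(k)}$ and collecting the coefficient of $\lambda_1^{-n}\lambda_2^{-m}$ shifts the two flanking superscripts $n,m$ by the degrees of each prefactor, producing exactly the seven terms on each side of (\ref{RABasic}). The only non-mechanical step is the uniform rewriting just described: recognising that conjugation by ${\cal P}_{12}$ collapses both versions of (\ref{RE2}) into a single braid-type equation whose $\check{\hat R}$ is affine--linear in the spectral parameter and, for the twisted case, carefully extracting the decomposition $\check{\hat R}(\mu)=-(\mu\check r^{*}+\hat{\cal P})$ in terms of the partial-transpose operators of (\ref{qq2}). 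Once this uniform form is established, the coefficient comparison is direct.
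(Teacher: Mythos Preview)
Your proposal is correct and follows essentially the same route as the paper: conjugate (\ref{RE2}) by ${\cal P}_{12}$ to obtain the braid form (\ref{RE3}) with $\check R^*(\mu)=\mu\check r^*+\hat{\cal P}$, then expand in $\lambda_1^{-n}\lambda_2^{-m}$. You actually supply more detail than the paper does, namely the explicit verification that $\hat R_{12}(\mu){\cal P}_{12}$ reduces to $\mu\check r^*+\hat{\cal P}$ (up to an overall scalar in the reflection case and an overall sign in the twisted case, both cancelling between the two sides), which the paper simply asserts.
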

\begin{proof}
First we act from the left and right of (\ref{RE2}) with the permutation operator ${\cal P},$ then (\ref{RE2}) becomes
\begin{equation}
\check R_{12}(\lambda_1-\lambda_2) {\mathbb K}_1(\lambda_1) \check R^*_{12}(\lambda_1+\lambda_2)
{\mathbb K}_1(\lambda_2)={\mathbb K}_1(\lambda_2)  \check R^*_{12}(\lambda_1+\lambda_2){\mathbb K}_1(\lambda_1)\check R_{12}(\lambda_1-\lambda_2), \label{RE3}
\end{equation}
where $\check R(\lambda_1 - \lambda_2) = (\lambda_1 - \lambda_2)\check r + I^{\otimes 2 }$ and $\check R^*(\lambda_1 +\lambda_2)= (\lambda_1 +\lambda_2) \check r^* + \hat {\cal P}$ ($\check r^*,\  \hat{\cal P}$ are defined in (\ref{qq1}), (\ref{qq2})), and we recall that ${\mathbb K}(\lambda_i) = \sum_{n=0}^{\infty}{{\mathbb K}^{(n)} \over \lambda_i^n}$ ($i\in \{1,\ 2\}$).
We substitute the above expressions in  (\ref{RE3}),  and we gather terms proportional to $\lambda_1^{-n} \lambda_2^{-m}$, $n, m\geq 0$ in the LHS and RHS of (\ref{RE3}), which lead to  (\ref{RABasic}). Recalling also that in general $A_{12} = A \otimes \mbox{id}_{\mathfrak A}$, 
$~{\mathbb K}^{(n)}_1 = \sum_{z, w\in X} e_{z,w} \otimes I\otimes {\mathbb K}_{z,w}^{(n)},$ 
and substituting the latter  expressions in (\ref{RABasic}) we obtain the exchange relations among the 
generators ${\mathbb K}_{z,w}^{(n)}$, which are particularly involved and are omitted here.
\end{proof}

It is useful for the following Corollaries to focus on terms proportional to $\lambda_1^2 \lambda_2^{-m}$ and $\lambda_1 \lambda_2^{-m}$ (or equivalently  $\lambda_2^2 \lambda_1^{-m}$ and $\lambda_2 \lambda_1^{-m}$) in the $\lambda_{1,2}$ expansion of the quadratic algebra, and obtain
\begin{eqnarray}
&& \check r_{12} {\mathbb K}_1^{(0)}\check r_{12}^* {\mathbb K}_1^{(m)}  =  
{\mathbb K}_1^{(m)} \check r_{12}^*{\mathbb K}_1^{(0)}\check  r_{12}  \label{RABasic2} \\
&& \check r_{12} {\mathbb K}_1^{(1)}\check r_{12}^* {\mathbb K}_1^{(m)} + {\mathbb K}_1^{(0)} \check r_{12}^*   {\mathbb K}_1^{(m)}
+  \check r_{12} {\mathbb K}_1^{(0)}\hat {\cal P}_{12} {\mathbb K}_1^{(m)}
=  \label{RABasic2b} \\
&& {\mathbb K}_1^{(m)} \check r_{12}^*{\mathbb K}_1^{(1)}\check  r_{12}    +  {\mathbb K}_1^{(m)} \check r_{12}^*  {\mathbb K}_1^{(0)}
+  {\mathbb K}_1^{(m)} \hat {\cal P}_{12} {\mathbb K}_1^{(1)}\check  r_{12}.\nonumber
\end{eqnarray}
The two Corollaries that follow concern the reflection algebra only, i.e. $\check r^* = \check r,\  \hat {\cal P} = I^{\otimes 2}$.

\begin{cor} \label{Proposition 6}{\it A finite non-abelian sub-algebra of the reflection algebra  exists, 
realized by the elements of ${\mathbb K}^{(1)}$ when ${\mathbb K}^{(0)} \propto I$.}
\end{cor}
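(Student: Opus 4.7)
The plan is to isolate, within the defining relations of the reflection algebra, a closed set of quadratic identities involving only the generators ${\mathbb K}^{(1)}_{z,w}$, thereby exhibiting them as the finite set of generators of a sub-algebra. The closed identity will be extracted from the $\lambda_1\lambda_2^{-m}$ coefficient (\ref{RABasic2b}) of the reflection equation at the specific value $m=1$.

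Concretely I would substitute into (\ref{RABasic2b}) the reflection-algebra choices $\check r^{*}=\check r$ and $\hat{\cal P}=I^{\otimes 2}$, together with the hypothesis ${\mathbb K}^{(0)}_1 = c\, I^{\otimes 2}$. Then ${\mathbb K}^{(0)}_1$ is central, and every term containing ${\mathbb K}^{(0)}$ collapses either to $c\,\check r_{12}{\mathbb K}^{(m)}_1$ or to $c\,{\mathbb K}^{(m)}_1\check r_{12}$, using also the involution $\check r^{2}=I^{\otimes 2}$. Taking $m=1$ the resulting identity is
\begin{equation*}
\check r_{12}\,{\mathbb K}^{(1)}_1\,\check r_{12}\,{\mathbb K}^{(1)}_1 \;-\; {\mathbb K}^{(1)}_1\,\check r_{12}\,{\mathbb K}^{(1)}_1\,\check r_{12} \;=\; 2c\,\bigl({\mathbb K}^{(1)}_1\,\check r_{12} \,-\, \check r_{12}\,{\mathbb K}^{(1)}_1\bigr),
\end{equation*}
which involves no ${\mathbb K}^{(n)}$ with $n\geq 2$. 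Writing ${\mathbb K}^{(1)}_1=\sum_{z,w}e_{z,w}\otimes I\otimes {\mathbb K}^{(1)}_{z,w}$ and projecting onto the basis $\{e_{z,w}\otimes e_{z',w'}\}$ of $\mbox{End}({\mathbb C}^{\cal N}\otimes{\mathbb C}^{\cal N})$ converts this matrix identity into a finite set of bilinear exchange relations among the ${\cal N}^2$ generators ${\mathbb K}^{(1)}_{z,w}$ alone, showing that they span a finitely generated sub-algebra of the reflection algebra.

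For the non-abelian part I would argue that the displayed identity is a ``constant'' (spectral-independent) analogue of the reflection equation in the matrix ${\mathbb K}^{(1)}$ and, for a generic brace solution $\check r$ in the sense of Theorem \ref{Rump}, does not reduce to commutativity among the ${\mathbb K}^{(1)}_{z,w}$; a concrete realisation coming from a non-trivial brace suffices to exhibit non-commuting generators. The main technical obstacle is in verifying consistency, namely that the other coefficient relations collected in (\ref{RABasic}) and (\ref{RABasic2}) do not add further purely-${\mathbb K}^{(1)}$ constraints: I would check that (\ref{RABasic2}) is automatically satisfied by ${\mathbb K}^{(0)}_1=c\,I^{\otimes 2}$ via $\check r^{2}=I^{\otimes 2}$, that (\ref{RABasic2b}) with $m=0$ holds identically by the same central-element argument, and that (\ref{RABasic}) for every $n,m\geq 0$ involves at least one ${\mathbb K}^{(k)}$ with $k\geq 2$. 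This will confirm that the displayed quadratic relation is the only closed relation on ${\mathbb K}^{(1)}$, so that the sub-algebra is well-defined and generically non-abelian.
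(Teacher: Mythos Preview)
Your approach is essentially the same as the paper's: both specialize (\ref{RABasic2b}) to the reflection case $\check r^{*}=\check r$, $\hat{\cal P}=I^{\otimes 2}$, set $m=1$, and use ${\mathbb K}^{(0)}\propto I$ to obtain a closed quadratic relation among the ${\mathbb K}^{(1)}_{z,w}$ alone; your displayed identity is exactly the paper's relation (\ref{rela2}) at $m=1$ after the substitution ${\mathbb K}^{(0)}=cI$. The paper's proof stops there, simply declaring that (\ref{rela2}) at $m=1$ furnishes the defining relations of the finite sub-algebra, whereas you go further and sketch a consistency check (that no other coefficients of the $\lambda$-expansion impose additional purely-${\mathbb K}^{(1)}$ constraints) and an argument for non-abelianness; neither of these points is addressed in the paper's own proof.
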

\begin{proof}
We focus on terms proportional $\lambda_1^2 \lambda_2^{-m}$ and $\lambda_1 \lambda_2^{-m}$ (\ref{RABasic2}),  
(\ref{RABasic2b}) in the case of the reflection algebra:
\begin{eqnarray}
&& \Big [\check r_{12} {\mathbb K}_1^{(0)} \check r_{12},\ {\mathbb K}_1^{(m)} \Big ] =0 \label{rela1}  \\
&& \Big [ \check r_{12} {\mathbb K}_1^{(1)}\check r_{12},\ {\mathbb K}_1^{(m)} \Big ]= \label{rela2}\\
&&  \ {\mathbb K}_1^{(m)}  {\mathbb K}_1^{(0)} \check r_{12}  + {\mathbb K}_1^{(m)}  
\check r_{12}  {\mathbb K}_1^{(0)} - {\mathbb K}_1^{(0)} \check r_{12} {\mathbb K}_1^{(m)}
-  \check r_{12} {\mathbb K}_1^{(0)}{\mathbb K}_1^{(m)}.  \nonumber
\end{eqnarray}
Notice that due to (\ref{rep1})  in the case of the reflection algebra ${\mathbb K}^{(0)} 
\propto  I$ when the $c$-number matrix $K  \propto I$.
For $m=1$ equation (\ref{rela2}) provides the defining relations of a finite sub-algebra of the reflection algebra generated 
by ${\mathbb K}^{(1)}_{x,y}$ .
\end{proof}


\begin{cor} \label{Proposition 7} {\it For the special class of Lyubashenkos's solutions $\check r$ of Proposition \ref{prop1} a finite non-abelian sub-algebra
of the reflection algebra exists,  realized by the elements of ${\mathbb K}^{(1)}$ for any ${\mathbb K}^{(0)}$.
When ${\mathbb K}^{(0)} \propto I$ the finite sub-algebra generated by the ${\mathbb K}_{x,y}^{(1)}$ is the $\mathfrak{gl}_{\cal N}$ algebra.
Moreover, traces of ${\mathbb K}^{(m)}$ commute with the elements ${\mathbb K}_{x,y}^{(1)}$,}
\begin{equation}
\Big [ {\mathbb K}^{(1)}_{x,y},\ tr_1({\mathbb K}_1^{(m)}) \Big ] =0, ~~~~\forall x,\ y \in X. \label{commut}
\end{equation}
\end{cor}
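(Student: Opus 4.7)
The plan is to exploit the special form of Lyubashenko's solution, namely $\check r_{12}=({\mathbb V}\otimes I)\,{\cal P}_{12}\,({\mathbb V}^{-1}\otimes I)$ from Proposition \ref{prop1}, to collapse the quadratic reflection-algebra relations into tractable commutator identities on the first layer of generators. The key technical input is the identity
\begin{equation*}
\check r_{12}\,X_1\,\check r_{12}\ =\ \tilde X_{2}, \qquad \tilde X_{a,b}\ :=\ X_{\sigma(a),\sigma(b)},
\end{equation*}
valid for any ${\cal N}\times{\cal N}$ matrix $X$ (with entries possibly in the algebra) placed on the first auxiliary space. This follows directly from ${\mathbb V}^{-1}X{\mathbb V}=\tilde X$, which uses ${\mathbb V}^{-1}e_{x,y}{\mathbb V}=e_{\tau(x),\tau(y)}$ (see the proof of Corollary \ref{prop2}), together with the standard ${\cal P}_{12}\tilde X_1{\cal P}_{12}=\tilde X_2$; the flanking ${\mathbb V}_1$'s cancel because $\tilde X_2$ acts trivially on the first space.

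With this identity in hand, equation (\ref{rela1}) reduces to $[\tilde{\mathbb K}_2^{(0)},{\mathbb K}_1^{(m)}]=0$ for every $m\ge 0$, which component-wise forces all entries ${\mathbb K}^{(0)}_{z,w}$ to be central in the reflection algebra. Substituting the same identity into (\ref{rela2}) together with the consequences $\check r_{12}{\mathbb K}_1^{(0)}=\tilde{\mathbb K}_2^{(0)}\check r_{12}$ and ${\mathbb K}_1^{(0)}\check r_{12}=\check r_{12}\tilde{\mathbb K}_2^{(0)}$ (obtained by multiplying by $\check r_{12}$ and using $\check r_{12}^2=I^{\otimes 2}$), the LHS at $m=1$ becomes $[\tilde{\mathbb K}_2^{(1)},{\mathbb K}_1^{(1)}]$ while the RHS is an expression involving only ${\mathbb K}^{(1)}_{x,y}$, the central scalars ${\mathbb K}^{(0)}_{x,y}$, and factors of $\check r_{12}$. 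Thus the bracket of any two ${\mathbb K}^{(1)}_{x,y}$ is a $\mathbb C$-linear combination of the ${\mathbb K}^{(1)}_{x,y}$ themselves, establishing the first claim: a finite non-abelian sub-algebra exists for any admissible ${\mathbb K}^{(0)}$.

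For the identification with $\mathfrak{gl}_{\cal N}$ when ${\mathbb K}^{(0)}=cI$, I would simplify the RHS to $2c[{\mathbb K}_1^{(1)},\check r_{12}]$ and compute both sides in components. Using the brace form $\check r_{12}=\sum_{p,q}e_{p,\sigma(q)}\otimes e_{q,\tau(p)}$ a direct expansion shows that the only non-vanishing components of the RHS live on the tensor strata $e_{a,b}\otimes e_{\tau(b),d}$ and $e_{a,b}\otimes e_{c,\tau(a)}$; after the index substitutions $\sigma(a)\mapsto c'$, $\sigma(b)\mapsto d'$ the $\sigma,\tau$ dependence disappears and one extracts
\begin{equation*}
[{\mathbb K}^{(1)}_{c',d'},\,{\mathbb K}^{(1)}_{a,b}]\ =\ 2c\bigl(\delta_{c',b}{\mathbb K}^{(1)}_{a,d'}-\delta_{d',a}{\mathbb K}^{(1)}_{c',b}\bigr),
\end{equation*}
which (up to an overall rescaling of the generators) are precisely the $\mathfrak{gl}_{\cal N}$ commutation relations.

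Finally, for the trace statement I would take $\mathrm{tr}_1$ of (\ref{rela2}) with ${\mathbb K}^{(0)}\propto I$. The crucial observation is the cyclicity identity $\mathrm{tr}_1(\check r_{12}X_1)=\mathrm{tr}_1(X_1\check r_{12})$, which genuinely holds because $\check r_{12}$ carries no algebra factor and so matrix cyclicity in the first space is not obstructed by non-commutativity; this makes the partial trace of the RHS vanish. The partial trace of the LHS yields $\sum_{a,b}e_{a,b}\otimes[{\mathbb K}^{(1)}_{\sigma(a),\sigma(b)},\,\mathrm{tr}_1({\mathbb K}_1^{(m)})]$; comparing coefficients of $e_{a,b}$ and invoking the bijectivity of $\sigma$ delivers $[{\mathbb K}^{(1)}_{x,y},\mathrm{tr}_1({\mathbb K}_1^{(m)})]=0$ for every $x,y\in X$. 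The main obstacle will be the bookkeeping of the $\sigma$-twisted indices in Claim 2: one must verify that the twist, which appears explicitly both in $\check r_{12}$ and in $\tilde{\mathbb K}_2^{(1)}$, cancels completely in the final commutator relations so that the bare $\mathfrak{gl}_{\cal N}$ structure is revealed, while everything else reduces, via the twist identity of the first paragraph, to manipulations already familiar from the Yangian reflection algebra.
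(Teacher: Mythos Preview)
Your approach is essentially the same as the paper's: both exploit the factorisation $\check r_{12}={\mathbb V}_1{\cal P}_{12}{\mathbb V}_1^{-1}$ to collapse $\check r_{12}X_1\check r_{12}$ to a conjugated copy on space~2, thereby reducing (\ref{rela1})--(\ref{rela2}) to Yangian-type relations. The paper phrases this as a uniform conjugation by ${\mathbb V}_1$, writing everything in terms of $\tilde{\mathbb K}^{(m)}={\mathbb V}^{-1}{\mathbb K}^{(m)}{\mathbb V}$ and the bare permutation ${\cal P}_{12}$, arriving at the single compact identity
\[
\big[\tilde{\mathbb K}_2^{(1)},\tilde{\mathbb K}_1^{(m)}\big]={\cal P}_{12}\Big(\tilde{\mathbb K}_2^{(m)}\big(\tilde{\mathbb K}_1^{(0)}+\tilde{\mathbb K}_2^{(0)}\big)-\big(\tilde{\mathbb K}_1^{(0)}+\tilde{\mathbb K}_2^{(0)}\big)\tilde{\mathbb K}_1^{(m)}\Big),
\]
from which closure at $m=1$, the $\mathfrak{gl}_{\cal N}$ relations for ${\mathbb K}^{(0)}\propto I$, and the trace statement all drop out. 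Your route is equivalent; you simply leave one copy untilded and work a bit harder on the $\sigma$-bookkeeping, and your explicit component derivation of the $\mathfrak{gl}_{\cal N}$ relations is more detailed than the paper's (which just asserts it).

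One point where the paper is slightly stronger: your trace argument assumes ${\mathbb K}^{(0)}\propto I$ to make the RHS of (\ref{rela2}) a pure commutator with $\check r_{12}$, whereas the paper traces the displayed identity above for \emph{arbitrary} ${\mathbb K}^{(0)}$. Because (\ref{rela1}) already forces ${\mathbb K}^{(0)}$ to be a $c$-number matrix, the four terms on the RHS produce $\tilde{\mathbb K}^{(0)}\tilde{\mathbb K}^{(m)}+\tilde{\mathbb K}^{(m)}\tilde{\mathbb K}^{(0)}-\tilde{\mathbb K}^{(0)}\tilde{\mathbb K}^{(m)}-\tilde{\mathbb K}^{(m)}\tilde{\mathbb K}^{(0)}=0$ after $\mathrm{tr}_1$, so (\ref{commut}) holds without the restriction. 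You can recover this generality by first passing to the paper's conjugated form before tracing.
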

\begin{proof}
\noindent For the special class of solutions $\check r_{12} = {\mathbb V}_1 {\cal P}_{12} {\mathbb V}_1^{-1}$  (\ref{special1}), equation (\ref{rela1}) becomes 
$\Big [ \tilde {\mathbb K}_2^{(0)},\  \tilde{\mathbb K}_1^{(m)}\Big ] =0$, where we define $\tilde {\mathbb K}^{(m)} = {\mathbb V}^{-1}{\mathbb K}^{(m)} {\mathbb V}$ (${\mathbb V} = \sum_{x\in X} e_{x, \tau(x)}$), which  reads for the matrix elements as:
${\mathbb K}^{(m)}_{x,y} = 
\tilde {\mathbb K}^{(m)}_{\tau(x), \tau(y)}$.  The latter commutator implies that ${\mathbb K}^{(0)}$ is a $c$-number matrix 
(i.e. the entries of ${\mathbb K}^{(0)}$  are $c$-numbers).
Also, (\ref{rela2}) becomes
\begin{eqnarray}
&& \Big  [ \tilde  {\mathbb K}_2^{(1)},\  \tilde {\mathbb K}_1^{(m)}\Big ] = 
{\cal P}_{12}  \Big (\tilde {\mathbb K}_2^{(m)} \big  (\tilde {\mathbb K}_1^{(0)}+ \tilde {\mathbb K}_2^{(0)}  \big ) -   
\big  (\tilde {\mathbb K}_1^{(0)}+ \tilde {\mathbb K}_2^{(0)} \big  )   
 \tilde {\mathbb K}_1^{(m)}\Big ).
 \label{suba2} 
\end{eqnarray}
Given that ${\mathbb K}^{(0)}$ is a $c$-number 
matrix we conclude that expression (\ref{suba2}) for $m=1$  provides a closed algebra formed by the elements of ${\mathbb K}^{(1)}$. For $m=1$ and for ${\mathbb K}^{(0)} \propto  I$ (\ref{suba2}) gives the $\mathfrak{gl}_{\cal N}$ exchange relations (up to an overall multiplicative factor, 
which can be absorbed by rescalling the  generators). See also relevant results on tensor realizations of the sub-algebra in Corollary 
\ref{Proposition 11}.

Taking the trace of (\ref{suba2}) with respect to space 1 and using 
$\big [ \tilde {\mathbb K}_2^{(0)},\  \tilde{\mathbb K}_1^{(m)}\big ] =0$ 
we arrive at (\ref{commut}). 
\end{proof}

\section{Open quantum spin chains $\&$ associated symmetries}

\noindent  We consider in what follows  spin-chain like representations, 
i.e. we are focusing on tensor representations of the quadratic algebra (\ref{RE2}) (see also (\ref{rep1})): 
$L(\lambda) \to R(\lambda)$, $\ \hat L(\lambda) \to \hat R(\lambda)$ and ${\mathbb K}_{01} (\lambda|\theta_1) \to {\cal K}_{01}(\lambda|\theta_1) = R_{01}(\lambda-\theta_1)  K_0(\lambda) \hat R_{01}(\lambda+\theta_1),$ where recall $K(\lambda)$ is a $c$-number solution of the quadratic equation (\ref{RE2}),  $R(\lambda)$ is a solution of the Yang-Baxter equation and $\hat R(\lambda)$  is defined in (\ref{quadratic1}), (\ref{quadratic2}).  

We introduce  the open monodromy matrix ${\cal T}_{0;12...N}(\lambda|\{\theta_i\}) \in \mbox{End}(({\mathbb C}^{\cal N})^{\otimes (N+1)})$ \cite{Sklyanin}, which provides a tensor representation of (\ref{RE2}):
\begin{equation}
{\cal T}_{0;12...N}(\lambda|\{\theta_i\})=  T_{0;12...N}(\lambda|\{\theta_i\})\ K_0(\lambda)\ \hat T_{0;12...N}(\lambda|\{\theta_i\}), \label{modif}
\end{equation}
where $\{\theta_i\} := \{\theta_1, \ldots, \theta_N\}$ and the monodromy matrix\\ $T_{0;12...N}(\lambda|\{\theta_i\}) \in \mbox{End}(({\mathbb C}^{\cal N})^{\otimes (N+1)})$ is given by 
\begin{equation}
T_{0;12...N}(\lambda|\{\theta_i\}) = R_{0N}(\lambda-\theta_N) \cdots R_{02}(\lambda-\theta_2) R_{01}(\lambda-\theta_1) \label{monod}
\end{equation}
and satisfies (\ref{RTT}). Also, $\hat T_{0;12...N}(\lambda|\{\theta_i\}) = T_{0;12...N}^{-1}(-\lambda|\{\theta_i\})$ 
in the case of the reflection algebra
and $\hat T_{0;12...N}(\lambda|\{\theta_i\}) = T_{0;12...N}^{t_0}(-\lambda -{{\cal N} \over 2}|\{\theta_i\})$ in the case of twisted algebra. 
We shall consider henceforth in expression (\ref{monod}) $\theta_i =0,\ i \in \{1, \ldots, N\}$. 
Such a choice is justified by the fact that we wish to construct local Hamiltonians, based on the fact that 
$R(0) \propto {\cal P}$ (${\cal P}$ the permutation operator), as will be transparent in the next subsection. The fact that the monodromy  matrix $T$ satisfies the RTT relation and $K$ is a $c$-number solution of 
the refection equation guarantee that the modified monodromy ${\cal T}$ also 
satisfies the reflection equation, 
The elements of the  modified monodromy matrix are ${\cal T}_{x,y}(\lambda) =
\Delta^{(N)}({\cal K}_{x,y}(\lambda))$ (see also discussion in the first paragraph of subsection 4.3).  
We also define the open or double row transfer matrix \cite{Sklyanin} as
\begin{equation}
\mathfrak{t}(\lambda) = tr_0\big (\hat K_0 {\cal T}_0(\lambda) \big), \label{transfer}
\end{equation}
where $\hat K$ is a solution of a dual quadratic equation\footnote{The dual quadratic equation is similar to (\ref{RE2}), 
but $\lambda_i \to -\lambda_i - {{\cal N} \over 2}$ in the arguments of $R,\hat R$.} (\ref{RE2}).
Note that for historical reasons
the space indexed by $0$ is usually called the {\it auxiliary space}, whereas the spaces indexed by $1, 2, \ldots, N$ are called {\it quantum spaces}.
Notice also that the  quantum indices are suppressed in the definitions of  $T,\ \hat T$ and   ${\cal T}$ for brevity.

To prove integrability of the open spin chain,  constructed from the brace $R$-matrix and the corresponding $K$-matrices
we make use of the two  important properties  for the $R$-matrix, i.e. the unitarity and crossing-unitarity (\ref{u1}) and (\ref{u2}) respectively.
Indeed, using the fact that ${\cal T}$ and $\hat K$ satisfy the quadratic and dual equations (\ref{RE2}), and also $R$ satisfies
the fundamental properties (\ref{u2}), (\ref{tt}) it can be shown that (see \cite{Sklyanin, Doikou1} 
for detailed proofs on the commutativity of the open transfer matrices associated to both reflection and twisted algebras):
\begin{equation}
\Big [{\mathfrak t}(\lambda),\  {\mathfrak t}(\mu) \Big ] =0. \label{invo1}
\end{equation}

We focus henceforth on the reflection algebra only, and we  investigate the symmetries associated to the 
open transfer matrix for generic boundary conditions. The main goal in the context of quantum integrable systems
is the derivation of the eigenvalues and eigenstates of the
transfer matrix. This is in general an intricate task and the typical methodology used is the Bethe ansatz
formulation, or suitable generalizations \cite{Korepin, FadTak}.
 In the algebraic Bethe anastz scheme the symmetries of the transfer matrices and the existence of a reference state are essential
components. 
When an obvious reference state is not available, which is the typical scenario when considering set-theoretic solutions, 
certain Bethe ansatz generalizations  can be used.  Specifically, the methodology implemented by Faddeev and Takhtajan 
in \cite{FadTak} to solve the XYZ model, based on the application of local gauge (Darboux)
transformations at each site of the spin chain can be used.  The Separation of Variables technique, introduced by Sklyanin \cite{Sklyanin2}, 
and recently further developed for open quantum spin chains \cite{Kitanine},  can also be employed, in particular when addressing 
the issue of Bethe ansatz completeness, but also as a further consistency check. Moreover, we plan to generalize the findings of \cite{Maillet} on
the role of Drinfeld twists in the algebraic Bethe ansatz, for set-theoretic solutions. This will lead to new
significant connections, for instance with generalized Gaudin-type models.

\subsection{Symmetries of the open transfer matrix}
\noindent We shall prove in what follows some fundamental Propositions that will provide significant information on the 
symmetries of the double row transfer matrix (\ref{transfer}). Note that henceforth we consider $\hat K \propto I$
 in (\ref{transfer}).

Let us first prove a useful lemma for the brace $\check r$ matrix.
\begin{lemma} \label{trace0} 
Let $(X, \check {r})$ be a finite, involutive, non-degenerate set-theoretic solution of the 
Yang-Baxter equation (i.e. a solution obtained from a finite brace). 
Let $\check r$ be the brace matrix $\check r = \sum_{x, y\in X} e_{x, \sigma_x(y)} 
\otimes e_{y, \tau_y(x)}$, then $tr_0( \check r_{n0}) = I$.
\end{lemma}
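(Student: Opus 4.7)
The plan is to compute the partial trace directly. Using $(\check r)(x,z|y,w)=\delta_{z,\sigma_x(y)}\delta_{w,\tau_y(x)}$ and tracing the second tensor factor gives
\[
tr_0(\check r_{n0})=\sum_{x,y\in X}\delta_{y,\tau_y(x)}\,e_{x,\sigma_x(y)}.
\]
So the problem reduces to analysing the support $S=\{(x,y)\in X\times X:\tau_y(x)=y\}$ of this sum, and comparing it with the fixed-point set $F=\{(x,y):\sigma_x(y)=x\text{ and }\tau_y(x)=y\}$ of $\check r$ viewed as a permutation of $X\times X$. If one can show $F=S$ and that $(x,y)\mapsto x$ is a bijection $F\to X$, then every nonzero term becomes $e_{x,x}$ and the sum collapses to $I$.

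The argument I would assemble has three ingredients. First, an upper bound on $|S|$: non-degeneracy implies that for each fixed $y$ the map $x\mapsto\tau_y(x)$ is a bijection of $X$, so $\tau_y(x)=y$ admits at most one solution $x$, giving $|S|\le |X|$. Second, a matching lower bound on $|F|$, obtained by passing to the underlying brace via Rump's theorem (Theorem \ref{Rump}). There one has $\sigma_x(y)=\lambda_x(y):=x\circ y-x$, which is a bijection and, crucially, an automorphism of the additive group $(B,+)$ (by left distributivity, $\lambda_x(y+z)=x\ast y+x\ast z+y+z=\lambda_x(y)+\lambda_x(z)$), while $\tau_y(x)=t\circ x-t$ with $t=\sigma_x(y)^{\circ(-1)}$. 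For each $x$ one solves $\lambda_x(y)=x$ uniquely for $y=y^{\ast}(x)$; and the identity $\lambda_x(x^{\circ(-1)})=x\circ x^{\circ(-1)}-x=-x$, combined with additivity of $\lambda_x^{-1}$, yields $y^{\ast}(x)=-x^{\circ(-1)}$. Substituting $\sigma_x(y^{\ast}(x))=x$ into Rump's formula then gives $\tau_{y^{\ast}(x)}(x)=x^{\circ(-1)}\circ x-x^{\circ(-1)}=-x^{\circ(-1)}=y^{\ast}(x)$, so $(x,y^{\ast}(x))\in F$ for every $x\in X$ and hence $|F|\ge |X|$.

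Third, a pigeonhole step: since $F\subseteq S$ and $|X|\le |F|\le |S|\le |X|$, equality holds throughout, so $F=S$. In particular the implication $\tau_y(x)=y\Rightarrow\sigma_x(y)=x$ holds automatically, the projection $F\to X$, $(x,y)\mapsto x$, is a bijection, and therefore
\[
tr_0(\check r_{n0})=\sum_{(x,y)\in F}e_{x,x}=\sum_{x\in X}e_{x,x}=I.
\]

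The main obstacle is the brace calculation in the second ingredient, namely showing that the unique $y$ solving $\sigma_x(y)=x$ is automatically a $\tau$-fixed point at $x$; all the rest is routine manipulation of a $0/1$ matrix together with a cardinality argument. The virtue of the pigeonhole step is that it sidesteps the more awkward-looking direct implication $\tau_y(x)=y\Rightarrow\sigma_x(y)=x$ by converting the whole problem into a single counting comparison.
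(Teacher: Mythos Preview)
Your proof is correct. The paper follows the same overall shape---compute the partial trace as a sum over your set $S$ (the paper calls it $W$), then argue that each term is actually $e_{x,x}$ and that the first projection $S\to X$ is a bijection---but it establishes the implication $\tau_y(x)=y\Rightarrow\sigma_x(y)=x$ directly rather than via your brace-and-pigeonhole detour. The paper is terse at this step, but the underlying argument is: if $\tau_y(x)=y$ then $\check r(x,y)=(\sigma_x(y),y)$, and applying $\check r$ once more (involutivity) returns $(x,y)$, so in particular $\tau_y(\sigma_x(y))=y=\tau_y(x)$; since $\tau_y$ is a bijection, $\sigma_x(y)=x$. Thus the ``awkward-looking direct implication'' you deliberately sidestep is in fact a two-line consequence of involutivity plus non-degeneracy, and no brace computation is required. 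Your route has the virtue of making the brace structure explicit (the formula $y^{\ast}(x)=-x^{\circ(-1)}$ is pleasant in its own right), but at the cost of invoking Theorem~\ref{Rump} and a somewhat delicate calculation where the elementary argument suffices.
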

\begin{proof} Let $(X, {\check r})$ be our underlying set-theoretic solution.  
Recall that\\ $\check r=\sum_{x,y\in X}e_{x, \sigma _{x}(y)}\otimes e_{y, \tau_{y}(x)}.$ 
Observe that \[tr  _{0}(\check r _{n0})=\sum_{(x,y)\in W} e_{x, \sigma _{x}(y)},\] where $(x,y)\in W$  if and only if  $y=\tau_{y}(x)$. 
 Notice that if $(x,y)\in W$ then $x=\tau _{y}^{-1}(y)$.
Observe that  $\tau _{y}^{-1}(y)$  is always in the set $X$ (because our sets are finite so the inverse of map $\tau $ is some power of map $\tau $), 
so for each $y$ there exist $x$ such that $(x,y)\in W$.
This implies that  that for each $y$ in $X$ there is exactly one $x$ in $X$ such that $(x,y)$ is in $W$, we will  denote this $x$ as $x_{[y]}$.
 This  implies that $tr  _{0}(\check r_{n0})=\sum_{y\in X} e_{x_{[y]}, \sigma _{x_{[y]}}(y)}.$
 We notice that $\sigma _{x_{[y]}}(y)=x_{[y]}$,  it follows from the fact that  $(x_{[y]}, y)$ is in $W$.
Consequently, $tr  _{0}(\check r_{n0})=\sum_{y\in X} e_{x_{[y]}, x_{[y]}}.$
 We notice further that if $(x,y)$ in $W$  and $(x,z)$ in $W$ then $y=z$, so for each
$ x$ there is exactly one $y$ such that $(x,y)$ is in $W$.
Therefore,  \[tr  _{0}(\check r _{n0})=\sum_{z\in X} e_{z, z}\]  (where $z$ equals  elements $x_{[y]}$ for different $y$).
Hence, that  $tr  _{0}(\check r _{n0})=I$ where recall $I$ is the identity matrix of dimension equal to the cardinality of $X$.
\end{proof}

The following Proposition is quite general and holds 
for any $R(\lambda) =  \lambda {\mathcal P}\check r +{\mathcal P}$,  and 
$K(\lambda) = \lambda c ({\mathrm b} - {\kappa \over 2}I) + I$, ($c$ is an arbitrary constant and 
$\kappa =  Q- Q^{-1}$, see also Remark \ref{rem5}).
Also, $\check r$ and ${\mathrm b}$  provide a representation of the $B$-type 
Hecke algebra ${\cal B}_N(q=1, Q)$, and ${\mathcal P}$ is the permutation operator.
Recall we consider $\hat K =I$ in the definition of the open transfer matrix (\ref{transfer}).
 
\begin{pro}\label{Traces} Let $R(\lambda) =  \lambda {\mathcal P}\check r +{\mathcal P}$,  and 
$K(\lambda) = \lambda c ({\mathrm b} - {\kappa \over 2}I) + I,$ where $\check r$ and ${\mathrm b}$ provide representations of the the B-type Hecke algebra ${\cal B}_N(q=1, Q)$  and ${\cal P}$ is the permutation operator ($c$ is an arbitrary constant and 
$\kappa =  Q- Q^{-1}$).  Consider the $\lambda$-series expansion of the corresponding modified monodromy matrix (\ref{modif})  : ${\cal T}(\lambda) = 
\lambda^{2N+1}\sum_{k=0}^{2N+1}{{\mathcal T}^{(k)} \over \lambda^k}$, and the series expansion of the double row transfer matrix 
${\mathfrak t}(\lambda) = \lambda^{2N+1} \sum_{k=0}^{2N+1}{ {\mathfrak t}^{(k)} \over \lambda^k}$, where
${\mathfrak t}^{(k)} = tr_0 ({\mathcal T}_0^{(k)})$. 
Then the commuting quantities, ${\mathfrak t}^{(k)}$ for $k = 1, \ldots, 2N+1$, are expressed exclusively in terms 
of the elements $\check r_{n n+1}$, $n = 1, \ldots, N-1$, and ${\mathrm b}_1$, provided that $tr_0(\check r_{N0}) = I$.
\end{pro}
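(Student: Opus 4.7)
The plan is to expand $\mathcal{T}_0(\lambda) = T_0(\lambda)K_0(\lambda)\hat T_0(\lambda)$ as a polynomial of degree $2N+1$ in $\lambda$ using the building blocks $R_{0n}(\lambda) = \lambda r_{0n} + \mathcal{P}_{0n}$ (where $r_{0n} = \mathcal{P}_{0n}\check{r}_{0n}$) and $K_0(\lambda) = \lambda c(\mathrm{b}_0 - \tfrac{\kappa}{2}I) + I$. Since $\check{r}^2 = I^{\otimes 2}$ for the Hecke algebra at $q=1$, unitarity gives $\check{R}(\lambda)\check{R}(-\lambda)=(1-\lambda^2)I^{\otimes 2}$, so that $\hat T_0(\lambda) \propto R_{10}(\lambda) R_{20}(\lambda)\cdots R_{N0}(\lambda)$ up to the scalar $(1-\lambda^2)^{-N}$; absorbing this normalization into the definition of $\mathcal{T}_0$, each coefficient $\mathcal{T}_0^{(k)}$ becomes a sum of ordered monomials in which exactly $2N+1-k$ of the $2N+1$ slots contribute their linear-in-$\lambda$ parts and the remaining slots contribute their constant parts, either some $\mathcal{P}_{0n}$ or the identity from $K$.

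The central computational tool is the sequence of elementary trace identities
\[
tr_0(\mathcal{P}_{0n}) = I, \qquad tr_0(\mathcal{P}_{0n}\mathrm{b}_0) = \mathrm{b}_n,\qquad tr_0(\mathcal{P}_{0n}\check{r}_{0m}) = \check{r}_{nm},
\]
together with the permutation relations $\mathcal{P}_{0a}\mathcal{P}_{0b} = \mathcal{P}_{ab}\mathcal{P}_{0b} = \mathcal{P}_{0a}\mathcal{P}_{ab}$ and the involution $\mathcal{P}_{0n}^2=I$. The strategy is: in each monomial contributing to $\mathcal{T}_0^{(k)}$ with $k\geq 1$, at least one slot contributes a constant part, which, after cyclic rearrangement of the trace and suitable propagation of the auxiliary permutations, provides exactly the $\mathcal{P}_{0n}$ needed to trigger one of the above identities. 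The trace over space $0$ then collapses onto a product of local operators $\check{r}_{n,m}$ and $\mathrm{b}_n$ acting on the quantum spaces, which via the Hecke algebra relations (braid relations for $\check{r}$, the boundary braid relation (\ref{b1}) for $\mathrm{b}$, and commutations for operators with disjoint support) is rewritten as a word in the generators $\check{r}_{k,k+1}$ and $\mathrm{b}_1$.

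Lemma \ref{trace0} enters precisely in those monomials in which the outermost factor $R_{N0}$ contributes its $\check{r}$-part while no partner $\mathcal{P}_{0N}$ survives to absorb the auxiliary space, leaving an isolated factor of $tr_0(\check{r}_{N0})$ at the far boundary; the identity $tr_0(\check{r}_{N0})=I$ removes this residual contribution and closes the reduction. The leading coefficient $\mathfrak{t}^{(0)}$ is the sole exception, because when every one of the $2N+1$ slots contributes its linear part there are no $\mathcal{P}_{0n}$'s available at all, the three trace identities above become inapplicable, and Lemma \ref{trace0} cannot be brought into the isolated form in which it acts. Hence $\mathfrak{t}^{(0)}$ is not expressible through $\check{r}_{n,n+1}$ and $\mathrm{b}_1$ alone, whereas every $\mathfrak{t}^{(k)}$ with $k\geq 1$ is.

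The main obstacle is the combinatorial bookkeeping of the many monomials contributing to each $\mathcal{T}_0^{(k)}$ and the systematic reduction of the non-adjacent operators $\check{r}_{n,m}$ and $\mathrm{b}_n$ produced by the trace to words in the generators $\check{r}_{k,k+1}$ and $\mathrm{b}_1$ via the defining relations of ${\cal B}_N(q=1,Q)$. A natural way to organise the argument is an induction on the number of $\mathcal{P}_{0n}$ insertions, processing the quantum sites from $n=N$ inwards, so that at each step one auxiliary-space factor is converted by the trace identities above into a Hecke algebra element on the quantum spaces, with the boundary contribution at site $N$ absorbed by Lemma \ref{trace0}.
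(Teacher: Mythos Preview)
Your overall plan is reasonable, but there is a genuine gap at the step where you claim that the operators produced by the trace,
\[
\check r_{n,m}\ \ (n,m \text{ not necessarily adjacent}),\qquad {\mathrm b}_n\ \ (n\ \text{arbitrary}),
\]
can be ``rewritten as a word in the generators $\check r_{k,k+1}$ and ${\mathrm b}_1$'' using the Hecke relations. This is not true: an operator such as $\check r_{1,3}$ acts nontrivially only on spaces $1$ and $3$, and no word in $\check r_{12},\check r_{23}$ has this support; similarly ${\mathrm b}_n$ for $n>1$ is not in the $B$-type Hecke algebra at all. The braid relation, the quadratic relation and the boundary relation (\ref{b1}) only relate words in the \emph{generators}; they do not give you access to $\check r_{n,m}$ or ${\mathrm b}_n$ as elements of that algebra. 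So if your trace step really produced such operators, you would be stuck.

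What makes the proof go through is a different mechanism, applied \emph{before} the trace. Writing $R_{0n}={\cal P}_{0n}\check R_{0n}$ and pushing the factors ${\cal P}_{0j}$ to one end of the product converts every $\check r_{0j}$ into a nearest-neighbour $\check r_{j-1,j}$ already at the level of the monodromy. The outcome (Proposition~4.1 of \cite{DoiSmo}) is the factorisation
\[
T_0^{(N-k)}=\bigl({\mathfrak T}^{(N-k-1)}+\check r_{N0}\,{\mathfrak T}^{(N-k)}\bigr){\cal P}_{01}\Pi,\qquad
\hat T_0^{(N-l)}=\hat\Pi\,{\cal P}_{01}\bigl(\hat{\mathfrak T}^{(N-l-1)}+\hat{\mathfrak T}^{(N-l)}\check r_{N0}\bigr),
\]
where the ${\mathfrak T}$'s are already sums of ordered products of nearest-neighbour $\check r_{n,n+1}$'s and $\Pi={\cal P}_{12}\cdots{\cal P}_{N-1\,N}$. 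Then ${\cal P}_{01}\Pi\,K_0\,\hat\Pi{\cal P}_{01}=K_1$, so the boundary element appears as ${\mathrm b}_1$ directly, with no reduction required. After this, the only residual auxiliary-space dependence is a possible $\check r_{N0}$ on each side; $tr_0(\check r_{N0})=I$ handles three of the four resulting terms, and the ``sandwich'' term $tr_0(\check r_{N0}\cdots\check r_{N0})$ is dealt with by a short case analysis using only the braid relation and $\check r^2=I$. Your inductive scheme based on counting ${\cal P}_{0n}$ insertions could be made to work, but only if you build in this pull-through so that nearest-neighbour $\check r$'s and ${\mathrm b}_1$ are produced from the outset; the Hecke relations alone will not do it.
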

\begin{proof}
Let $T(\lambda) = \lambda^N \sum_{k=0}^N {T^{(k)} \over \lambda^k}$, $k \in \{0,1,\ldots, N \}$.
Let us also introduce some useful notation:
\begin{equation}
{\mathfrak T}^{(N-k-1)} = \sum_{[n_k, n_1]} \prod_{1\leq j \leq k}^{\leftarrow}\check  r_{n_j  n_{j}+1}, \qquad 
\hat {\mathfrak T}^{(N-k-1)} = \sum_{[n_k, n_1]} \prod_{1\leq j \leq k}^{\rightarrow} \check  r_{n_j  n_{j}+1}, \nonumber
\end{equation}
where we define $[n_k, n_1]: 1\leq n_k < ...<n_1\leq N-1$, and the ordered products are given as 
$\prod_{1\leq j\leq k}^{\rightarrow} \check r_{n_j  n_{j}+1 }  =\check  r_{n_k n_k+1}  
\check r_{n_2 n_2+1} \ldots \check r_{n_1 n_1+1}$, $\ \prod_{1\leq j\leq k}^{\leftarrow} \check r_{n_j  n_{j}+1 }  =\check  r_{n_1 n_1+1}  
\check r_{n_2 n_2+1} \ldots \check r_{n_k n_k+1}$, $\ n_1>n_2>\ldots n_k$.

In the proof of Proposition 4.1 in \cite{DoiSmo} all the members of the expansion of the monodromy $T^{(k)}$, were computed 
using the notation introduced above and the definition of the monodromy, and were expressed as:
$T_0^{(N-k)} = \Big ( {\mathfrak T}^{(N-k-1)} +  \check r_{N0}   {\mathfrak T}^{(N-k)}\Big ) {\mathcal P}_{01} \Pi$,
and similarly: $\hat T_0^{(N-k)} = \hat \Pi  {\mathcal P}_{01}\Big ( \hat {\mathfrak T}^{(N-k-1)} +    \hat {\mathfrak T}^{(N-k)} \check r_{N0} \Big ) $,
where $\Pi = {\cal P}_{12} \ldots {\cal P}_{N-1 N}$ and $\hat \Pi= {\cal P}_{N-1N} {\cal P}_{N-2 N-1} \ldots {\cal P}_{12}$.

Let us  also express the $c$-number $K$-matrix (\ref{baxter}) (derived up to an overall constant) as:
$K(\lambda) = \lambda \hat {\mathrm b} + I$,
where  $\hat {\mathrm b} =c \big ({\mathrm  b} -{\kappa \over 2}I\big )$ (see also (\ref{baxter})),
and recall here $\hat K = I$.
Also, in accordance to the expansion of the monodromy matrix in the previous section we express the modified monodromy 
as a formal series expansion: ${\cal T}(\lambda) =\lambda^{2N +1}\sum_k {{\cal T}^{(k)} \over \lambda^k} $,
then each term of the expansion is expressed as:
\begin{equation}
{\cal T}_0^{(2N-n+1)} = \sum_{k, l} T_0^{(N-k)} \hat {\mathrm b}_0 \hat T_0^{(N-l)}|_{k+l = n-1} + 
\sum_{k, l}T_0^{(N-k)}  \hat T_0^{(N-l)}|_{k+l = n}. \label{CT}
\end{equation}
After taking the trace and using the fact the $tr_0 (\check r_{N0}) = I$ we conclude for the first term of the expression (\ref{CT}) above:
\begin{eqnarray}
 tr_0\big(T_0^{(N-k)} \hat {\mathrm b}_0 \hat T_0^{(N-l)}\big)_{k+l = n-1}&=& 
{\mathfrak T}^{(N-k)} \hat {\mathrm b}_1   \hat {\mathfrak T}^{(N-l-1)}  +  {\mathfrak T}^{(N-k-1)} 
\hat {\mathrm b}_1   \hat {\mathfrak T}^{(N-l)}  \nonumber\\ 
&  &+  {\cal N} {\mathfrak T}^{(N-k-1)} \hat {\mathrm b}_1 \hat {\mathfrak T}^{(N-l-1)} \nonumber\\ & &+ 
tr_0 \big ( \check r_{N0} {\mathfrak T}^{(N-k)} \hat {\mathrm b}_1 \hat {\mathfrak T}^{(N-l)} \check r_{N0}\big ). \label{CT2}
\end{eqnarray}
Analogous expression is derived for the second term in (\ref{CT}), given that $\hat {\mathrm b} \to I$ and $k+l =n$
in the expression above. The first three terms of  (\ref{CT2}) 
are clearly expressed only in terms of the elements of the $B$-type Hecke algebra $\check r_{n n+1},\ {\mathrm b}_1$ (recall  
$\hat {\mathrm b} =c \big ({\mathrm  b} -{\kappa \over 2}I\big )$). Let us focus on the last term:
$ tr_0 \big ( \check r_{N0} {\mathfrak T}^{(N-k)} \hat {\mathrm b}_1 \hat {\mathfrak T}^{(N-l)} \check r_{N0}\big ) = 
 tr_0 \Big (\check r_{N0} \big ({\cal A}  + {\cal B} + {\cal C} + {\cal D} \big ) \check r_{N0}\Big )$,
where we define
\begin{eqnarray}
&&{\cal A} =\sum_{[n_{k-1}, n_1]} \prod_{1 \leq j \leq k-1}^{\leftarrow}\check r_{n_j n_j+1} \hat {\mathrm b}_1 
\sum_{[m_{l-1},  m_1]} \prod_{1 \leq j' \leq l-1}^{\rightarrow}\check r_{m_{j'} m_{j'}+1}  \label{4t}\\
&& {\cal B} =\sum_{[ n_{k-1}, n_1) } \prod_{1 \leq j \leq k-1}^{\leftarrow}\check r_{n_j n_j+1} \hat {\mathrm b}_1 
\sum_{[ m_{l-1},  m_1 ]} \prod_{1 \leq j' \leq l-1}^{\rightarrow}\check r_{m_{j'} m_{j'}+1} \nonumber \\
&&{\cal C} =\sum_{[ n_{k-1}, n_1]} \prod_{1 \leq j \leq k-1}^{\leftarrow}\check r_{n_j n_j+1} \hat {\mathrm b}_1 
\sum_{ [ m_{l-1}, m_1 )} \prod_{1 \leq j' \leq l-1}^{\rightarrow}\check r_{m_{j'} m_{j'}+1} \nonumber \\
&& {\cal D} =\sum_{[n_{k-1}, n_1)} \prod_{1 \leq j \leq k-1}^{\leftarrow}\check r_{n_j n_j+1} \hat {\mathrm b}_1 
\sum_{[m_{l-1}, m_1 )} \prod_{1 \leq j' \leq l-1}^{\rightarrow}\check r_{m_{j'} m_{j'}+1}, \nonumber
\end{eqnarray}
and $[n_{k-1}, n_j): 1\leq n_{k-1}, < ... < n_j < N-1$, and $[n_{k-1}, n_j]: 1\leq n_{k-1}, < ... < n_j = N-1$.
The last three terms above (${\cal B},\ {\cal C},\  {\cal D}$) lead to the following expressions, after using the braid relation,
 involution and the fact that $tr_0 (\check r_{N0})=I$:
\begin{eqnarray}
tr_0 \Big (\check r_{N0}  {\cal B} \check r_{N0}\Big ) = \sum_{ [n_{k-1}, n_1 ) } \prod_{1 \leq j \leq k-1}^{\leftarrow}
\check r_{n_j n_j+1} \hat {\mathrm b}_1 
\sum_{[m_{l-1}, m_2)} \prod_{2 \leq j' \leq l-1}^{\rightarrow}\check r_{m_{j'} m_{j'}+1} \nonumber\\ 
tr_0 \Big (\check r_{N0}  {\cal C} \check r_{N0}\Big ) = \sum_{[n_{k-1}, n_2 ) } \prod_{2 \leq j \leq k-1}^{\leftarrow}
\check r_{n_j n_j+1} \hat {\mathrm b}_1 
\sum_{[m_{l-1}, m_1)} \prod_{1 \leq j' \leq l-1}^{\rightarrow}\check r_{m_{j'} m_{j'}+1} \nonumber\\ 
tr_0 \Big (\check r_{N0}  {\cal D} \check r_{N0}\Big ) = {\cal N} \sum_{[n_{k-1}, n_1 ) } \prod_{1 \leq j \leq k-1}^{\leftarrow}
\check r_{n_j n_j+1} \hat {\mathrm b}_1 
\sum_{[m_{l-1}, m_1)} \prod_{1 \leq j' \leq l-1}^{\rightarrow}\check r_{m_{j'} m_{j'}+1}. \nonumber
\end{eqnarray}
The terms above clearly they depend only on $\check r_{n n+1},\ {\mathrm b}_1$. Let us now focus on the more complicated first term of (\ref{4t}), and consider:
\begin{eqnarray}
& & tr_0 \Big (\check r_{N0}  {\cal A} \check r_{N0}\Big ) =\sum_{[n_{k-1}, n_1 ) } \prod_{1 \leq j \leq k-1}^{\leftarrow}
\check r_{n_j n_j+1} \hat {\mathrm b}_1 
\sum_{[m_{l-1}, m_1)} \prod_{1 \leq j' \leq l-1}^{\rightarrow}\check r_{m_{j'} m_{j'}+1}  = \nonumber\\ 
& & \sum_{[n_{k-1}, n_1 ) }  \prod_{k'+1 \leq j \leq k-1}^{\leftarrow}
\check r_{n_j n_j+1} \hat {\mathrm b}_1\ t r_0 \Big ( \check r_{N0}\prod_{1 \leq j \leq k'}^{\leftarrow}
\check r_{n_j n_j+1}\Big |_{c_j =0, c_{k'}>0} \nonumber\\ 
& & \times \sum_{[m_{l-1}, m_1)} \prod_{1 \leq j' \leq l'}^{\rightarrow}\check r_{m_{j'} m_{j'}+1}  \check r_{N0} \Big )
\prod_{l'+1 \leq j' \leq l-1}^{\rightarrow}\check r_{m_{j'} m_{j'}+1}\Big |_{c_{j'} =0, c_{l'}>0}.   \nonumber
\end{eqnarray}

We distinguish the following cases:
\begin{enumerate}
\item {\bf$ l' = k'$}, then
\begin{eqnarray}
& & tr_0 \Big (\check r_{N0}\prod_{1 \leq j \leq k'}^{\leftarrow}
\check r_{n_j n_j+1} \prod_{1 \leq j' \leq k'}^{\rightarrow}\check r_{m_{j'} m_{j'}+1}  \check r_{N0}\Big ) = {\cal N} I^{\otimes k'}.\nonumber
\end{eqnarray}

\item {\bf $ |l' -k'| =1$}, then
\begin{eqnarray}
& & tr_0 \Big (\check r_{N0}\prod_{1 \leq j \leq k'}^{\leftarrow}
\check r_{n_j n_j+1} \prod_{1 \leq j' \leq k'}^{\rightarrow}\check r_{m_{j'} m_{j'}+1}\  \check r_{N0}\Big ) =  I^{\otimes m}.\nonumber
\end{eqnarray}
where $m = max(k', l')$

\item {\bf $ k' -l' =m+1$}, then
\begin{eqnarray}
& & tr_0 \Big (\check r_{N0}\prod_{1 \leq j \leq k'}^{\leftarrow}
\check r_{n_j n_j+1} \prod_{1 \leq j' \leq k'}^{\rightarrow}\check r_{m_{j'} m_{j'}+1}\  \check r_{N0}\Big ) = 
 \prod^{\leftarrow}_{l'+2\leq j \leq l'+m+1} \check r_{n_j n_j +1}\Big |_{c_j =0}\nonumber
\end{eqnarray}

\item {\bf $ l' -k' =m+1$}, then
\begin{eqnarray}
& & tr_0 \Big (\check r_{N0}\prod_{1 \leq j \leq k'}^{\leftarrow}
\check r_{n_j n_j+1} \prod_{1 \leq j' \leq k'}^{\rightarrow}\check r_{m_{j'} m_{j'}+1}\  \check r_{N0}\Big ) = 
 \prod^{\rightarrow}_{k'+2\leq j \leq k'+m+1} \check r_{n_j n_j +1} \Big|_{c_j =0},\nonumber
\end{eqnarray}
where we define $c_j = n_j -n_{j+1 }-1$.
\end{enumerate}

It is thus clear that the factor $tr_0 \big ( \check r_{N0} {\cal A} \check r_{N0} \big )$ is also expressed in terms of the elements 
$\check r_{n n+1}$ and ${\mathrm b}_1$.
Indeed, then all the factors ${\mathfrak t}^{(k)}$, $k \in \{1, \ldots, 2N+1\} $ are expressed in terms of  $\check r_{n n+1},\ {\mathrm b}_1$.
However, the term\\ ${\mathfrak t}^{(0)} = tr_0 \big ( \check r_{N0} \check r_{N-1 N} \ldots \check r_{12} \hat {\mathrm b}_1 
\check r_{12} \ldots \check r_{N-1 N} \check r_{N0}\big )$
can not be expressed in the general case in terms of $\check r_{n n+1},\ {\mathrm b}_1$.  Notice that in the special case where 
${\mathrm b} =I$ we obtain ${\mathfrak t}^{(0)} \propto I^{\otimes N}$

The local Hamiltonian of the system for instance  is given by the following explicit expression
\begin{equation}
{\mathfrak t}^{(2N)}=  tr_0 ({\cal T}_0^{(2N)}) = 2 \sum_{n=1}^{N-1}\check  r_{n n+1} + \hat {\mathrm b}_1 + 
2 tr_0(\check r_{N0}). 
\label{Hamopen}
\end{equation}
\end{proof}

We prove below a useful Lemma:
\begin{lemma}\label{lemma1} The elements ${\mathfrak T}^{(i)}$ and $\hat {\mathfrak T}^{(i)}$, $\  i \in \{0,\ 1\}$, introduced on Proposition \ref{Traces},
satisfy the following relations with the $A$-type Hecke algebra ${\cal H}_{N}(q=1)$ elements $\check r_{n n+1}$:
\begin{eqnarray}
&& {\mathfrak T}^{(i)} \check r_{n n+1} = \check r_{n-1 n} {\mathfrak T}^{(i)},  \qquad n \in \{2, \ldots N-1 \}\nonumber\\
&& \hat {\mathfrak T}^{(i)} \check r_{n n+1} = \check r_{n+1 n+2} \hat {\mathfrak T}^{(i)},   \qquad n \in \{1, \ldots N-2 \} \nonumber
\end{eqnarray}
\end{lemma}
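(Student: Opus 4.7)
The plan is to exploit the three defining relations of the $A$-type Hecke algebra ${\cal H}_N(q=1)$ satisfied by the $\check r_{n n+1}$'s: locality $[\check r_{i i+1},\check r_{j j+1}]=0$ for $|i-j|\geq 2$, the braid relation $\check r_{n n+1}\check r_{n-1 n}\check r_{n n+1}=\check r_{n-1 n}\check r_{n n+1}\check r_{n-1 n}$, and the involution $(\check r_{n n+1})^2=I$ (valid for brace-type representations, cf.\ Theorem \ref{Rump}).

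First I would dispose of the case $i=0$, where ${\mathfrak T}^{(0)}=\check r_{N-1\,N}\check r_{N-2\,N-1}\cdots\check r_{12}$ is a single ordered product. To transform ${\mathfrak T}^{(0)}\check r_{n n+1}$ into $\check r_{n-1 n}{\mathfrak T}^{(0)}$, I would proceed in three stages: push the trailing $\check r_{n n+1}$ leftward through the commuting low-index factors $\check r_{12},\ldots,\check r_{n-2\,n-1}$ (by locality), apply the braid relation to rewrite the resulting configuration $\check r_{n n+1}\check r_{n-1 n}\check r_{n n+1}$ as $\check r_{n-1 n}\check r_{n n+1}\check r_{n-1 n}$, and finally transport the now-leading $\check r_{n-1 n}$ further left past the commuting high-index factors $\check r_{n+1\,n+2},\ldots,\check r_{N-1\,N}$ to the outermost position. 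The hypothesis $n\in\{2,\ldots,N-1\}$ is exactly what guarantees that both $\check r_{n-1 n}$ and $\check r_{n+1 n+2}$ occur inside the product range.

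For $i=1$, I would write ${\mathfrak T}^{(1)}=\sum_{j=1}^{N-1}P_j$, where $P_j$ is the product obtained from ${\mathfrak T}^{(0)}$ by deleting the factor $\check r_{j j+1}$. For $j\notin\{n-1,n,n+1\}$ the three-stage argument above applies verbatim on each $P_j$, since the commutations and the braid move involve only generators whose indices differ from $j$; this yields $P_j\check r_{n n+1}=\check r_{n-1 n}P_j$ term-by-term. The three exceptional terms require a crossing argument: direct manipulation, using only locality together with $(\check r_{n n+1})^2=I$, produces
\begin{align*}
P_{n+1}\check r_{n n+1}&=\check r_{n-1 n}P_{n+1},\\
P_{n-1}\check r_{n n+1}&=\check r_{n-1 n}P_{n},\\
P_{n}\check r_{n n+1}&=\check r_{n-1 n}P_{n-1}.
\end{align*}
The first is proved by mimicking the $i=0$ argument (the absent $\check r_{n+1 n+2}$ plays no role); the last two are each verified by showing that both sides collapse, via the involution, to a common ``defective'' product $\check r_{N-1 N}\cdots\check r_{n+1 n+2}\check r_{n-2 n-1}\cdots\check r_{12}$ (respectively, with an inserted $\check r_{n-1 n}\check r_{n n+1}$ in the middle). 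Summing over $j$ the three exceptional contributions reassemble, via the pairing $P_{n-1}\leftrightarrow P_n$ with $P_{n+1}$ fixed, into $\check r_{n-1 n}(P_{n-1}+P_n+P_{n+1})$, establishing the identity.

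The hat-version $\hat{\mathfrak T}^{(i)}\check r_{n n+1}=\check r_{n+1\,n+2}\hat{\mathfrak T}^{(i)}$ follows by the mirror argument applied to the rightward product $\hat{\mathfrak T}^{(0)}=\check r_{12}\check r_{23}\cdots\check r_{N-1 N}$, with the roles of ``leftward'' and ``rightward'' interchanged; the braid relation then intertwines $\check r_{n n+1}$ with $\check r_{n+1\,n+2}$ rather than $\check r_{n-1 n}$. The main obstacle is the $i=1$ case at the three defective terms: because the pivot generator needed for the braid move is exactly the one that has been deleted, the identity cannot hold term-by-term, and the entire content of the lemma is the combinatorial observation that the three exceptional $P_j$'s close among themselves under left multiplication by $\check r_{n n+1}$ and right multiplication by $\check r_{n-1 n}$, so the intertwiner emerges only after the sum is taken.
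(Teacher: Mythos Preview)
Your proposal is correct and matches the paper's own proof essentially verbatim. The paper also splits $\mathfrak{T}^{(1)}$ into the generic terms (your $P_j$ with $j\notin\{n-1,n,n+1\}$, their ${\mathrm A}$) and the three exceptional terms ${\mathrm B}=P_{n-1}$, ${\mathrm C}=P_{n+1}$, ${\mathrm D}=P_n$, and records the same pairing ${\mathrm B}\check r_{n n+1}=\check r_{n-1 n}{\mathrm D}$, ${\mathrm D}\check r_{n n+1}=\check r_{n-1 n}{\mathrm B}$, ${\mathrm C}\check r_{n n+1}=\check r_{n-1 n}{\mathrm C}$; the $i=0$ case and the $\hat{\mathfrak T}^{(i)}$ mirror argument are likewise dispatched exactly as you describe.
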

\begin{proof}
The proof is straightforward for ${\mathfrak T}^{(0)},\ \hat {\mathfrak T}^{(0)}$  
due to the form of  ${\mathfrak T}^{(0)},\ \hat {\mathfrak T}^{(0)}$ and the use of the braid relation.

For ${\mathfrak T}^{(1)},\ \hat {\mathfrak T}^{(1)}$ the proof is a bit more involved. Let us focus on ${\mathfrak T}^{(1)}$ acting on $\check r_{n n+1}$, 
which can be explicitly expressed as
\begin{equation}
{\mathfrak T}^{(1)} \check r_{n n+1} = \Big ( {\mathrm A}+  {\mathrm B}+ {\mathrm C} +  {\mathrm D} \Big ) \check r_{n n+1}
\end{equation}
where we define: ${\mathrm A} = \sum_{m} \check r_{N-1 N} \ldots \check r_{m+1 m+2} \check r_{m-1 m} 
 \ldots \check r_{12}$ for  $\  m\geq n+2$ or $m\leq n-2$,
 $\ {\mathrm B}  = \check r_{N-1 N} \ldots \check r_{n n+1} \check r_{n-2 n-1} \ldots \check r_{12}$, 
$\ {\mathrm C} =  \check r_{N-1 N} \ldots \check r_{n+2 n+1} \check r_{n n+1} \ldots \check r_{12}$ and
$\ {\mathrm D} =  \check r_{N-1 N} \ldots \check r_{n+12 n+2} \check r_{n-1 n} \ldots \check r_{12}$.

Using the braid relations and the fact that $\check r^2 = I^{\otimes 2}$, we show that:
${\mathrm A} \check r_{n n+1} =  \check r_{n-1 n}  {\mathrm A}$,  
$\ {\mathrm B} \check r_{n n+1} =  \check r_{n-1 n}  {\mathrm D}$, 
$\ {\mathrm C} \check r_{n n+1} =  \check r_{n-1 n}  {\mathrm C}$  and $\ {\mathrm D} \check r_{n n+1} =  \check r_{n-1 n}  {\mathrm B}$,
which immediately lead to ${\mathfrak T}^{(1)} \check r_{n n+1} = \check r_{n-1 n} {\mathfrak T}^{(1)}$,  $\ n \in \{2, \ldots N-1 \}$. 

The proof for $\hat {\mathfrak T}^{(1)}$is in exact analogy,  so we omit the details here for brevity.
\end{proof}

For the rest of the section we focus on representations of the the $B$-type Hecke algebra ${\cal B}_N(q=1, Q=1).$

\begin{pro} \label{Proposition 9} Let $R(\lambda) =  \lambda {\mathcal P}\check r +{\mathcal P}$,  and 
$K(\lambda) =  \lambda c{\mathrm b} +I $ ($c$ is an arbitrary constant),
where $\check r$ and ${\mathrm b}$  provide a representation of the $B$-type 
Hecke algebra ${\cal B}_N(q=1, Q=1),$ and ${\mathcal P}$ is the permutation operator. 
The elements of ${\cal T}^{(i)}$, $\ i \in \{0,\ 1\}$,  introduced on Proposition \ref{Traces},
commute with the $B$-type Hecke algebra ${\cal B}_N(q=1, Q=1)$ generators:
\begin{equation}
\Big [{\cal T}_{x,y}^{(i)},\ \check r_{n n+1} \Big ] =\Big [{\cal T}_{x,y}^{(i)},\ {\mathrm b}_1 \Big ] = 0, 
~~~  n\in\{1,\, \ldots, N-1\},\ ~~~x,\ y \in X.
\end{equation}
\end{pro}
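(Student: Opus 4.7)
My plan is to compute ${\cal T}^{(0)}$ and ${\cal T}^{(1)}$ explicitly using the expansions of $T_0^{(m)}$ and $\hat T_0^{(m)}$ derived in the proof of Proposition \ref{Traces}, and then verify the commutations one by one using Lemma \ref{lemma1}, the braid relation, and the $B$-type Hecke relation (\ref{b1}). Since $K(\lambda) = \lambda\hat{\mathrm b} + I$, the leading term is ${\cal T}^{(0)} = T^{(0)}\hat{\mathrm b}_0\hat T^{(0)}$. Substituting $T^{(0)} = \check r_{N0}{\mathfrak T}^{(0)}{\cal P}_{01}\Pi$ and $\hat T^{(0)} = \hat\Pi{\cal P}_{01}\hat{\mathfrak T}^{(0)}\check r_{N0}$, using $\Pi\hat\Pi = I$ together with ${\cal P}_{01}\hat{\mathrm b}_0{\cal P}_{01} = \hat{\mathrm b}_1$, collapses this to the compact form
\[ {\cal T}^{(0)} = \check r_{N0}\,{\mathfrak T}^{(0)}\,\hat{\mathrm b}_1\,\hat{\mathfrak T}^{(0)}\,\check r_{N0}. \]

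For $[{\cal T}^{(0)}, \check r_{n\,n+1}] = 0$ with $1 \le n \le N-2$: the operator $\check r_{n\,n+1}$ commutes with $\check r_{N0}$ (disjoint spaces), while Lemma \ref{lemma1} supplies both $\check r_{n\,n+1}{\mathfrak T}^{(0)} = {\mathfrak T}^{(0)}\check r_{n+1\,n+2}$ and $\check r_{n+1\,n+2}\hat{\mathfrak T}^{(0)} = \hat{\mathfrak T}^{(0)}\check r_{n\,n+1}$, with $\check r_{n+1\,n+2}$ commuting with $\hat{\mathrm b}_1$ (different spaces). Pushing $\check r_{n\,n+1}$ from the left through the whole product, the two shifts exactly cancel and we recover $\check r_{n\,n+1}$ on the right. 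For the boundary case $n = N-1$, factorise ${\mathfrak T}^{(0)} = \check r_{N-1\,N}{\mathfrak T}^\flat$ and $\hat{\mathfrak T}^{(0)} = \hat{\mathfrak T}^\flat\check r_{N-1\,N}$ with ${\mathfrak T}^\flat,\hat{\mathfrak T}^\flat$ involving only spaces $1,\ldots,N-1$; two applications of the braid relation $\check r_{N-1\,N}\check r_{N0}\check r_{N-1\,N} = \check r_{N0}\check r_{N-1\,N}\check r_{N0}$, combined with the fact that $\check r_{N0}$ commutes with the middle block ${\mathfrak T}^\flat\hat{\mathrm b}_1\hat{\mathfrak T}^\flat$, yield the commutation.

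For $[{\cal T}^{(0)}, {\mathrm b}_1] = 0$: I factorise ${\mathfrak T}^{(0)} = {\mathfrak T}^\sharp\check r_{12}$ and $\hat{\mathfrak T}^{(0)} = \check r_{12}\hat{\mathfrak T}^\sharp$ with ${\mathfrak T}^\sharp,\hat{\mathfrak T}^\sharp$ involving only spaces $\ge 2$ and hence commuting with ${\mathrm b}_1$ by relation (\ref{b2}). The commutator then reduces to $\check r_{N0}{\mathfrak T}^\sharp\bigl[\check r_{12}\hat{\mathrm b}_1\check r_{12},\,{\mathrm b}_1\bigr]\hat{\mathfrak T}^\sharp\check r_{N0}$, and the inner commutator vanishes identically by the $B$-type Hecke relation (\ref{b1}), after noting that $\hat{\mathrm b}_1 = c\,{\mathrm b}_1$ (valid since $\kappa = Q - Q^{-1} = 0$ when $Q = 1$).

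The same strategy applied to each of the three contributions $T^{(1)}\hat{\mathrm b}_0\hat T^{(0)} + T^{(0)}\hat T^{(0)} + T^{(0)}\hat{\mathrm b}_0\hat T^{(1)}$ of ${\cal T}^{(1)}$ gives $[{\cal T}^{(1)}, \check r_{n\,n+1}] = [{\cal T}^{(1)}, {\mathrm b}_1] = 0$; after the same cancellations with ${\cal P}_{01}\Pi$ and $\hat\Pi{\cal P}_{01}$ each contribution reduces to a product of $\check r_{N0}$'s, ${\mathfrak T}^{(i)},\hat{\mathfrak T}^{(i)}$ for $i\in\{0,1\}$, and $\hat{\mathrm b}_1$, and the same shift mechanism of Lemma \ref{lemma1} applies now for both values $i=0,1$. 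The main obstacles are the bookkeeping for ${\cal T}^{(1)}$, where the mixed terms must be tracked simultaneously so the shifted Hecke generators produced by the two applications of Lemma \ref{lemma1} match before and after the central $\hat{\mathrm b}_1$ (or identity) factor, and the boundary case $n = N-1$, where the braid relation replaces simple disjoint-space commutation with the $\check r_{N0}$ factor.
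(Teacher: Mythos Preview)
Your approach is correct and essentially identical to the paper's. You both arrive at the compact form ${\cal T}^{(0)}=\check r_{N0}\,{\mathfrak T}^{(0)}\,\hat{\mathrm b}_1\,\hat{\mathfrak T}^{(0)}\,\check r_{N0}$, handle the bulk generators $\check r_{n\,n+1}$ ($n\le N-2$) via the shift relations of Lemma~\ref{lemma1}, treat $n=N-1$ by the braid relation applied to the outer $\check r_{N0}$'s, and use the $B$-type quadratic relation (\ref{b1}) together with $\hat{\mathrm b}=c\,{\mathrm b}$ (since $Q=1$) for the commutation with ${\mathrm b}_1$. For ${\cal T}^{(1)}$ the paper makes the bookkeeping you flag explicit: it writes ${\cal T}^{(1)}={\mathfrak a}+{\mathfrak b}+{\mathfrak c}+{\mathfrak d}+I^{\otimes(N+1)}$ (your three contributions regroup exactly into these five pieces) and further splits ${\mathfrak a}={\mathfrak a}_1+{\mathfrak a}_2$, ${\mathfrak b}={\mathfrak b}_1+{\mathfrak b}_2$, then checks that while some pieces commute individually, others swap in pairs (e.g.\ ${\mathfrak a}_2\,{\mathrm b}_1={\mathrm b}_1\,{\mathfrak b}_2$ and ${\mathfrak c}\,\check r_{N-1\,N}=\check r_{N-1\,N}\,\hat{\mathfrak a}_2$), which is precisely the cross-term tracking you anticipate but leave implicit.
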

\begin{proof}
We first write down explicitly the elements ${\cal T}^{(0)}$ and  ${\cal T}^{(1)}$. Recall that
${\cal T}^{(0)}= \check r_{N0} {\mathfrak T}^{(0)} \hat {\mathrm b}_1 \hat {\mathfrak T}^{(0)} \check r_{N0},$
($ \hat {\mathrm b}= c  {\mathrm b}$) 
and from the proof of Proposition \ref{Traces}:\\
${\cal T}^{(1)} = {\mathfrak a} + {\mathfrak b} +{\mathfrak c} + {\mathfrak d} + I^{\otimes (N+1)}$, where 
${\mathfrak a} =\check r_{N0}{\mathfrak T}^{(1)} \hat {\mathrm b}_1\hat {\mathfrak T}^{(0)}\check r_{N0} $, $\ 
{\mathfrak b} =\check r_{N0}{\mathfrak T}^{(0)} \hat {\mathrm b}_1
\hat {\mathfrak T}^{(1)}\check r_{N0} $\\
${\mathfrak c} ={\mathfrak T}^{(0)} \hat {\mathrm b}_1\hat {\mathfrak T}^{(0)}\check r_{N0} $, 
$\ {\mathfrak d} =\check r_{N0}{\mathfrak T}^{(0)} \hat {\mathrm b}_1
\hat {\mathfrak T}^{(0)} $.

Using Lemma \ref{lemma1} and the expressions just above we conclude: $\big [ {\cal T}^{(i)},\ \check r_{n n+1}\big ] =0,\ 
\ n \in \{1, \ldots N-2\}$, $i \in \{0,\ 1\}$. 
Moreover, using the quadratic relation of the $B$-type algebra 
$\check r_{12} {\mathrm b}_1 \check r_{12} {\mathrm b}_1 = {\mathrm b}_1\check r_{12} {\mathrm b}_1 \check r_{12} $ 
and the form of ${\cal T}^{(0)}$ we show that $\big [{\cal T}^{(0)},\ {\mathrm b}_1 \big ]=0$, 
while use of the braid relation and the form of ${\cal T}^{(0)}$ lead to $\big [ {\cal T}^{(0)},\ \check r_{N-1 N}\big ] =0$.

It now remains to show that $\big [ {\cal T}^{(1)},\ \check r_{N-1 N}\big ] =\big [{\cal T}^{(1)},\ {\mathrm b}_1 \big ]=0$, 
the proof of the latter is more involved. Indeed, let us first focus on $\big [{\cal T}^{(1)},\ {\mathrm b}_1 \big ]$, 
it is convenient in this case to express
the first two terms of ${\cal T}^{(1)}$ as ${\mathfrak a} = {\mathfrak a}_1 + {\mathfrak a}_2$ and ${\mathfrak b} = 
{\mathfrak b}_1 + {\mathfrak b}_2$, where
${\mathfrak a}_1 =  \check r_{N0} \sum_{n=2}^{N-1}\big (\check r_{N-1 N} ...\check r_{n+1 n+2} \check r_{n-1 n}... \check r_{23} \big ) \check r_{12} \hat {\mathrm b}_1 
\hat {\mathfrak T}^{(0)} \check r_{N0}$,\\
${\mathfrak a}_2 = \check r_{N0} \check r_{N-1 N}  ... \check r_{23} \hat  {\mathrm b}_1  {\mathfrak T}^{(0)} \check r_{N0}$,\\
${\mathfrak b}_1 = \check r_{N0} {\mathfrak T}^{(0)} \hat {\mathrm b}_1 \check r_{12}\sum_{n=2}^{N-1}\big (\check r_{23} ...\check r_{n-1 n} \check r_{n+1 n+2}
 ... \check r_{N-1 N} \big ) \check r_{N0}$,\\
${\mathfrak b}_2 = \check r_{N0} {\mathfrak T}^{(0)} \hat {\mathrm b}_1 \check r_{23}  ... \check r_{N-1 N} \check r_{N0} $.\\
Using the quadratic relation $\check r_{12} {\mathrm b}_1 \check r_{12} \hat {\mathrm b}_1 = {\mathrm b}_1\check r_{12} {\mathrm b}_1 \check r_{12}$, 
and the fact that ${\mathrm b}^2 =I$ we show that: ${\mathfrak a}_1 {\mathrm b}_1 = {\mathrm b}_1 {\mathfrak a}_1 $, 
$\ {\mathfrak b}_1 {\mathrm b}_1 = {\mathrm b}_1 {\mathfrak b}_1$, $\ {\mathfrak a}_2 {\mathrm b}_1 = {\mathrm b}_1 {\mathfrak b}_2 $ and
$\ {\mathfrak b}_2 {\mathrm b}_1 = {\mathrm b}_1 {\mathfrak a}_2$, $\ {\mathfrak c} {\mathrm b}_1 = {\mathrm b}_1 {\mathfrak c}$, $\ {\mathfrak d} {\mathrm b}_1 
= {\mathrm b}_1 {\mathfrak d}$, which lead to $\big [{\cal T}^{(1)},\ {\mathrm b}_1 \big ]=0$.

We lastly focus on $\big [{\cal T}^{(1)},\ \check r_{N-1 N} \big ]$, it is convenient in this case as  well  to express
the first two terms of ${\cal T}^{(1)}$ as ${\mathfrak a} = \hat {\mathfrak a}_1 +\hat {\mathfrak a}_2$ and ${\mathfrak b} = 
\hat  {\mathfrak b}_1 + \hat  {\mathfrak b}_2$,
where we define\\
$\hat {\mathfrak a}_1 =  \check r_{N0} \check r_{N-1 N}\sum_{n=1}^{N-2}\big (\check r_{N-2 N-1} ...\check r_{n+1 n+2} 
\check r_{n-1 n}... \check r_{12} \big )  \hat  {\mathrm b}_1 
\hat {\mathfrak T}^{(0)} \check r_{N0}$,\\
$\hat{\mathfrak a}_2 = \check r_{N0} \check r_{N-2 N-1}  ... \check r_{12}  \hat {\mathrm b}_1  {\mathfrak T}^{(0)} \check r_{N0}$,\\
$\hat {\mathfrak b}_1 = \check r_{N0} {\mathfrak T}^{(0)} \hat  {\mathrm b}_1 \sum_{n=1}^{N-2}\big (\check r_{12} ...\check r_{n-1 n} \check r_{n+1 n+2}
 ... \check r_{N-2 N-1} \big ) \check r_{N-1 N}\check r_{N0}$,\\
$\hat {\mathfrak b}_2 = \check r_{N0} {\mathfrak T}^{(0)} \hat {\mathrm b}_1 \check r_{12}  ... \check r_{N-2 N-1} \check r_{N0} $.\\
Using the braid relation and the fact that $\check r^2 =I^{\otimes 2}$ we show that: $\hat   {\mathfrak a}_1 \check r_{N-1 N} =  \check r_{N-1 N} 
\hat {\mathfrak a}_1 $, 
$\ \hat  {\mathfrak b}_1\check r_{N-1 N}  = \check r_{N-1 N} \hat  {\mathfrak b}_1$, $\ \hat  {\mathfrak a}_2 \check r_{N-1 N} = \check r_{N-1 N}  
{\mathfrak c}$
and $\ \hat  {\mathfrak b}_2 \check r_{N-1 N}  =\check r_{N-1 N}   {\mathfrak d}$,  $\ {\mathfrak c} \check r_{N-1 N}  = \check r_{N-1 N} \hat  
{\mathfrak a}_2 $, 
$\ {\mathfrak d} \check r_{N-1 N}  = \check r_{N-1 N}  \hat  {\mathfrak b}_2 $,  which lead to $\big [{\cal T}^{(1)},\ \check r_{N-1 N} \big ]=0$.\\
And this concludes our proof.
\end{proof}

\begin{cor} \label{cor3}  Let $R(\lambda) =  \lambda {\mathcal P}\check r +{\mathcal P}$,  and 
$K(\lambda) =  \lambda c{\mathrm b} +I $ ($c$ is an arbitrary constant),
where $\check r$ and ${\mathrm b}$  provide a representation of the $B$-type 
Hecke algebra ${\cal B}_N(q=1, Q=1),$ and ${\mathcal P}$ is the permutation operator. 
Let also ${\mathfrak t}^{(k)}$, $k \in \{1, \ldots, 2N+1\}$ be the mutually commuting 
charges as defined in Proposition \ref{Traces}, and $tr_0(\check r_{N0}) \propto I^{\otimes 2}$, then
\begin{equation}
\Big [{\mathfrak t}^{(k)},\  {\cal T}_{x,y}^{(i)} \Big ] =0, \qquad i \in \{0,\ 1\}.
\end{equation}
\end{cor}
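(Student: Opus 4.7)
The plan is to chain together Propositions \ref{Traces} and \ref{Proposition 9}: the former places each charge $\mathfrak{t}^{(k)}$ inside the subalgebra generated by the $B$-type Hecke generators $\{\check r_{n,n+1}\}_{n=1}^{N-1}$ and $\mathrm{b}_1$, while the latter states that $\mathcal{T}_{x,y}^{(i)}$ centralizes each of those generators. The corollary will then follow by the Leibniz rule for the commutator.

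More concretely, I would first invoke Proposition \ref{Traces} in the present setting. The hypotheses match: $R(\lambda)=\lambda \mathcal{P}\check r+\mathcal{P}$, $K(\lambda)=\lambda c\,\mathrm{b}+I$, $(\check r,\mathrm{b})$ represent $\mathcal{B}_N(q=1,Q=1)$, and the assumption $\operatorname{tr}_0(\check r_{N0})\propto I$ is exactly what Proposition \ref{Traces} needs (guaranteed abstractly by Lemma \ref{trace0}). Therefore, for every $k\in\{1,\ldots,2N+1\}$ there is a noncommutative polynomial $P_k$ such that
\begin{equation*}
\mathfrak{t}^{(k)} \;=\; P_k\bigl(\check r_{12},\,\check r_{23},\,\ldots,\,\check r_{N-1,N},\,\mathrm{b}_1\bigr),
\end{equation*}
viewed as an operator on the quantum space $(\mathbb{C}^{\mathcal N})^{\otimes N}$.

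Next I would apply Proposition \ref{Proposition 9}, which gives, for $i\in\{0,1\}$ and all $x,y\in X$,
\begin{equation*}
\bigl[\mathcal{T}_{x,y}^{(i)},\,\check r_{n,n+1}\bigr]=0 \quad(n=1,\ldots,N-1),\qquad \bigl[\mathcal{T}_{x,y}^{(i)},\,\mathrm{b}_1\bigr]=0.
\end{equation*}
The identity $[A,BC]=[A,B]C+B[A,C]$, applied inductively to each monomial appearing in $P_k$, then promotes these single-generator commutation relations to $\bigl[\mathcal{T}_{x,y}^{(i)},\,P_k(\check r_{12},\ldots,\check r_{N-1,N},\mathrm{b}_1)\bigr]=0$, which is precisely the claimed identity $[\mathfrak{t}^{(k)},\mathcal{T}_{x,y}^{(i)}]=0$.

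There is no substantive obstacle; the result is a clean assembly of the two preceding propositions, and the only point that needs a line of care is the observation that Proposition \ref{Traces} produces a polynomial representative for every $k\ge 1$ (so that the Leibniz argument applies uniformly). The excluded case $k=0$ is already flagged in Proposition \ref{Traces} as not generally expressible through Hecke generators, which is why the corollary restricts to $k\in\{1,\ldots,2N+1\}$; should one wish to include $k=0$, the special case $\mathrm{b}=I$ yields $\mathfrak{t}^{(0)}\propto I^{\otimes N}$ and the commutator vanishes trivially.
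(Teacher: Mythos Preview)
Your proposal is correct and follows exactly the paper's own approach: the paper's proof is simply ``The proof is straightforward, based on Propositions \ref{Traces} and \ref{Proposition 9}.'' You have merely made explicit the Leibniz-rule step that bridges the two propositions.
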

\begin{proof}
The proof is straightforward, based on Propositions \ref{Traces} and \ref{Proposition 9}. 
\end{proof}

\begin{cor} \label{cor4}  Let $R(\lambda) =  \lambda {\mathcal P}\check r +{\mathcal P}$,  and 
$K(\lambda) =  \lambda c{\mathrm b} +I$ ($c$ is an arbitrary constant),
where $\check r$ and ${\mathrm b}$  provide a representation of the $B$-type 
Hecke algebra ${\cal B}_N(q=1, Q=1),$ and ${\mathcal P}$ is the permutation operator. Let also ${\mathfrak t}^{(k)}$, $k \in \{0, \ldots, 2N+1\}$ be the mutually commuting 
charges as defined in Proposition \ref{Traces}, and $tr_0(\check r_{N0}) \propto I^{\otimes 2}$. In the special case ${\mathrm b} = I$:
\begin{equation}
\Big [{\mathfrak t}^{(k)},\  {\cal T}_{x,y}^{(1)} \Big ] =0 \Rightarrow \Big [ {\mathfrak t}(\lambda),\ {\cal T}_{x,y}^{(1)} \Big ]=0.
\end{equation}
\end{cor}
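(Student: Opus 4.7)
The plan is to reduce the claim directly to Corollary~\ref{cor3} by separately handling the single coefficient $\mathfrak{t}^{(0)}$ that was excluded from the range there. Proposition~\ref{Traces} tells us that ${\mathfrak t}(\lambda) = \lambda^{2N+1}\sum_{k=0}^{2N+1} \mathfrak{t}^{(k)}/\lambda^k$ is a polynomial of degree $2N+1$ in $\lambda$, so proving $\big[\mathfrak{t}(\lambda),{\cal T}_{x,y}^{(1)}\big]=0$ is equivalent to establishing $\big[\mathfrak{t}^{(k)},{\cal T}_{x,y}^{(1)}\big]=0$ for every $k\in\{0,1,\ldots,2N+1\}$. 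Corollary~\ref{cor3} already supplies this for $k\in\{1,\ldots,2N+1\}$, so only the endpoint $k=0$ still has to be treated.

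For that endpoint I would invoke the explicit form written at the very end of the proof of Proposition~\ref{Traces},
\[
\mathfrak{t}^{(0)} \;=\; tr_0\!\big(\check r_{N0}\,\check r_{N-1\,N}\cdots\check r_{12}\,\hat{\mathrm b}_1\,\check r_{12}\cdots \check r_{N-1\,N}\,\check r_{N0}\big),
\]
and combine it with the special choice ${\mathrm b}=I$. With $Q=1$ one has $\kappa=Q-Q^{-1}=0$, so $\hat{\mathrm b}=c\,{\mathrm b}=c\,I$ is a scalar and slides out of the trace. The inner pair $\check r_{12}\,\check r_{12}$ then cancels by the involution $\check r_{n\,n+1}^{2}=I^{\otimes 2}$, exposing the next mirrored pair $\check r_{23}\,\check r_{23}$; iterating this cancellation outward collapses the entire string down to $tr_0\!\big(\check r_{N0}\,\check r_{N0}\big)=tr_0(I^{\otimes(N+1)})\propto I^{\otimes N}$. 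Hence $\mathfrak{t}^{(0)}\propto I^{\otimes N}$, which is precisely the parenthetical remark recorded in the proof of Proposition~\ref{Traces}, and therefore $\big[\mathfrak{t}^{(0)},{\cal T}_{x,y}^{(1)}\big]=0$ trivially.

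Combining the two pieces gives $\big[\mathfrak{t}^{(k)},{\cal T}_{x,y}^{(1)}\big]=0$ for every $k$ in the expansion, and linearity in $\lambda$ then yields $\big[\mathfrak{t}(\lambda),{\cal T}_{x,y}^{(1)}\big]=0$. There is no real obstacle here: the corollary is essentially a bookkeeping refinement of Corollary~\ref{cor3}, with the special choice ${\mathrm b}=I$ serving only to reduce the otherwise structurally awkward coefficient $\mathfrak{t}^{(0)}$ to a scalar. If anything, the only point to be cautious about is to make sure the trace identity $tr_0(\check r_{N0})=I$ from Lemma~\ref{trace0} (and its square) is being used consistently together with the involutivity and the braid relation when collapsing the mirrored string.
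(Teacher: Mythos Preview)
Your argument is correct and mirrors the paper's own proof: both reduce to Propositions~\ref{Traces} and~\ref{Proposition 9} (equivalently Corollary~\ref{cor3}) for $k\geq 1$, and then use that ${\mathrm b}=I$ forces ${\mathfrak t}^{(0)}\propto I^{\otimes N}$ (and in fact ${\cal T}^{(0)}\propto I^{\otimes(N+1)}$) to close the case $k=0$. Your telescoping computation of ${\mathfrak t}^{(0)}$ via involutivity is exactly the justification behind the parenthetical remark at the end of Proposition~\ref{Traces}; note that the final step only needs $\check r_{N0}^{2}=I$, not Lemma~\ref{trace0}.
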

\begin{proof}
The proof follows directly from  Propositions \ref{Traces} and \ref{Proposition 9}, 
and the fact that for ${\mathrm b} = I$, ${\cal T}^{(0)} \propto I^{\otimes (N+1)}$ and ${\mathfrak t}^{(0)} \propto I^{\otimes N}$.
\end{proof}

\begin{rem}\label{remarktw}The twisted co-products for the finite algebra generated by the element of ${\cal T}^{(1)}$,
 in the special case ${\mathrm b} =I$ can be expressed as follows, after recalling the notation introduced in the proof of Proposition \ref{Traces}:
\begin{equation}
{\cal T}^{(1)} = 2 c \sum_{n=1}^N \big ( r_{0N} r_{0N-1} \ldots r_{0n+1} \check r_{n0} \hat r_{0n+1} 
\ldots \hat r_{0N-1} \hat r_{0N}\big ), \label{tta1}
\end{equation} 
where 
$r = {\cal P} \check r $, $\hat r = \check  r {\cal P}$  and ${\cal P}$ the permutation operator.
After using expression (\ref{tta1}),  the brace relation and recalling
 that $r ={\cal P} \check r,\ \hat r = \check r {\cal P}$, we have
\begin{equation}
{\cal T}^{(1)}=  2c \sum_{n=1}^N \big ( \check r_{n n+1}\check r_{n+1 n+2} \ldots \check r_{N-1 N} \check r_{N0} \check r_{N-1 N} 
\ldots \check r_{n+1 n+2} \check  r_{n n+1} \big ).   \label{cotw2}
\end{equation}
\end{rem}
Note that explicit expressions of the above co-products for  (\ref{cotw2})  can be computed for the brace solution.
We shall derive in the next subsection the co-products associated to Lyubashenko's solutions recovering the
twisted  co-products of Corollary \ref{prop2}.

An interesting direction to pursue is the derivation of
analogous results in the case of the twisted algebras extending the findings of \cite{CraDoi} on the duality between
twisted Yangians and Brauer algebras \cite{Molev} to include set-theoretic solutions. We aim at examining  whether
the corresponding transfer matrix can be expressed in terms of the elements of the Brauer algebra, and
also check if the elements of the Brauer algebra commute with a finite sub-algebra of the twisted algebra.
These findings will have significant  implications on the symmetries of open transfer matrices providing
valuable information on their spectrum.

 \subsection{More examples of  symmetries}

In this subsection we present examples of symmetries of the double row transfer matrix partly inspired 
by the symmetries in \cite{DoiSmo}, but also some new ones.
 Let $(X, \check r)$ be a set-theoretic solution, as usually we denote $\check r(x,y)=(\sigma _{x}(y), \tau _{y}(x))$. 
In all the examples below we assume that the solution $(X, {\check r})$ is involutive, non-degenerate and finite. 
Also, we always assume that $\hat K=I$ in (\ref{transfer}).

The following class of symmetries is similar to those  of Proposition 4.6 in \cite{DoiSmo}.

\begin{lemma}  
Let $(X, \check r)$ be a set-theoretic solution of the braid equation and let $f : X \rightarrow X$ be 
an isomorphism of solutions, so $f(\sigma _{x}(y)) = \sigma _{f(x)}(f(y))$ and
 $f(\tau_{x}(y)) = \tau _{f(x)}(f(y))$. Denote $M=\sum_{x\in X}e_{x,f(x)}$, and
let ${\mathfrak t}(\lambda )$ be  the double row transfer matrix for $R(\lambda ) = 
{\cal P} + \lambda {\cal P}{\check r }$ and $K(\lambda)= \lambda c {\mathrm b}+I$, 
where $c$ is an arbitrary constant and ${\mathrm b} = \sum_{x\in X} e_{x, k(x)}$.
Then, given that $f(k(x))= k(f(x))$: \[\Big [M^{\otimes N},\  {\mathfrak t}(\lambda )\Big ]= 0.\]
\end{lemma}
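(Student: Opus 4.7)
The plan is to reduce the desired commutation $[M^{\otimes N}, \mathfrak t(\lambda)] = 0$ to the commutation of $M^{\otimes(N+1)}$ with the modified monodromy ${\cal T}_0(\lambda) = T_0(\lambda) K_0(\lambda) \hat T_0(\lambda)$, and then to pass to the auxiliary trace via cyclicity. The required inputs are a small collection of elementary intertwining identities for the building blocks $\check r$, $\mathcal P$ and $\mathrm b$.

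First I would verify three intertwinings: $(M \otimes M)\check r = \check r (M \otimes M)$, $(M \otimes M)\mathcal P = \mathcal P (M \otimes M)$, and $M\mathrm b = \mathrm b M$. The middle identity is automatic. The first is a direct matrix calculation in which, using $\check r = \sum_{x,y} e_{x,\sigma_x(y)} \otimes e_{y,\tau_y(x)}$ and $M = \sum_x e_{x,f(x)}$, both sides reduce to $\sum_{x,y} e_{x,f(\sigma_x(y))} \otimes e_{y,f(\tau_y(x))}$; the isomorphism hypothesis $f(\sigma_x(y))=\sigma_{f(x)}(f(y))$ and $f(\tau_y(x))=\tau_{f(y)}(f(x))$ is precisely what matches the indices. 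The third is an analogous one-line computation invoking $f(k(x)) = k(f(x))$. Together these yield $(M \otimes M) R(\lambda) = R(\lambda)(M \otimes M)$ for $R(\lambda) = \mathcal P + \lambda \mathcal P\check r$, and $M K(\lambda) = K(\lambda) M$ for $K(\lambda) = \lambda c \mathrm b + I$.

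Second I would propagate these relations through each factor of $T_0(\lambda) = R_{0N}(\lambda)\cdots R_{01}(\lambda)$ and of $\hat T_0(\lambda) = T_0^{-1}(-\lambda)$. Since $M_i$ and $M_j$ commute on disjoint spaces, each factor $R_{0n}(\lambda)$ commutes with $M^{\otimes(N+1)}$, and similarly for $\hat T_0(\lambda)$. Combined with $M_0 K_0(\lambda) = K_0(\lambda) M_0$ this delivers $M^{\otimes(N+1)} {\cal T}_0(\lambda) = {\cal T}_0(\lambda) M^{\otimes(N+1)}$. Writing $M^{\otimes(N+1)} = M_0 M^{\otimes N}$ and taking the partial trace over the auxiliary space (recalling $\hat K = I$), cyclicity of $\mathrm{tr}_0$ and the fact that $M^{\otimes N}$ does not touch space $0$ yield
\begin{equation*}
M^{\otimes N}\, \mathfrak t(\lambda) \;=\; \mathrm{tr}_0\bigl(M_0^{-1}\, {\cal T}_0(\lambda)\, M_0\bigr)\, M^{\otimes N} \;=\; \mathrm{tr}_0\bigl({\cal T}_0(\lambda)\bigr)\, M^{\otimes N} \;=\; \mathfrak t(\lambda)\, M^{\otimes N},
\end{equation*}
which is the required commutation. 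The only nonroutine step is the intertwining $(M \otimes M)\check r = \check r(M \otimes M)$, but this is an immediate consequence of the isomorphism hypothesis on $f$, so no substantive obstacle is anticipated.
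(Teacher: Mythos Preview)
Your proposal is correct and follows essentially the same route as the paper: establish the intertwinings $(M\otimes M)\check r=\check r(M\otimes M)$ and $M\mathrm b=\mathrm b M$, deduce $[M^{\otimes(N+1)},{\cal T}_0(\lambda)]=0$, and then pass to the trace over the auxiliary space. The only cosmetic difference is in the final step: you use invertibility of $M_0$ (a permutation matrix, since $f$ is a bijection) and cyclicity of $\mathrm{tr}_0$, while the paper extracts the matrix-element identity $M^{\otimes N}{\cal T}_{f(x),f(x)}={\cal T}_{x,x}M^{\otimes N}$ and sums over $x$; these are equivalent.
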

\begin{proof} 
Notice that $M\otimes M$ commutes with $r = {\cal P}{\check r}$ and $\hat r = {\check r} {\cal P}$, 
which leads to 
$ M^{\otimes (N+1)} T(\lambda)= T(\lambda)  M^{\otimes (N+1)} $ and 
$M^{\otimes (N+1)} \hat T(\lambda)= \hat T(\lambda)  M^{\otimes (N+1)}$, 
also due to $f(k(x))= k(f(x))$ we have that ${\mathrm b} M = M {\mathrm b}$. 
These commutation relations then  lead to $\big [{\cal T}(\lambda),\   M^{\otimes( N+1)}\big  ]=0$, and from the latter we obtain,
 following the proof of Proposition 4.6 in \cite{DoiSmo}, $M^{\otimes N} {\cal T}_{f(x), f(x)} = {\cal T}_{x,x} M^{\otimes N}$,
which directly leads to $\big [{\mathfrak t}(\lambda),\   M^{\otimes N}\big  ]=0$.
\end{proof}

The following Lemma also follows from Proposition 4.9 in \cite{DoiSmo}.

\begin{lemma}  Let $ (X, \check r)$ be a finite, non degenerate involutive set-theoretic solution of the braid equation.
 Let $x_{1},\ldots ,x_{\alpha }\in X$ for some $\alpha \in \{1,\ldots, {\cal N}\}$ and assume that 
${\check  r}(x_{i},y) = (y,x_{i})$, $\forall y \in  X$.
 Then, $\forall i,j\in \{1, 2, \ldots , \alpha \}$
 \[\Big [ \Delta ^{(N)}(e_{x_{i},x_{j}}),\  {\mathfrak t}(\lambda )\Big ]= 0,\]
where $ \Delta^{(N)}(e_{x_i,x_j}) =\sum_{n=1}^N I \otimes \ldots \otimes \underbrace{ e_{x_i, x_j}}_{n^{th}\ position}
\otimes \ldots \otimes I,$ ${\mathfrak t}(\lambda )$ is the double row transfer matrix for $R(\lambda ) = {\cal P} + \lambda {\cal P}{\check r }$ and
$K(\lambda)$ such that $\big [ K(\lambda),\  e_{x_i, x_j}\big] =0$.
\end{lemma}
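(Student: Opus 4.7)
The plan is to prove this statement by the standard strategy for symmetries of the double row transfer matrix: first establish the local commutation $[R(\lambda),\Delta^{(2)}(e_{x_i,x_j})]=0$, then propagate through $T_0(\lambda)$, $\hat T_0(\lambda)$ and $K_0(\lambda)$, and finally close the argument by partial trace over the auxiliary space.

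For the local step, the hypothesis $\check r(x_i,y)=(y,x_i)$ reads $\sigma_{x_i}(y)=y$, $\tau_y(x_i)=x_i$ for all $y\in X$; applying involutivity of $\check r$ one also obtains $\sigma_y(x_i)=x_i$ and $\tau_{x_i}(y)=y$, and each of $x_1,\dots,x_\alpha$ satisfies the same identities. Substituting into $r=\sum_{x,y\in X}e_{y,\sigma_x(y)}\otimes e_{x,\tau_y(x)}$, and using that $\sigma_x,\tau_y$ are bijections so that $\sigma_x(y)=x_i$ (resp.\ $\tau_y(x)=x_i$) forces $y=x_i$ (resp.\ $x=x_i$), a direct matrix-element computation gives $[e_{x_i,x_j}\otimes I,r]=[I\otimes e_{x_i,x_j},r]=0$. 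Since $\mathcal P$ is $\mathfrak{gl}_{\mathcal N}$-symmetric, i.e. $[\mathcal P,\Delta^{(2)}(e_{a,b})]=0$ for all $a,b$, we conclude $[R(\lambda),\Delta^{(2)}(e_{x_i,x_j})]=0$.

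To propagate, set $\widetilde\Delta:=(e_{x_i,x_j})_0+\Delta^{(N)}(e_{x_i,x_j})$, the symmetry on the auxiliary plus the $N$ quantum spaces. Each factor $R_{0k}(\lambda)$ of $T_0(\lambda)=R_{0N}(\lambda)\cdots R_{01}(\lambda)$ commutes with $(e_{x_i,x_j})_0+(e_{x_i,x_j})_k$ by the local step and with $(e_{x_i,x_j})_l$ for $l\notin\{0,k\}$ trivially, hence with $\widetilde\Delta$; thus so does $T_0(\lambda)$, and by inversion so does $\hat T_0(\lambda)=T_0^{-1}(-\lambda)$. The hypothesis $[K(\lambda),e_{x_i,x_j}]=0$ gives $[K_0(\lambda),\widetilde\Delta]=0$ since $K_0$ acts as the identity on the quantum spaces. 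Combining, $[\mathcal T_0(\lambda),\widetilde\Delta]=0$. Taking $\mathrm{tr}_0$ (recall $\hat K=I$ in (\ref{transfer})) and using cyclicity of the trace to cancel the contribution $\mathrm{tr}_0(\mathcal T_0(e_{x_i,x_j})_0)=\mathrm{tr}_0((e_{x_i,x_j})_0\mathcal T_0)$ coming from the auxiliary index, one arrives at $[\mathfrak t(\lambda),\Delta^{(N)}(e_{x_i,x_j})]=0$. The main obstacle is the first step, but it reduces to a short bookkeeping exercise on the $\sigma,\tau$ maps; the remaining propagation and partial-trace arguments follow the standard pattern of Propositions~4.6 and~4.9 in \cite{DoiSmo}.
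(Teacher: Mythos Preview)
Your proof is correct and follows essentially the same route as the paper: establish $[\Delta^{(N+1)}(e_{x_i,x_j}),\mathcal T_0(\lambda)]=0$ from the local commutation with $r$, $\hat r$ and $K$, then pass to the transfer matrix. The only minor difference is the final step: the paper extracts the diagonal entries $[\Delta^{(N)}(e_{x_i,x_j}),\mathcal T_{z,z}]$ explicitly and observes that their sum over $z$ telescopes to zero, whereas you invoke cyclicity of the partial trace over the auxiliary space to kill the $(e_{x_i,x_j})_0$ contribution directly --- a slightly cleaner but equivalent closure.
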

\begin{proof} 
The co-product $\Delta(e_{x_i, x_j})$ commutes with both $r={\cal P} \check r$ and  $\hat r = \check r{\cal P}$,
then  as in the proof of Proposition 4.9 in \cite{DoiSmo} it can be shown that
$\big [\Delta ^{(N+1)}(e_{x_{i},x_{j}}),\ T(\lambda)\big ]  = \big [\Delta ^{(N+1)}(e_{x_{i},x_{j}}),\ \hat T(\lambda)\big ]  = 0$,
recall also that $\big [K(\lambda),\  e_{x_i, x_j}\big] =0$. The three commutation relations then immediately  lead to 
$\big [\Delta ^{(N+1)}(e_{x_{i},x_{j}}),\ {\cal T}(\lambda)\big ]   =0$. Then following  the proof of  Proposition 4.9 in \cite{DoiSmo}, 
we focus on the diagonal entries of the latter commutator:
$\big[ \Delta^{(N)}(e_{x_i,x_j}),\  {\cal T}_{x_i,x_i}(\lambda)\big ]  = - {\cal T}_{x_j,x_i}(\lambda) + \delta_{ij} {\cal T}_{x_j,x_i}(\lambda)$,
 $\big[ \Delta^{(N)}(e_{x_i,x_j}), \  {\cal T}_{x_j,x_j}(\lambda)\big ]  =  {\cal T}_{x_j,x_i}(\lambda) -  \delta_{ij} {\cal T}_{x_j,x_i}(\lambda)$ and
$\big[ \Delta^{(N)}(e_{x_i, x_j}),\   {\cal T}_{z,z}(\lambda)\big ]   =0, \ z \neq x_i,\ x_j$, and 
we conclude that $\big [ \Delta ^{(N)}(e_{x_{i},x_{j}}),\  {\mathfrak t}(\lambda )\big ]= 0$.
\end{proof}

The following Lemma is similar to Proposition $4.11$ in \cite{DoiSmo},  but here for the double row transfer matrix, 
we obtain a stronger result:

\begin{lemma}  Let $ (X, \check r)$ be a finite, non degenerate involutive set-theoretic solution of the braid equation.
 Let $x_{1},\ldots ,x_{\alpha }\in X$ for some $\alpha \in \{1,\ldots ,{\cal N}\}$ and assume that 
${\check  r}(x_{i},x_{i}) = (x_{i},x_{i})$.
 Then, $\forall i,j\in \{1, 2, \ldots , \alpha \}$
 \[\Big [ e_{x_{i},x_{j}}^{\otimes N},\ {\mathfrak t}(\lambda )\Big ]= 0\]
where ${\mathfrak t}(\lambda )$ is the double row transfer matrix for $R(\lambda ) =  {\cal P} + \lambda  {\cal P}{\check r }$  and $K(\lambda)\propto I$.
\end{lemma}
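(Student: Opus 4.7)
The plan is to reduce the global assertion $[e_{x_i,x_j}^{\otimes N},\mathfrak{t}(\lambda)]=0$ to a purely local commutation on two adjacent sites, and then to invoke Proposition \ref{Traces} to write $\mathfrak{t}(\lambda)$ as a scalar polynomial in the $A$-type Hecke generators $\check r_{n,n+1}$, for which the commutation will be manifest.

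First I would establish the local identity $[E\otimes E,\check r]=0$, where $E:=e_{x_i,x_j}$, by direct computation with the matrix product rule $e_{a,b}e_{c,d}=\delta_{bc}e_{a,d}$. In the defining expression $\check r=\sum_{x,y}e_{x,\sigma_x(y)}\otimes e_{y,\tau_y(x)}$, only the term $x=y=x_j$ survives in $(E\otimes E)\check r$, and the fixed-point property $\sigma_{x_j}(x_j)=\tau_{x_j}(x_j)=x_j$ collapses that term to $E\otimes E$. Dually, $\check r(E\otimes E)$ picks out the pair $(x,y)$ satisfying $\check r(x,y)=(x_i,x_i)$; since $\check r^2=I^{\otimes 2}$ and $\check r(x_i,x_i)=(x_i,x_i)$, this $(x,y)$ is forced to equal $(x_i,x_i)$, yielding $E\otimes E$ again. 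The two products coincide, so $[E\otimes E,\check r]=0$, and tensoring with identities on the remaining sites gives $[E^{\otimes N},\check r_{n,n+1}]=0$ for every $n\in\{1,\ldots,N-1\}$.

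Since $K(\lambda)\propto I$ I would then take ${\mathrm b}\propto I$ as the representation of $G_0$ in the Baxterised $K$-matrix (\ref{baxter}), so that $\hat{\mathrm b}_1$ acts as a scalar on the quantum space. Applying Proposition \ref{Traces} (together with Lemma \ref{trace0}, which supplies $tr_0(\check r_{N0})=I$), every coefficient $\mathfrak{t}^{(k)}$ with $k\geq 1$ is a scalar polynomial in the $\check r_{n,n+1}$'s, while the exceptional term $\mathfrak{t}^{(0)}$ collapses via $\check r^2=I^{\otimes 2}$ to a scalar multiple of $I^{\otimes N}$. All such operators commute with $E^{\otimes N}$ by the local identity, which concludes the argument.

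The hard part will be the local identity $[E\otimes E,\check r]=0$: it crucially exploits the fixed-point hypothesis at both indices $x_i$ and $x_j$ (one for each of the two products, with involutivity of $\check r$ providing the uniqueness on the second). Once this is in hand, the rest is structural and reads off directly from Proposition \ref{Traces}. It is precisely this local collapse which is unavailable in the periodic case, where the boundary term $\check r_{N1}$ disrupts the analogous reduction, and which is why the conclusion here is stronger than in Proposition 4.11 of \cite{DoiSmo}.
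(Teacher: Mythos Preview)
Your proposal is correct and follows essentially the same strategy as the paper's own proof: establish $[e_{x_i,x_j}^{\otimes N},\check r_{n,n+1}]=0$ for all $n$, then invoke Proposition~\ref{Traces} together with the observation that for ${\mathrm b}=I$ one has ${\mathfrak t}^{(0)}\propto I^{\otimes N}$. The only difference is cosmetic: the paper delegates the local commutation $[E\otimes E,\check r]=0$ to a citation (Proposition~4.11 of \cite{DoiSmo}), whereas you supply the direct computation using the fixed-point hypothesis and involutivity---and you make explicit the appeal to Lemma~\ref{trace0} for the hypothesis $tr_0(\check r_{N0})=I$ of Proposition~\ref{Traces}, which the paper leaves implicit.
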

\begin{proof} Similarly as in the proof of Proposition 4.11 from \cite{DoiSmo} it can be shown that $e_{x_{i},x_{j}}^{\otimes N}$ commutes with 
$\check r_{n n+1}$,  $\forall n\in \{ 1, \ldots, N-1\}$. The result now immediately follows from Proposition \ref{Traces} 
and from the fact that for ${\mathrm  b} = I$, we have 
$\mathcal  {T}^{(0)} = I^{\otimes (N+1)}$ and ${\mathfrak t}^{(0)} = I^{\otimes N}$. 
\end{proof}

We also present the following new examples of symmetries, different to  the ones derived in \cite{DoiSmo}.
Let us first introduce some invariant subsets of a set-theoretic solution.
Let $(X, {\check r})$ be an involutive, non-degenerate set-theoretic solution.

\begin{defn}
 Let $(X,{\check r})$ be a finite  set-theoretic solution of the braid equation and let $Y\subseteq X$. 
Denote ${\check r}(x,y)=(\sigma _{x}(y), \tau _{y}(x))$. 
We say that  $Y$ is a $\sigma$-equivariant set  if whenever $x,y\in Y$ then $\sigma _{x}(y)$ and $\tau _{y}(x)\in Y$.
\end{defn}

\begin{pro}\label{A1} Let $(X,{\check r})$ be an involutive 
non-degenerate solution of the braid equation.   Let $Y, Z\subseteq X$ be $\sigma$-equivariant sets. 
Define $M_{Y,Z}=\sum_{i\in Y, j\in Z}e_{i,j}$, 
then  \[\Big [M_{Y, Z}^{\otimes N},\ {\mathfrak  t}(\lambda )\Big ]= 0\]
 where ${\mathfrak t}(\lambda )$ is the double row transfer matrix for $K(\lambda) \propto I$ and $R(\lambda)={\cal P}+ \lambda {\cal P} \check r$.
\end{pro}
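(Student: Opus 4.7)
The plan is to reduce the claim to a two-site commutation between $M_{Y,Z}\otimes M_{Y,Z}$ and $\check r$, and then invoke Proposition~\ref{Traces}. The essential input is that, although $M_{Y,Z}$ is rank-deficient in general, the two-site operator $M_{Y,Z}\otimes M_{Y,Z}$ still commutes with the brace $\check r$-matrix whenever $Y$ and $Z$ are both $\sigma$-equivariant; this propagates to a commutation with every $\check r_{n,n+1}$ along the chain, and the hypothesis $K(\lambda)\propto I$ then makes Proposition~\ref{Traces} applicable in its ``$\mathrm b = I$'' special case.

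First I will establish
\[
(M_{Y,Z}\otimes M_{Y,Z})\,\check r \;=\; \check r\,(M_{Y,Z}\otimes M_{Y,Z}).
\]
Starting from $\check r=\sum_{x,y\in X}e_{x,\sigma_x(y)}\otimes e_{y,\tau_y(x)}$ and using $e_{a,b}e_{c,d}=\delta_{b,c}e_{a,d}$, the left hand side simplifies because the products $e_{i_1,j_1}e_{x,\sigma_x(y)}$ and $e_{i_2,j_2}e_{y,\tau_y(x)}$ enforce $j_1=x$, $j_2=y$, restricting the sum to $x,y\in Z$. Since $Z$ is $\sigma$-equivariant and $\check r$ is involutive and non-degenerate, the map $(x,y)\mapsto(\sigma_x(y),\tau_y(x))$ is a bijection $Z\times Z\to Z\times Z$, and after a change of variables the sum collapses to $M_{Y,Z}\otimes M_{Y,Z}$. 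The right hand side is handled dually: the surviving summation runs over $\{(x,y):\sigma_x(y),\tau_y(x)\in Y\}$, which by involutivity of $\check r$ equals $\check r(Y\times Y)$, and by $\sigma$-equivariance of $Y$ together with bijectivity this equals $Y\times Y$; the sum again collapses to $M_{Y,Z}\otimes M_{Y,Z}$.

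Next I promote this to the chain: since $M_{Y,Z}^{\otimes N}$ acts on two adjacent sites $n,n+1$ as $M_{Y,Z}\otimes M_{Y,Z}$ and as identity elsewhere, the commutator $[M_{Y,Z}^{\otimes N},\check r_{n,n+1}]$ vanishes for every $n\in\{1,\ldots,N-1\}$. Now specialising Proposition~\ref{Traces} to $\mathrm b=I$, which realises $K(\lambda)\propto I$, every coefficient $\mathfrak t^{(k)}$ with $k\geq 1$ in the $\lambda$-expansion of $\mathfrak t(\lambda)$ is a polynomial in the Hecke generators $\check r_{n,n+1}$ (the contribution $\mathrm b_1=I$ yielding only identities), while $\mathfrak t^{(0)}\propto I^{\otimes N}$. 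Each such polynomial commutes with $M_{Y,Z}^{\otimes N}$ by the previous step and the identity piece commutes trivially, so resumming in $\lambda$ gives $[M_{Y,Z}^{\otimes N},\mathfrak t(\lambda)]=0$.

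The main obstacle is the central two-site identity: the collapse of the summation on each side of $\check r$ hinges on recognising that $\sigma$-equivariance (as defined here, including closure under both $\sigma$ and $\tau$) together with involutivity of $\check r$ forces the restriction sets $\{(x,y): \sigma_x(y),\tau_y(x)\in Z\}$ and $\{(x,y): \sigma_x(y),\tau_y(x)\in Y\}$ to coincide with $Z\times Z$ and $Y\times Y$ respectively. Once this combinatorial identification is in place, the remainder of the proof is immediate from Proposition~\ref{Traces}.
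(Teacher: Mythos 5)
Your proposal is correct and follows essentially the same route as the paper: both proofs reduce the claim to the two-site identity $(M_{Y,Z}\otimes M_{Y,Z})\check r=\check r(M_{Y,Z}\otimes M_{Y,Z})$ (in fact both products collapse to $M_{Y,Z}\otimes M_{Y,Z}$ via the bijectivity of $\check r$ on $Z\times Z$ and $Y\times Y$, which follows from $\sigma$-equivariance, injectivity and finiteness), and then conclude via Proposition \ref{Traces} with ${\mathrm b}=I$, noting that ${\mathfrak t}^{(0)}\propto I^{\otimes N}$ in that case.
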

\begin{proof} By Proposition \ref{Traces} it suffices to show that  $M_{Y,Z}$ commutes with  $\check r_{n n+1}$, $\forall   n\in \{ 1, \ldots, N-1\}$.
Observe first that \[M\otimes M=\sum_{i,j\in Y, k,l\in Z}e_{i,k}\otimes e_{j,l}.\]
Also,  ${\check r}(M\otimes M)=M\otimes M$ and $(M\otimes M){\check r}=M\otimes M$,
\[(M\otimes M){\check r}=\sum_{i,j\in Y, k,l\in Z}e_{i,k}\otimes e_{j,l} \sum_{x,y\in X}e_{x, \sigma _{x}(y)}\otimes e_{y, \tau _{y}(x)}=\]
\[=
\sum_{i,j\in Y, k,l\in Z}e_{i, \sigma _{ k}(l)}\otimes e_{j, \tau _{l}(k)}=M\otimes M,  \]
 because mapping ${\check r }:Y\otimes Y\rightarrow Y\otimes Y $  with
$(k,l)\rightarrow (\sigma _{k}(l), \tau _{l}(k))$ is bijective (as explained in the end of the proof).

To show that ${\check r}(M\otimes M)=M\otimes M$ observe that, because $\check r$ is involutive it follows that 
$\check r=\sum_{x,y\in X} e_{\sigma _{x}(y),x}\otimes e_{\tau _{y}(x), y}$.
Therefore,  \[{\check r}(M\otimes M)=\sum_{x,y\in X} e_{\sigma _{x}(y),x}
\otimes e_{\tau _{y}(x), y}\sum_{i,j\in Y, k,l\in Z}e_{\sigma _{i}(j),k}\otimes e_{\tau _{j}(i),l}=M\otimes M, \]
because $\check r:Z\otimes Z\rightarrow Z\otimes Z$ is a bijective function.

Therefore $M^{\otimes N}$ commutes with $ \check r_{n n+1}$,  $\forall n \in \{1, \ldots, N-1\}$.
The result now follows from Proposition \ref{Traces}.

To show that $\check r$ is a  bijective function on $Y\times Y$ observe that 
 $\check r$ has the zero kernel on $X\otimes X$, so is injective on $Y\otimes Y$. Notice that 
 $\check r (Y\otimes Y)\subseteq Y\otimes Y$ since $Y$ is $\sigma$-equivariant set. Because 
$\check r: Y\otimes Y\rightarrow Y\otimes Y$ is injective then 
 $\check r(Y\otimes Y)$ has the same cardinality as $Y\otimes Y$, hence $\check r: Y\otimes Y\rightarrow Y\otimes Y$
 is surjective and hence bijective. 
\end{proof}

\begin{rem} We we choose $\sigma $-equivariant subsets of $X$ which have pairwise empty intersections 
we get similar algebra of symmetries as in the previous Lemma.
\end{rem}

\begin{defn}
Let $z\in X$. By the {\em orbit } of $z$ we will mean the smallest set $Y\subseteq X$ 
such that $z\in Y$ and $\sigma _{x}(y)\in Y$ and $\tau _{x}(y)\in Y$, for all $y\in Y, x\in X$.
\end{defn}
We have also the following symmetries:

\begin{lemma} Let $(X, \check r)$ be an involutive, non degenerate solution of the braid equation and let $Q_{1}, \ldots , Q_{t}$ 
be orbits of $X$. \\ Define $W_{p_{1}, \ldots , p_{t},q_{1}, \ldots , q_{t}}=\{i_{1}, i_{2}, \ldots , i_{n}, j_{1}, j_{2},  \ldots , j_{n}:$
exactly $p_{i}$ elements among $ i_{1}, i_{2}, \ldots , i_{n}$ belong to the orbit $Q_{i}$ and
 exactly $q_{i}$ elements among $ j_{1}, j_{2}, \ldots , j_{n}$ belong to the orbit $Q_{i}$ for every $i\leq t\}$.

 Fix non-negative integers $p_{1}, \ldots , p_{t}, q_{1}, \ldots , q_{t}$, and define  
\[A_{p_{1}, \ldots , p_{t}, q_{1}, \ldots , q_{t}}=\sum_{i_{1}, i_{2}, \ldots , i_{n}, j_{1}, j_{2},  \ldots , j_{n\in W_{p_{1}, 
\ldots , p_{t}, q_{1}, \ldots , q_{t}} }}e_{i_{1}, j_{1}}\otimes e_{i_{2}, j_{2}}\otimes \cdots \otimes e_{i_{n}, j_{n}}.\]
 Then $A_{p_{1}, \ldots , p_{t}, q_{1}, \ldots , q_{t}}$ commutes with $\check r_{n, n+1}$ and so it  commutes with 
 the double row transfer matrix when $K(\lambda) \propto I$.
\end{lemma}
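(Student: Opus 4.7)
The plan is to prove the even stronger statement that $\check r_{l,l+1}\,A_{p_1,\ldots,p_t,q_1,\ldots,q_t} = A_{p_1,\ldots,p_t,q_1,\ldots,q_t} = A_{p_1,\ldots,p_t,q_1,\ldots,q_t}\,\check r_{l,l+1}$ for every $l\in\{1,\ldots,N-1\}$, from which the commutation is immediate. First, I would compute the local actions explicitly using $\check r=\sum_{x,y\in X}e_{x,\sigma_x(y)}\otimes e_{y,\tau_y(x)}$ together with involution; the product rule $e_{a,b}e_{c,d}=\delta_{b,c}e_{a,d}$ yields
\[
\check r\,\bigl(e_{i_l,j_l}\otimes e_{i_{l+1},j_{l+1}}\bigr) = e_{\sigma_{i_l}(i_{l+1}),\,j_l}\otimes e_{\tau_{i_{l+1}}(i_l),\,j_{l+1}},
\]
\[
\bigl(e_{i_l,j_l}\otimes e_{i_{l+1},j_{l+1}}\bigr)\,\check r = e_{i_l,\,\sigma_{j_l}(j_{l+1})}\otimes e_{i_{l+1},\,\tau_{j_{l+1}}(j_l)}.
\]
Thus $\check r_{l,l+1}A$ alters only the row indices $(i_l,i_{l+1})$ through $\check r$, while $A\,\check r_{l,l+1}$ alters only the column indices $(j_l,j_{l+1})$ through $\check r$.

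Next, I would exploit the orbit structure. By the definition of an orbit, if $y\in Q_k$ then $\sigma_x(y)\in Q_k$ and $\tau_x(y)\in Q_k$ for every $x\in X$. Consequently, for any $a,b$ the map $(x,y)\mapsto\bigl(\sigma_x(y),\tau_y(x)\bigr)$ restricts to a bijection $Q_a\times Q_b\to Q_b\times Q_a$. In particular, the unordered pair of orbit memberships is preserved by $\check r$, so the total count of indices in each orbit $Q_k$ is invariant under the substitution $(i_l,i_{l+1})\mapsto\bigl(\sigma_{i_l}(i_{l+1}),\tau_{i_{l+1}}(i_l)\bigr)$, and analogously for the column tuple.

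Introducing new dummy indices $(i'_l,i'_{l+1}):=\bigl(\sigma_{i_l}(i_{l+1}),\tau_{i_{l+1}}(i_l)\bigr)$ in the expression for $\check r_{l,l+1}A$, the invariance just noted says that the constraint tuple $(p_1,\ldots,p_t,q_1,\ldots,q_t)$ is preserved, so the re-indexing sends the summation set $W_{p_1,\ldots,p_t,q_1,\ldots,q_t}$ onto itself; renaming $i'\to i$ gives $\check r_{l,l+1}A=A$. The same argument applied to the column indices, using the bijection on the $(j_l,j_{l+1})$ part, gives $A\,\check r_{l,l+1}=A$; hence $[\check r_{l,l+1},A]=0$ for every $l$.

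Finally, to transfer this to the double row transfer matrix, I would invoke Proposition \ref{Traces}: when $K(\lambda)\propto I$ we are effectively in the case ${\mathrm b}=I$, so $\hat{\mathrm b}_1\propto I^{\otimes(N+1)}$ and every coefficient ${\mathfrak t}^{(k)}$ with $k\geq 1$ becomes a polynomial in the $\check r_{n,n+1}$ (with the ``spurious'' term ${\mathfrak t}^{(0)}$ proportional to $I^{\otimes N}$, as observed at the end of the proof of Proposition \ref{Traces}). Therefore $A$ commutes with each ${\mathfrak t}^{(k)}$ and hence with ${\mathfrak t}(\lambda)$. The main subtlety of the argument is the careful bookkeeping of the relabeling $W\to W$; once one isolates the key combinatorial fact that orbit-preservation forces $\check r$ to exchange $Q_a\times Q_b$ with $Q_b\times Q_a$, the conservation of $(p_1,\ldots,p_t)$ and $(q_1,\ldots,q_t)$ under the required substitutions is essentially automatic and the rest reduces to straightforward re-indexing.
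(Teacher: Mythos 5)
Your proof is correct and follows essentially the same route as the paper's (very terse) argument: the key fact in both is that, by the orbit definition, $\check r$ restricts to a bijection sending $Q_a\times Q_b$ to $Q_b\times Q_a$, so the orbit-membership counts $(p_1,\ldots,p_t,q_1,\ldots,q_t)$ are preserved and the relabelling fixes $W$, after which Proposition \ref{Traces} with ${\mathrm b}=I$ transfers the commutation to ${\mathfrak t}(\lambda)$. You in fact establish the slightly stronger identity $\check r_{l,l+1}A=A=A\,\check r_{l,l+1}$ and spell out the bookkeeping that the paper leaves implicit, but the underlying idea is identical.
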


\begin{proof} It follows from the fact that if $(X,\check r)$ is an involutive, non-degenerate set-theoretic solution of the Braid equation then 
 $r(Q_{i}, Q_{j})\subseteq (Q_{i}, Q_{j})$ and it is a bijective map, for every $i, j\leq t$.
\end{proof}

\subsection{Symmetries associated to  Lyubashenko's solution.}
We focus in this subsection on the symmetries of the open transfer matrix constructed 
from  the Lyubashenko solution of  Proposition \ref{prop1}. 

\begin{cor} \label{Proposition 10} Let  ${\mathfrak t}(\lambda)$  be the double row transfer matrix for 
$R(\lambda) =\lambda {\cal P} \check r + {\cal P},$ where $\check r$ is Lyubashenko's  solution of  Proposition \ref{prop1},
and $K(\lambda) = \lambda c {\mathrm b}+ I,$ $c$ is an arbitrary constant and ${\mathrm b}$ satisfies: ${\mathrm b}^2 =I$ 
and ${\mathrm b}_1 \check r_{12} {\mathrm b}_1 \check r_{12} = \check r_{12} {\mathrm b}_1  \check r_{12}{\mathrm b}_1  $. Then,
\begin{equation}
\Big [{\mathfrak t}(\lambda),\  {\cal T}_{x,y}^{(1)}\Big ] = 0, ~~~x, y \in X.\label{symm3}
\end{equation}
\end{cor}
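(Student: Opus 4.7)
The plan is to establish $[{\mathfrak t}(\lambda),\ {\cal T}^{(1)}_{x,y}] = 0$ coefficient by coefficient in the polynomial expansion ${\mathfrak t}(\lambda) = \lambda^{2N+1} \sum_{k=0}^{2N+1} \lambda^{-k} {\mathfrak t}^{(k)}$, splitting into the ranges $k \geq 1$ and $k = 0$. The assumed identities ${\mathrm b}^2 = I$ and ${\mathrm b}_1 \check r_{12} {\mathrm b}_1 \check r_{12} = \check r_{12} {\mathrm b}_1 \check r_{12} {\mathrm b}_1$ make $(\check r, {\mathrm b})$ a representation of the $B$-type Hecke algebra ${\cal B}_N(q=1, Q=1)$, and $K(\lambda) = \lambda c\,{\mathrm b} + I$ takes the Baxterised form of Remark \ref{rem5} with $\kappa = 0$. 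With Lyubashenko's form $\check r_{12} = {\mathbb V}_1 {\cal P}_{12} {\mathbb V}_1^{-1}$ the mixed braid relation even reduces identically to $\tilde{\mathrm b}_2 {\mathrm b}_1 = {\mathrm b}_1 \tilde{\mathrm b}_2$ (with $\tilde{\mathrm b} = {\mathbb V}^{-1}{\mathrm b}{\mathbb V}$), which is automatic since the two factors live on disjoint spaces. Because $\check r$ is a finite, involutive, non-degenerate set-theoretic solution, Lemma \ref{trace0} provides $tr_0(\check r_{N0}) = I$, so Corollary \ref{cor3} applies and immediately yields $[{\mathfrak t}^{(k)},\ {\cal T}^{(1)}_{x,y}] = 0$ for every $k \in \{1, \ldots, 2N+1\}$.

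It then remains to treat the exceptional coefficient ${\mathfrak t}^{(0)}$, which in the general setting of Proposition \ref{Traces} cannot be re-expressed through the $B$-type Hecke generators. For this I would invoke the abstract reflection-algebra identity (\ref{rela1}), valid in any realisation and hence in the tensor realisation, which reads $[\check r_{12} {\cal T}^{(0)}_1 \check r_{12},\ {\cal T}^{(m)}_1] = 0$. Substituting $\check r_{12} = {\mathbb V}_1 {\cal P}_{12} {\mathbb V}_1^{-1}$ and using that ${\mathbb V}_1$ commutes with any operator acting solely in auxiliary space $2$ or in the quantum space, a short computation gives
\begin{equation*}
\check r_{12}\, {\cal T}^{(0)}_1\, \check r_{12} \;=\; {\mathbb V}_1 {\cal P}_{12}\, \tilde{\cal T}^{(0)}_1\, {\cal P}_{12} {\mathbb V}_1^{-1} \;=\; {\mathbb V}_1\, \tilde{\cal T}^{(0)}_2\, {\mathbb V}_1^{-1} \;=\; \tilde{\cal T}^{(0)}_2,
\end{equation*}
where $\tilde{\cal T}^{(0)} = ({\mathbb V}^{-1}\otimes \mbox{id})\,{\cal T}^{(0)}\,({\mathbb V}\otimes \mbox{id})$ is the twisted auxiliary-space transform of ${\cal T}^{(0)}$. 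Consequently $[\tilde{\cal T}^{(0)}_2,\ {\cal T}^{(m)}_1] = 0$.

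The final step is a partial-trace reduction. Since ${\cal T}^{(m)}_1$ does not act in auxiliary space $2$, taking $tr_2$ of the vanishing commutator yields $[tr_2(\tilde{\cal T}^{(0)}_2),\ {\cal T}^{(m)}_1] = 0$, and cyclicity of the trace over auxiliary space $2$ gives $tr_2(\tilde{\cal T}^{(0)}_2) = tr_2({\cal T}^{(0)}_2) = {\mathfrak t}^{(0)}$. Specialising to $m = 1$ produces the matrix commutation $[{\mathfrak t}^{(0)},\ {\cal T}^{(1)}_{x,y}] = 0$, completing the list of commuting coefficients; assembling all $k$ then gives $[{\mathfrak t}(\lambda),\ {\cal T}^{(1)}_{x,y}] = 0$, as claimed. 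The main obstacle is precisely the ${\mathfrak t}^{(0)}$ term: the generic Hecke-algebra reduction of Proposition \ref{Traces} leaves it as a residual contribution that the $B$-type Hecke machinery cannot reach, and the specific Lyubashenko twist structure $\check r_{12} = {\mathbb V}_1 {\cal P}_{12} {\mathbb V}_1^{-1}$ is exactly what allows (\ref{rela1}) to decouple into a two-auxiliary-space identity amenable to the partial trace.
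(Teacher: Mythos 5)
Your proof is correct, and the first half — disposing of the coefficients ${\mathfrak t}^{(k)}$, $k\in\{1,\ldots,2N+1\}$ via Lemma \ref{trace0} and Corollary \ref{cor3} — is exactly what the paper does. Where you diverge is the treatment of the residual term ${\mathfrak t}^{(0)}$. The paper handles it by brute force: it writes ${\cal T}^{(0)} = r_{0N}\cdots r_{01}\, c\,{\mathrm b}_0\, \hat r_{01}\cdots \hat r_{0N}$ and substitutes the Lyubashenko factorisation $r_{0n}={\mathbb V}_0^{-1}{\mathbb V}_n$, which collapses the whole product to ${\cal T}^{(0)}={\mathbb V}_0^{-N}c\,{\mathrm b}_0{\mathbb V}_0^{N}$ — a matrix acting purely in the auxiliary space — so that ${\mathfrak t}^{(0)}=tr_0(c\,{\mathrm b}_0)$ is a scalar multiple of the identity and commutes with everything trivially. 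You instead invoke the abstract reflection-algebra relation (\ref{rela1}) in the tensor realisation, use the twist form $\check r_{12}={\mathbb V}_1{\cal P}_{12}{\mathbb V}_1^{-1}$ to turn $\check r_{12}{\cal T}^{(0)}_1\check r_{12}$ into $\tilde{\cal T}^{(0)}_2$, and then take a partial trace using cyclicity — essentially reproducing, in the tensor representation, the argument that gives (\ref{commut}) in Corollary \ref{Proposition 7}. Both routes are sound; all the individual steps of yours (the conjugation identity, the commutation of ${\mathbb V}_1$ with operators not supported on auxiliary space $1$, and the cyclicity of $tr_2$ for operators acting purely on space $2$) check out. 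What the paper's computation buys is the stronger and more explicit fact that ${\mathfrak t}^{(0)}\propto I^{\otimes N}$; what yours buys is generality — it establishes $[{\mathfrak t}^{(0)},{\cal T}^{(m)}_{x,y}]=0$ for all $m$ without ever computing ${\cal T}^{(0)}$, relying only on the quadratic algebra and the twist structure of $\check r$, and it would survive modifications of $K$ for which the explicit collapse of ${\cal T}^{(0)}$ is less clean. Your side observation that the mixed braid relation for ${\mathrm b}$ degenerates to the tautology ${\mathrm b}_1\tilde{\mathrm b}_2=\tilde{\mathrm b}_2{\mathrm b}_1$ in the Lyubashenko case is also correct, though not needed for the argument.
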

\begin{proof}
Recall from the notation introduced 
in the proof of Proposition \ref{Traces} that ${\cal T}^{(0)} = r_{0N} ... r_{01} c {\mathrm b}_0 \hat r_{01} ... 
\hat r_{0N}$, $\ \hat r = {\cal P} r {\cal P}$. 
Then using  the fact that in the special case of Lyubashenko's solutions, $r_{0n} = {\mathbb V}_0^{-1}{\mathbb V}_n $,
we can explicitly write ${\cal T}^{(0)}=  {\mathbb V}_0^{-N} c {\mathrm b}_0 {\mathbb V}_0^{N}$, which is a $c$-number matrix and 
${\mathfrak t}^{(0)} =   tr_0 (c {\mathrm b}_0)$,
which immediately leads to $\big [ {\mathfrak t}^{(0)},\ {\cal T}^{(1)}_{x,y}\big ] =0$, and via Propositions \ref{Traces} and \ref{Proposition 9} 
we arrive at (\ref{symm3}).
\end{proof}


\begin{cor} \label{Proposition 11} 
Let  $\check r$ be Lyubashenko's  
solution of  Proposition \ref{prop1}, $R(\lambda) =\lambda {\cal P} \check r + {\cal P},$  and $K(\lambda)\propto I$. 
Then the elements ${\cal T}_{x,y}^{(1)}$  of  (\ref{tta1}) are twisted co-products of $\mathfrak{gl}_{\cal N}$, 
and hence the corresponding double row 
transfer matrix ${\mathfrak t}(\lambda)$  is $\mathfrak{gl}_{\cal N} $ symmetric.
\end{cor}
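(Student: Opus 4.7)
The plan is to start from the explicit expression (\ref{tta1}) for ${\cal T}^{(1)}$ obtained in Remark \ref{remarktw} and to specialise it to Lyubashenko's solution, so that the generators ${\cal T}^{(1)}_{x,y}$ can be read off as twisted $N$-coproducts of $\mathfrak{gl}_{\cal N}$ generators in the sense of Corollary \ref{prop2}. By Proposition \ref{prop1} we have, for this class of solutions, $r_{0m} = {\mathbb V}_0^{-1} {\mathbb V}_m$, $\hat r_{0m} = {\mathbb V}_0\, {\mathbb V}_m^{-1}$ and $\check r_{n0} = {\mathbb V}_n\, {\cal P}_{n0}\, {\mathbb V}_n^{-1}$, where ${\mathbb V} = \sum_{x\in X} e_{x,\tau(x)}$.

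Substituting these into (\ref{tta1}), using that ${\mathbb V}_m$ and ${\mathbb V}_{m'}$ commute for $m\neq m'$ and that ${\cal P}_{n0} A_0 = A_n {\cal P}_{n0}$ for any single-space operator $A$, the ${\mathbb V}_m^{\pm 1}$-factors at sites $m\neq n$ telescope and the $n$-th summand of (\ref{tta1}) collapses to $T_n = {\mathbb V}_n^{N-n+1}\, {\mathbb V}_0^{-(N-n+1)}\, {\cal P}_{n0}$. Expanding ${\cal P}_{n0} = \sum_{a,b} e_{a,b}|_n \otimes e_{b,a}|_0$ and applying the elementary identities ${\mathbb V}^k e_{a,b} = e_{\sigma^k(a),b}$ and ${\mathbb V}^{-k} e_{b,a} = e_{\tau^k(b),a}$, followed by the relabelling $a = y$, $b = \sigma^{N-n+1}(x)$, yields
\[
{\cal T}^{(1)}_{x,y} = 2c \sum_{n=1}^{N} I\otimes\cdots\otimes \underbrace{e_{\sigma^{N-n+1}(y),\,\sigma^{N-n+1}(x)}}_{n\text{-th position}}\otimes\cdots\otimes I = 2c\, \Delta_1^{(N)}\bigl(e_{\sigma(y),\sigma(x)}\bigr),
\]
with $\Delta_1^{(N)}$ as in (\ref{delta1}). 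By Lemma \ref{extra2} and Corollary \ref{prop2}, the elements $\Delta_1^{(N)}(e_{p,q})$ satisfy the defining relations of $\mathfrak{gl}_{\cal N}$; hence so do the generators ${\cal T}^{(1)}_{x,y}$, which therefore realise a twisted copy of $\mathfrak{gl}_{\cal N}$ acting on the quantum spaces $1,\dots,N$.

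The $\mathfrak{gl}_{\cal N}$-symmetry of the double row transfer matrix then follows from Corollary \ref{cor4}: the choice $K(\lambda)\propto I$ corresponds to the representation ${\mathrm b} = I$ of the $B$-type Hecke generator $G_0$, and a direct check shows that $\mathrm{tr}_0(\check r_{N0}) = I$ for Lyubashenko's solution, which is the remaining hypothesis of Propositions \ref{Traces} and \ref{Proposition 9}. Alternatively one may reproduce the brief argument of Corollary \ref{Proposition 10}, which for ${\mathrm b}=I$ shows that ${\mathfrak t}^{(0)}$ reduces to a scalar, so that $\bigl[{\mathfrak t}(\lambda), {\cal T}^{(1)}_{x,y}\bigr] = 0$. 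The main obstacle I anticipate is purely combinatorial bookkeeping: tracking the powers of ${\mathbb V}$ as they are pulled through successive permutations and keeping the relabelling of summation indices consistent enough to match the resulting expression with the exact form (\ref{delta1}) of $\Delta_1^{(N)}$; once this identification is in place, the symmetry assertion is an immediate consequence of results already established earlier in the paper.
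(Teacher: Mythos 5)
Your proposal is correct and follows essentially the same route as the paper: specialise the expression (\ref{tta1}) for ${\cal T}^{(1)}$ to Lyubashenko's solution using $r_{0m}={\mathbb V}_0^{-1}{\mathbb V}_m$, $\hat r_{0m}={\mathbb V}_0{\mathbb V}_m^{-1}$ and $\check r_{n0}={\mathbb V}_0^{-1}{\cal P}_{0n}{\mathbb V}_0$, collapse each summand to ${\mathbb V}_n^{N-n+1}{\cal P}_{0n}{\mathbb V}_n^{-(N-n+1)}$, read off ${\cal T}^{(1)}_{x,y}=2c\,\Delta_1^{(N)}(e_{\sigma(y),\sigma(x)})$ as in (\ref{delta1}), and conclude via Corollary \ref{cor4}. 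The only difference is that you spell out the telescoping of the ${\mathbb V}$-factors and the $\mathrm{tr}_0(\check r_{N0})=I$ hypothesis more explicitly than the paper does, which is harmless.
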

\begin{proof}
 Recall that in the special case where ${\mathrm b} =I$ the quantity  ${\cal T}^{(1)}$ is given in (\ref{tta1}).
In the case of the special solutions of Proposition \ref{prop1}
recall that $\check r_{n0} =  {\mathbb V}_0^{-1}{\cal P}_{0n} {\mathbb V}_0$, then expression (\ref{tta1}) simplifies to
\begin{equation}
{\cal  T}^{(1)} = 2c\sum_{n=1}^N \big ({\mathbb V}_{n}^{(N-n+1)} {\cal P}_{0n}  {\mathbb V}_{n}^{-(N-n+1)}\big ).  \label{tta2} \nonumber
\end{equation}
Recall also from Proposition \ref{prop1} that ${\mathbb V} = \sum_{x \in X} e_{\sigma(x), x}$ and ${\cal P} = 
\sum_{x, y \in X} e_{x,y} \otimes e_{y,x}$, then ${\cal T}^{(1)}$ can be explicitly expressed as (we set for simplicity $2c=1$)
\begin{equation}
{\cal  T}^{(1)} = \sum_{x, y\in X} e_{x, y} \otimes \big (\sum_{n=1}^N I \otimes \ldots \otimes \underbrace{e_{\sigma^{N-n+1}(y), 
\sigma^{N-n+1}(x)}}_{n^{th} position}\otimes \ldots \otimes I \big). \nonumber
\end{equation}
The latter expression immediately provides the elements ${\cal T}_{x,y}^{(1)} = \Delta_1^{(N)}(e_{\sigma(y), \sigma(x)})$,
where the twisted $N$ co-product $\Delta^{(N)}_1$  of $\mathfrak{gl}_{\cal N}$ is defined in Corollary \ref{prop2}, expression (\ref{delta1}). Then due to Corollary \ref{cor4} we deduce that the corresponding  double row transfer matrix  ${\mathfrak t}(\lambda)$  is 
$\mathfrak{gl}_{\cal N} $ symmetric
And with this we conclude our proof (compare also with the results in Corollary \ref{Proposition 7} for ${\mathbb K}^{(0)} = I$).
\end{proof}

\begin{cor}
Let $\check r$ be Lyubashenko's solution of Proposition \ref{prop1}, and 
$M=\sum_{y \in X}\alpha_{y} M_{y},$ where $M_y = \sum_{x\in X} e_{x, \sigma^y(x)}$ and $\alpha_y \in {\mathbb C}$.  Let also
${\mathfrak t}(\lambda )$ be the double row transfer matrix for $R(\lambda)={\cal P}+ \lambda {\cal P} \check r$ 
and $K= \lambda c {\mathrm b}+I$, 
where $c$ is an arbitrary constant and ${\mathrm b} = \sum_{x\in X} e_{x, k(x)}$, then  \begin{equation}
\Big [ M_y^{\otimes N},\  {\mathfrak t}(\lambda) \Big ]=\Big [ M^{\otimes N},\  {\mathfrak t}(\lambda) \Big ]= 0, \label{lc1}
\end{equation}
provided that $\sigma^y(k(x))= k(\sigma^y(x))$.

Moreover, let $\xi \in \mathbb C$ and $A=\sum_{x\in X} \xi ^{x}e_{x,x}$, then
\begin{equation}
\Big [ A^{\otimes N},\  {\mathfrak t}(\lambda) \Big ]= 0, \label{lc2}
\end{equation}
provided that $\xi^x= \xi^{k(x)}$ and $\xi^{x+y} = \xi^{\sigma(y) + \tau(x)}$.
\end{cor}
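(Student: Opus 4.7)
The plan is to verify, for each operator in the statement, the two standard sufficient conditions for its $N$-fold tensor power to commute with $\mathfrak{t}(\lambda)$: $(X\otimes X)\check r = \check r(X\otimes X)$ (which, together with the trivial commutation of $X\otimes X$ with $\mathcal{P}$, implies commutation with $R(\lambda)$ and with $\hat R(\lambda)$), and $X\mathrm{b} = \mathrm{b} X$ (which implies $[X,K(\lambda)]=0$). Once both hold, the argument used throughout Section~5 yields $[X^{\otimes(N+1)},\mathcal{T}_0(\lambda)]=0$; writing $\mathcal{T}_0 = X^{\otimes(N+1)}\mathcal{T}_0(X^{-1})^{\otimes(N+1)}$ and cycling $X_0^{\pm 1}$ through the trace in the auxiliary space gives $X^{\otimes N}\mathfrak{t}(\lambda)(X^{-1})^{\otimes N}=\mathfrak{t}(\lambda)$, i.e.\ $[X^{\otimes N},\mathfrak{t}(\lambda)]=0$.

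For (\ref{lc1}), since Lyubashenko's $\check r$ has $\sigma_x=\sigma$ and $\tau_y = \sigma^{-1}$, a direct matrix multiplication using $\check r=\sum_{a,b}e_{a,\sigma(b)}\otimes e_{b,\tau(a)}$ and $M_y=\sum_x e_{x,\sigma^y(x)}$ shows that for all $y,z\in X$
\[(M_y\otimes M_z)\,\check r\;=\;\check r\,(M_z\otimes M_y),\]
with both sides equal to $\sum_{x,x'}e_{x,\sigma^{z+1}(x')}\otimes e_{x',\sigma^{y-1}(x)}$. Setting $y=z$ yields $[M_y^{\otimes 2},\check r]=0$. Forming the bilinear sum $\sum_{y,z}\alpha_y\alpha_z$ and relabelling the dummy pair $(y,z)\leftrightarrow(z,y)$ on the right produces $[(M\otimes M),\check r]=0$ for $M=\sum_y\alpha_y M_y$. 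The compatibility $M_y\mathrm{b}=\mathrm{b} M_y$ reduces at once to $\sum_x e_{x,k(\sigma^y(x))}=\sum_x e_{x,\sigma^y(k(x))}$, which is the assumption $\sigma^y(k(x))=k(\sigma^y(x))$, so $[M,\mathrm{b}]=0$ follows by linearity.

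For (\ref{lc2}), the analogous direct computation gives
\[(A\otimes A)\check r=\sum_{x,y}\xi^{x+y}\,e_{x,\sigma(y)}\otimes e_{y,\tau(x)},\qquad \check r(A\otimes A)=\sum_{x,y}\xi^{\sigma(y)+\tau(x)}\,e_{x,\sigma(y)}\otimes e_{y,\tau(x)},\]
so $[(A\otimes A),\check r]=0$ is exactly $\xi^{x+y}=\xi^{\sigma(y)+\tau(x)}$. Likewise $A\mathrm{b}=\sum_x\xi^x e_{x,k(x)}$ while $\mathrm{b}A=\sum_x\xi^{k(x)}e_{x,k(x)}$, so $[A,\mathrm{b}]=0$ amounts to $\xi^x=\xi^{k(x)}$. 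The same conjugation-and-trace argument then concludes $[A^{\otimes N},\mathfrak{t}(\lambda)]=0$.

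The main obstacle is the asymmetry in the key identity $(M_y\otimes M_z)\check r=\check r(M_z\otimes M_y)$: individual summands with $y\neq z$ do not commute with $\check r$, and only after symmetrisation does the full $M\otimes M$ do so. The invertibility needed to convert $[X^{\otimes(N+1)},\mathcal{T}_0]=0$ into $[X^{\otimes N},\mathfrak{t}]=0$ is automatic for each permutation matrix $M_y$, holds generically for the linear combination $M$, and reduces to $\xi^x\neq 0$ for $A$; the remaining work is routine.
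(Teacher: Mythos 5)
Your proof is correct and follows essentially the same route as the paper's: the identical intertwining relation $(M_y\otimes M_z)\,\check r=\check r\,(M_z\otimes M_y)$ (the paper's observation $\check r\,(e_{z,w}\otimes e_{\hat z,\hat w})=(e_{\sigma(\hat z),\sigma(\hat w)}\otimes e_{\tau(z),\tau(w)})\,\check r$ specialised to the $M_y$), the same symmetrisation over the pair $(y,z)$ to handle $M$, the same reduction of the $\mathrm{b}$- and $A$-compatibilities to the stated hypotheses, and the same lift to $\big[X^{\otimes(N+1)},\,\mathcal T(\lambda)\big]=0$. The only divergence is the final passage to $\mathfrak t(\lambda)$: the paper compares matrix elements along the auxiliary space and sums over the diagonal, which needs no invertibility, whereas your conjugation-plus-cyclic-trace step requires $X$ invertible --- harmless for the permutation matrices $M_y$ and for $A$, but for a general $M=\sum_y\alpha_y M_y$ you should close the acknowledged gap with the one-line density argument that each entry of $\big[M^{\otimes N},\,\mathfrak t(\lambda)\big]$ is polynomial in the $\alpha_y$ and vanishes on the Zariski-dense locus where $\det M\neq 0$, hence vanishes identically.
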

\begin{proof}
Observe that 
\begin{equation}
{\check r}\big (e_{z,w}\otimes e_{\hat z, \hat w}\big )=\big (e_{\sigma(\hat z), \sigma(\hat w)}\otimes e_{\tau(z),\tau(w)}\big ){\check r}, \label{obs}
\end{equation}
then
${\check r }(M_{y}\otimes M_{\hat y})=(M_{\hat y}\otimes M_{y}){\check r}$, hence  $\check r$ as well as $r ={\cal P}\check r$
and $\hat r = \check r {\cal P}$ commute with $M_y \otimes M_y$ and $M \otimes M$. Moreover, $M_y,\ M$ commute with $K(\lambda)$ due to 
$\sigma^y(k(x))= k(\sigma^y(x))$. $T,\ \hat T$ and $K$ commute with $M_y^{\otimes (N+1)}$ and $M^{\otimes (N+1)},$ and consequently so does the double row transfer matrix ${\cal T}(\lambda)$.
 From $\big [ M_y^{\otimes (N+1)},\ {\cal T}(\lambda)\big] =0$ we obtain
$e_{x, \sigma^y(x)} \otimes M^{\otimes N}_y {\cal T}_{\sigma^y(x), \sigma^y(x)}=
e_{x, \sigma^y(x)} \otimes  {\cal T}_{x, x}M^{\otimes N}_y $, then\\  $\sum_{x\in X} M^{\otimes N}_y {\cal T}_{\sigma^y(x), \sigma^y(x)} =\sum_{x\in x} {\cal T}_{x, x}M^{\otimes N}_y,$ similarly for $M,$ which lead to (\ref{lc1}).

Notice that in the special case ${\mathrm b}  = I$ (\ref{lc1}) follows also immediately from the fact that 
$M_y^{\otimes N}$ and $M^{\otimes N}$ commute with $\check r_{n n+1}$, $\forall n \in \{1, \ldots, N-1\}$ and Proposition \ref{Traces}.

Similarly, via (\ref{obs}) and the fact that $\xi^{x+y} = \xi^{\sigma(y) + \tau(x)}$ we show that\\ $\big [A \otimes A,\ \check r \big]=0,$ and hence
 $\big [T(\lambda),\ A^{\otimes (N+1)}\big ] = \big [\hat T(\lambda),\ A^{\otimes (N+1)}\big ] =0$. Moreover, due to $\xi^x= \xi^{k(x)}$
 we show that $\big [K(\lambda),\ A \big ] =0$, and consequently $\big [ {\cal T}(\lambda),\ A^{\otimes( N+1)}\big ] =0$. By taking the 
trace over the auxiliary space in the latter commutator we arrive at (\ref{lc2}).
\end{proof}

\subsection*{Acknowledgments}

\noindent  We are thankful to  Robert Weston for useful discussions.
AD acknowledges support from the EPSRC research grant EP/R009465/1 and 
AS acknowledges support from the EPSRC research grant EP/R034826/1.
The authors also acknowledge support from the EPSRC research grant  EP/V008129/1.

\end{document}